\newcommand{\ang}[1]{\langle #1\rangle}
\renewcommand{\ang}[1]{\langle #1\rangle}
\newcommand{\RE}{\mathbb{R}}            
\newcommand{\eps}{\varepsilon}          
\newcommand{\ST}{\,:\,}                 
\newcommand{\bd}{\partial\kern+1pt}     
\newcommand{\etal}{\textit{et al.}}
\newcommand{\SP}{\kern+1pt}             
\newcommand{\PERP}{\kern-2pt \perp \kern-2pt} 
\newcommand{\TT}{\mathcal{T}}           
\DeclareMathOperator{\interior}{int}    
\DeclareMathOperator{\Vor}{Vor}         
\DeclareMathOperator{\DT}{DT}           
\title{Delaunay Triangulations in the Hilbert Metric}
\titlerunning{Delaunay Triangulations in the Hilbert Metric}
\author{Auguste H. Gezalyan}{Department of Computer Science, University of Maryland, College Park, USA \and \url{~}}{octavo@umd.edu}{https://orcid.org/0000-0002-5704-312X}{}
\author{Soo H. Kim}{Wellesley College, Wellesley, MA  \and \url{~}}{sk111@wellesley.edu}{}{}
\author{Carlos Lopez}{Montgomery Blair High School, Silver Spring, MD  \and \url{~}}{pcmr.carlos.lopez@gmail.com}{}{}
\author{Daniel Skora}{Indiana University Boolmington, Bloomington, IN \and \url{~}}{danskora@iu.edu}{}{}
\author{Zofia Stefankovic}{Stony Brook University, Stony Brook, NY \and \url{~}}{zofia.stefankovic@stonybrook.edu}{}{}
\author{David M. Mount}{Department of Computer Science, University of Maryland, College Park, USA \and \url{https://www.cs.umd.edu/~mount/}}{mount@umd.edu}{https://orcid.org/0000-0002-3290-8932}{}
\authorrunning{Gezalyan, Kim, Lopez, Skora, Stefankovic, and Mount}
\keywords{Delaunay Triangulations, Hilbert metric, convexity, randomized algorithms}
\date{\today}
\begin{document}

\maketitle

\begin{abstract}
The Hilbert metric is a distance function defined for points lying within the interior of a convex body. It arises in the analysis and processing of convex bodies, machine learning, and quantum information theory. In this paper, we show how to adapt the Euclidean Delaunay triangulation to the Hilbert geometry defined by a convex polygon in the plane. We analyze the geometric properties of the Hilbert Delaunay triangulation, which has some notable differences with respect to the Euclidean case, including the fact that the triangulation does not necessarily cover the convex hull of the point set. We also introduce the notion of a Hilbert ball at infinity, which is a Hilbert metric ball centered on the boundary of the convex polygon. We present a simple randomized incremental algorithm that computes the Hilbert Delaunay triangulation for a set of $n$ points in the Hilbert geometry defined by a convex $m$-gon. The algorithm runs in $O(n (\log n + \log^3 m))$ expected time. In addition we introduce the notion of the Hilbert hull of a set of points, which we define to be the region covered by their Hilbert Delaunay triangulation. We present an algorithm for computing the Hilbert hull in time $O(n h \log^2 m)$, where $h$ is the number of points  on the hull's boundary. 
\end{abstract}

\section{Introduction}

David Hilbert introduced the Hilbert metric in 1895~\cite{hilbert1895linie}. Given any convex body $\Omega$ in $d$-dimensional space, the Hilbert metric defines a distance between any pair of points in the interior of $\Omega$ (see Section~\ref{sec:prelim} for definitions). The Hilbert metric has a number of useful properties. It is invariant under projective transformations, and straight lines are geodesics. When $\Omega$ is a Euclidean ball, it realizes the Cayley-Klein model of hyperbolic geometry. When $\Omega$ is a simplex, it provides a natural metric over discrete probability distributions (see, Nielsen and Sun \cite{nielsen2019clustering, nielsen2022nonlinear}). An excellent resource on Hilbert geometries is the handbook of Hilbert geometry by Papadopoulos and Troyanov~\cite{papadopoulos2014handbook}.

The Hilbert geometry provides new insights into classical questions from convexity theory. Efficient approximation of convex bodies has a wide range of applications, including approximate nearest neighbor searching both in Euclidean space~\cite{arya2018membership} and more general metrics~\cite{abdelkader2019noneuclid}, optimal construction of $\eps$-kernels~\cite{arya2017kernel}, solving the closest vector problem approxi\-mately~\cite{eisenbrand2011covering, rothvoss2022cvp, eisenbrand2021cvp, naszodi2019covering}, and computing approximating polytopes with low combinatorial complexity~\cite{arya2017complexity, arya2022complexity}. These works all share one thing in common---they approximate a convex body by covering it with elements that behave much like metric balls. These covering elements go under various names: Macbeath regions, Macbeath ellipsoids, Dikin ellipsoids, and $(2,\eps)$-covers. Vernicos and Walsh showed that these shapes are, up to a constant scaling factor, equivalent to Hilbert balls \cite{vernicos2014hilbert, abdelkader2018delone}. In addition, the Hilbert metric behaves nicely in the context flag approximability of convex polytopes as studied by Vernicos and Walsh \cite{vernicos2018flag}.

Other applications of the Hilbert metric include machine learning \cite{nielsen2019clustering}, quantum information theory \cite{reeb2011hilbertquantum}, real analysis \cite{lemmens2013birkhoff}, and optimal mass transport \cite{chen2016entropic}. Despite its obvious appeals, only recently has there been any work on developing classical computational geometry algorithms that operate in the Hilbert metric. Nielsen and Shao characterized balls in the Hilbert metric defined by a convex polygon with $m$ sides~\cite{nielsen2017balls}. Hilbert balls are convex polygons bounded by $2 m$ sides. Nielsen and Shao showed that Hilbert balls can be computed in $O(m)$ time, and they developed dynamic software for generating them. Gezalyan and Mount presented an $O(m n \log n)$ time algorithm for computing the Voronoi diagram of $n$ point sites in the Hilbert polygonal metric~\cite{gezalyan2021voronoi} (see Figure~\ref{fig:voronoi-delaun}(a) and~(b)). They showed that the diagram has worst-case combinatorial complexity of $O(m n)$. Bumpus {\etal} \cite{bumpus2023software} further analyzed the properties of balls in the Hilbert metric and presented software for computing Hilbert Voronoi diagrams.

\begin{figure}[htbp]
    \centerline{\includegraphics[scale=0.40]{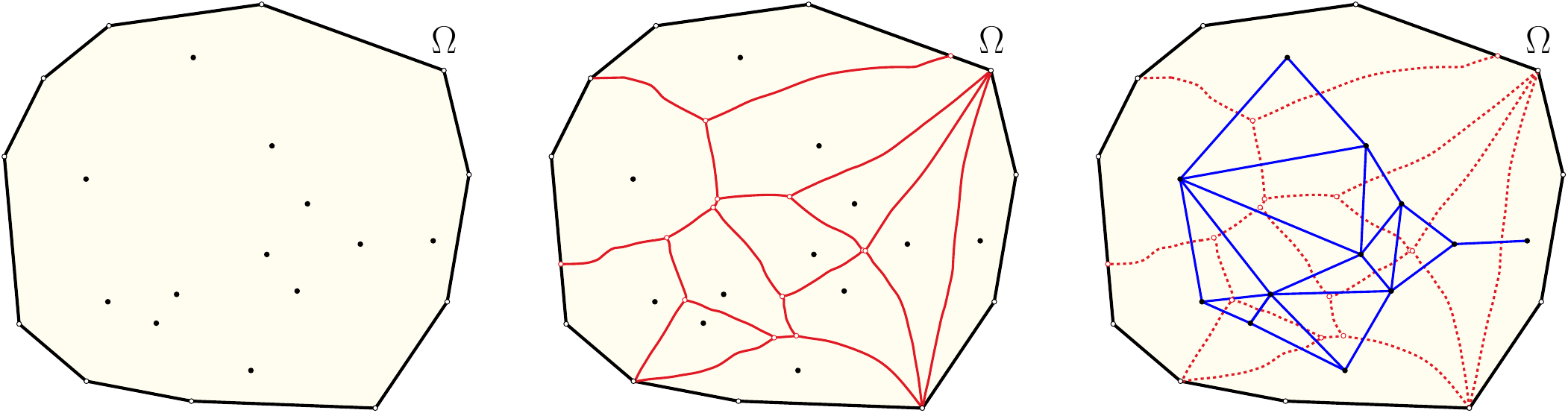}}
    \caption{(a) A convex polygon $\Omega$ and sites, (b) the Voronoi diagram, and (c) the Delaunay triangulation.} 
    \label{fig:voronoi-delaun}
\end{figure} 

In this paper, we present an algorithm for computing the Delaunay triangulation of a set $P$ of $n$ point sites in the Hilbert geometry defined by an $m$-sided convex polygon $\Omega$. The Hilbert Delaunay triangulation is defined in the standard manner as the dual of the Hilbert Voronoi diagram (see Figure~\ref{fig:voronoi-delaun}(c)). Our algorithm is randomized and runs in $O(n (\log n + \log^3 m))$ expected time. Excluding the polylogarithmic term in $m$, this matches the time of the well-known randomized algorithm for Euclidean Delaunay triangulations~\cite{guibas1992randomized}. It is significantly more efficient than the time to compute the Hilbert Voronoi diagram, which is possible because of the smaller output size. A central element of our algorithm is an $O(\log^3 m)$ time algorithm for computing Hilbert circumcircles. 

Unlike the Euclidean case, the Delaunay triangulation does not necessarily triangulate the convex hull of $P$. We also provide a characterization of when three points admit a Hilbert ball whose boundary contains all three points, and introduce the notion of the Hilbert hull, which we define to be the region covered by the Hilbert Delaunay triangulation. We give an algorithm for the Hilbert hull that runs in time $O(n h \log^2 m )$, where $h$ is the number of points on the Hilbert hull's boundary. 

\section{Preliminaries} \label{sec:prelim}

\subsection{The Hilbert Metric and Hilbert Balls}

The Hilbert metric is defined over the interior of a convex body $\Omega$ in $\RE^d$ (that is, a closed, bounded, full dimensional convex set). Let $\bd \Omega$ denote $\Omega$'s boundary. Unless otherwise stated, we assume throughout that $\Omega$ is an $m$-sided convex polygon in $\RE^2$. Given two points $p$ and $q$ in $\Omega$, let $\overline{p q}$ denote the \emph{chord} of $\Omega$ defined by the line through these points.

\begin{figure}[htbp]
    \centerline{\includegraphics[scale=0.40]{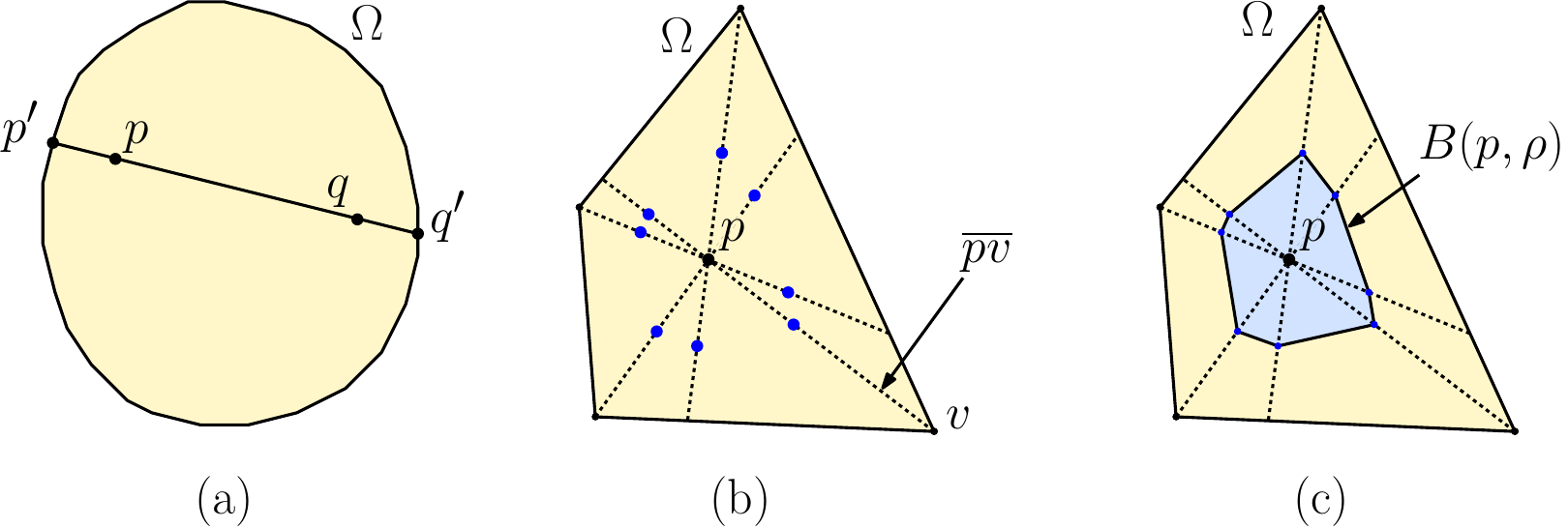}}
    \caption{(a) The Hilbert distance $d_\Omega(p,q)$, (b) spokes defined by $p$, and (c) the Hilbert ball $B(p,\rho)$.} \label{fig:HilbertDef}
\end{figure} 

\begin{definition}[Hilbert metric] 
Given a convex body $\Omega$ in $\RE^d$ and two distinct points $p, q \in \interior(\Omega)$, let $p'$ and $q'$ denote endpoints of the chord $\overline{p q}$, so that the points are in the order $\ang{p', p, q, q'}$. The \emph{Hilbert distance} between $p$ and $q$ is defined
\[
    d_{\Omega}(p,q)
        ~ = ~ \frac{1}{2} \ln \left( \frac{\|q - p'\|}{\|p - p'\|} \frac{\|p - q'\|}{\|q - q'\|} \right),
\]
and define $d_{\Omega}(p,p) = 0$ (see Figure~\ref{fig:HilbertDef}(a)).
\end{definition}

Note that the quantity in the logarithm is the cross ratio $(q, p; p', q')$. Since cross ratios are preserved by projective transformations, it follows that the Hilbert distances are invariant under projective transformations. Straight lines are geodesics, but not all geodesics are straight lines. The Hilbert distance satisfies all the axioms of a metric, and in particular it is symmetric and the triangle inequality holds~\cite{papadopoulos2014handbook}. When $\Omega$ is a probability simplex~\cite{boyd2004convex}, this symmetry distinguishes it from other common methods of calculating distances between probability distributions, such as the Kullback-Leibler divergence~\cite{kullback1951information}. 

Given $p \in \interior(\Omega)$ and $\rho > 0$, let $B(p,\rho)$ denote the Hilbert ball of radius $\rho$ centered at $p$. Nielsen and Shao showed how to compute Hilbert balls~\cite{nielsen2017balls}. Consider the set of $m$ chords $\overline{p v}$ for each vertex $v$ of $\Omega$ (see Figure~\ref{fig:HilbertDef}(b)). These are called the \emph{spokes} defined by $p$. For each spoke $\overline{p v}$, consider the two points at Hilbert distance $\rho$ on either side of $p$. $B(p,\rho)$ is the (convex) polygon defined by these $2 m$ points (see Figure~\ref{fig:HilbertDef}(c)). Given any line that intersects $\Omega$, a simple binary search makes it possible to determine the two edges of $\Omega$'s boundary intersected by the line. Applying this to the line passing through $p$ and $q$, it follows that Hilbert distances can be computed in $O(\log m)$ time.

\subsection{The Hilbert Voronoi Diagram} \label{sec:hilbert-vor}

Given a set $P$ of $n$ point sites in $\interior(\Omega)$, the \emph{Hilbert Voronoi diagram} of $P$ is defined in the standard manner as a subdivision of $\interior(\Omega)$ into regions, called \emph{Voronoi cells}, based on which site of $P$ is closest in the Hilbert distance. It was shown by Gezalyan and Mount~\cite{gezalyan2021voronoi} that each Voronoi cell is star-shaped with respect to its site. Given two sites $p,q \in P$, the \emph{$(p,q)$-bisector} is the set of points that are equidistant from both sites in the Hilbert metric (see Figure \ref{fig:HilbertSec}(a)).

\begin{figure}[htbp]
    \centerline{\includegraphics[scale=0.40]{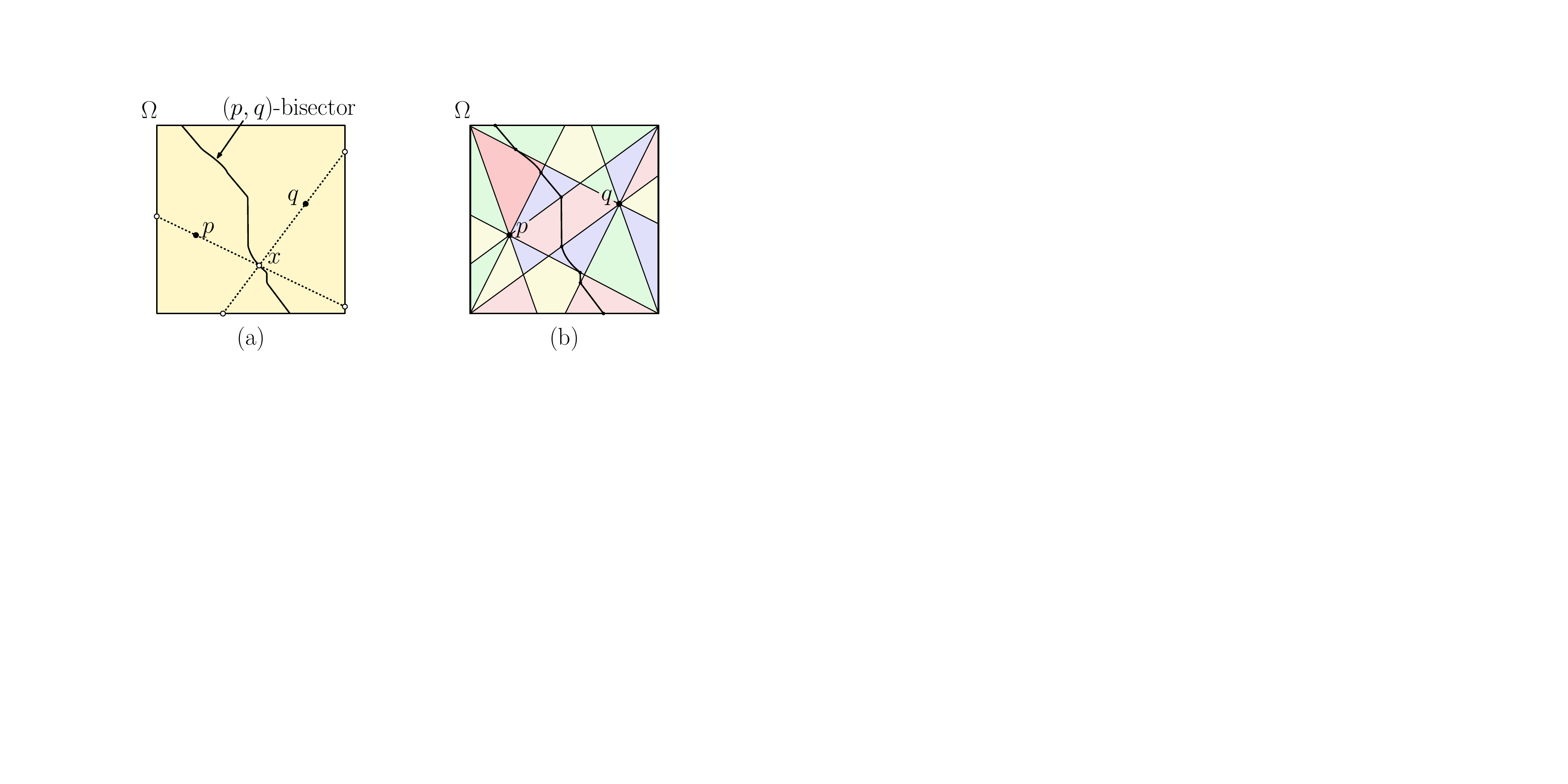}}
    \caption{The $(p,q)$-bisector is a piecewise conic with joints on the spokes of $p$ and $q$.} 
    \label{fig:HilbertSec}
\end{figure}

The distances from any point $x$ on the $(p,q)$-bisector to $p$ and $q$ are determined by the edges of $\bd \Omega$ incident to the chords $\overline{p x}$ and $\overline{q x}$. We can subdivide the interior $\Omega$ into equivalence classes based upon the identity of these edges. It is easy to see that as $x$ crosses a spoke of either $p$ or $q$, it moves into a different equivalence class. It has been shown that within each equivalence class, the bisector is a conic~\cite{gezalyan2021voronoi, bumpus2023software}. It follows that the $(p,q)$-bisector is a piecewise conic, whose joints lie on sector boundaries (see Figure \ref{fig:HilbertSec}(b)). It is also known that the worst-case combinatorial complexity of the Hilbert Voronoi diagram is $\Theta(m n)$~\cite{gezalyan2021voronoi}. 

\subsection{The Hilbert Delaunay Triangulation} \label{sec:hilbert-delaun}

The \emph{Hilbert Delaunay triangulation} of $P$, denoted $\DT(P)$, is defined as the dual of the Hilbert Voronoi diagram, where two sites $p$ and $q$ are connected by an edge if their Voronoi cells are adjacent. Throughout, we make the general-position assumption that no four sites of $P$ are Hilbert equidistant from a single point in $\interior(\Omega)$. It is easy to see that familiar concepts from Euclidean Delaunay triangulations apply, but using Hilbert balls rather than Euclidean balls. For example, two sites are adjacent in the triangulation if and only if there exists a Hilbert ball whose boundary contains both sites, and whose interior contains no sites. (The center of this ball lies on the Voronoi edge between the sites.) Also, three points define a triangle in $DT(P)$ if and only if there is a Hilbert ball whose boundary contains all three points, but is otherwise empty. 

Consider a triangle $\triangle p q r$ in $\Omega$'s interior. A Hilbert ball whose boundary passes through all three points is said to \emph{circumscribe} this triangle. As we shall see in Section~\ref{sec:hilbert-circum}, some triangles admit no circumscribing Hilbert ball, but the following lemma shows that if it does exist, then it is unique. We refer to the boundary of such a ball as a \emph{Hilbert circumcircle}.

\begin{restatable}{lemma}{circumUnique} \label{lem:circum-unique}
There is at most one Hilbert circumcircle for any triangle in $\Omega$'s interior.
\end{restatable}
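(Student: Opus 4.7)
The plan is to argue by contradiction: suppose $\triangle pqr$ admits two distinct circumscribing Hilbert balls, $B(c_1,\rho_1)$ and $B(c_2,\rho_2)$, with $c_1\neq c_2$. Each center is then Hilbert-equidistant from $p,q,r$ and hence lies on every pairwise Hilbert bisector $B_{pq},B_{pr},B_{qr}$. In the Hilbert Voronoi diagram of $\{p,q,r\}$, the points $c_1$ and $c_2$ are therefore ``triple'' Voronoi vertices, where all three cells $V_p,V_q,V_r$ meet.

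Along the arc of $B_{pq}$ connecting $c_1$ to $c_2$, the third distance $d_\Omega(\cdot,r)$ agrees with $d_\Omega(\cdot,p)=d_\Omega(\cdot,q)$ only at the two endpoints, so on the open arc it has a single constant sign---either strictly larger, making the arc a Voronoi edge between $V_p$ and $V_q$, or strictly smaller, placing the arc in the interior of $V_r$. The analogous dichotomy holds for $B_{pr}$ and $B_{qr}$, leaving a short finite list of combinatorial types for the Voronoi diagram under the assumption of two triple vertices.

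I would rule out each type using two tools. The first is the star-shape result of Gezalyan and Mount, which implies that each cell $V_i$ is connected. In those types where every arc between $c_1$ and $c_2$ is a Voronoi edge, Euler's formula on the planar subdivision (whose edges are the VD-edge portions of the bisectors together with the arcs of $\partial\Omega$) produces strictly more interior faces than cells, forcing some $V_i$ to split into two disjoint angular sectors, one at $c_1$ and one at $c_2$. The second tool handles the remaining types, in which some bisector arc between $c_1$ and $c_2$ lies inside a cell: here one cell would be an interior region bounded entirely by bisector arcs, but an asymptotic analysis of $d_\Omega(x,i)$ as $x\to\partial\Omega$ shows that $V_i$ must touch $\partial\Omega$ near the boundary point closest to site $i$, a contradiction.

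The step I expect to be the main obstacle is carrying out the case analysis uniformly. A convenient consolidation, once star-shape of each $V_i$ from its site is in hand, is the observation that every ray from $i$ meets $\partial V_i$ in at most one point; this constraint, applied to the putative triple vertices $c_1,c_2$ lying simultaneously on $\partial V_p$, $\partial V_q$, and $\partial V_r$, directly produces either the disconnection of some cell or the non-boundary-reach contradiction in every configuration.
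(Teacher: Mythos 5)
Your approach is genuinely different from the paper's, but it is a sketch with real gaps rather than a proof, and you acknowledge as much (``I would rule out each type,'' ``the step I expect to be the main obstacle''). Three places where the gaps matter: first, the claim that $d_\Omega(\cdot,r)-d_\Omega(\cdot,p)$ vanishes on the arc of the $(p,q)$-bisector between $c_1$ and $c_2$ \emph{only} at the endpoints is essentially the uniqueness you are trying to prove, so it cannot be assumed; at best you may pass to two consecutive zeros, after which the downstream case structure changes. Second, the Euler-formula step is asserted rather than carried out: three bisector curves running boundary-to-boundary through both $c_1$ and $c_2$, together with arcs of $\bd\Omega$, create a subdivision whose face/edge/vertex count is delicate, and ``more interior faces than cells'' does not by itself force a disconnection of some $V_i$ without first pinning down which arcs are Voronoi edges --- which is exactly the case analysis you deferred. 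Third, the assertion that each $V_i$ must touch $\bd\Omega$ is not a consequence of $d_\Omega(x,i)\to\infty$ as $x\to\bd\Omega$ (all three distances blow up simultaneously), and the paper itself emphasizes that Hilbert Voronoi cells of interior sites generally do \emph{not} reach the boundary; for three sites it happens to hold, but it requires its own argument (e.g., via the fact that each bisector has two endpoints on $\bd\Omega$), which you do not give.

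The paper avoids all of this with a single topological stroke. Let $E=\Omega\setminus(B_1\cup B_2)$ and pick any $w\in E$. Since $B_1,B_2$ are convex, $p,q,r$ all lie on $\bd E$, so $w$ can be joined to each of them by three pairwise non-crossing curves inside $E$. Star-shapedness of the Voronoi cells of $\{p,q,r\}$ yields six pairwise non-crossing segments from $c_1$ and $c_2$ to $p,q,r$, all contained in $B_1\cup B_2$ and hence disjoint from the $w$-curves. Together these give a planar embedding of $K_{3,3}$, a contradiction. Once you know $c_1,c_2$ are both triple Voronoi vertices and each cell is star-shaped about its site, this non-planarity argument replaces your entire case analysis; you might consider rebuilding your proof around it.
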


\begin{proof}
Let $p$, $q$, and $r$ be the vertices of the triangle, which we assume are not collinear. Suppose for the sake of a contradiction there did exist two circumscribing Hilbert balls $B_1$ and $B_2$. Let $c_1$ and $c_2$ denote their respective centers, and let $E = \Omega \setminus (B_1 \cup B_2)$ denote the region exterior to both balls. Since Hilbert balls are convex, $p$, $q$, and $r$ all lie on the boundary of $E$. It follows that for any point $w$ in $E$, there exist three nonintersecting curves joining $v$ to $p$, $q$, and $r$. Since both balls are empty, points $c_1$ and $c_2$ are vertices in the Voronoi diagram of $\{p, q, r\}$. Since Voronoi cells are star-shaped, the six open line segments joining each center to each point do not intersect. Further, these lie within $B_1 \cup B_2$, and so they are disjoint from the curves joining $w$ to these points. Therefore the complete bipartite graph $\{p,q,r\} \times \{c_1, c_2, w\}$ is planar, but this is impossible since it would imply a planar embedding of $K_{3,3}$ (the complete bipartite $3 \times 3$ graph). 
\end{proof}

Note that the non-collinear assumption is necessary. To see why, let $\Omega$ be an axis aligned rectangle, and let $\ell$ be a horizontal line through the rectangle's center. It is easy to construct two Hilbert balls, one above the line and one below, whose boundaries contain a segment along this line. Placing the three points within this segment would violate the lemma. 

The following lemma shows that $\DT(P)$ is a connected, planar graph. Its proof is similar to the Euclidean case.

\begin{restatable}{lemma}{planarSpanner} \label{lem:planar-spanner}
Given any discrete set of points $P$, $\DT(P)$ is a planar (straight-line) graph that spans $P$.
\end{restatable}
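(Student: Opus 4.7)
The lemma makes three claims: $\DT(P)$ is a straight-line graph, is planar, and spans $P$. The straight-line property is immediate from the definition of the edges. For spanning, the plan is to note that every $p \in P$ has a non-empty Voronoi cell $V(p) \subseteq \interior(\Omega)$, and since the cells partition $\interior(\Omega)$, for $|P| \geq 2$ each cell $V(p)$ shares a positive-dimensional boundary with at least one neighbor, so $p$ has at least one Delaunay neighbor and appears as a vertex. Connectedness of $\DT(P)$ can be obtained by tracing the straight segment between any two sites through the sequence of Voronoi cells it meets, which yields a walk between them in $\DT(P)$.

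The substantive claim is planarity. I would argue by contradiction: suppose two Delaunay edges $\overline{pq}$ and $\overline{rs}$, with $\{p,q\} \cap \{r,s\} = \emptyset$, cross at a point $x$ in $\interior(\Omega)$. The Delaunay characterization supplies empty Hilbert balls $B_1$ (through $p$ and $q$, with center $c_1$) and $B_2$ (through $r$ and $s$, with center $c_2$). Convexity of Hilbert balls gives $\overline{pq} \subseteq B_1$ and $\overline{rs} \subseteq B_2$, placing $x \in B_1 \cap B_2$. By star-shapedness of Voronoi cells (Section~\ref{sec:hilbert-vor}), the segments $\overline{p c_1}, \overline{q c_1}, \overline{r c_2}, \overline{s c_2}$ lie inside $V(p), V(q), V(r), V(s)$, respectively, and these four cells are pairwise interior-disjoint.

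Adapting the $K_{3,3}$ strategy of Lemma~\ref{lem:circum-unique}, the plan is to exhibit nine pairwise non-crossing curves in $\Omega$ realizing $K_{3,3}$ on the vertex partition $\{p, q, r\}$ versus $\{c_1, c_2, x\}$. Six curves come for free: the three star-shape segments $\overline{p c_1}, \overline{q c_1}, \overline{r c_2}$ in distinct Voronoi cells, the two subsegments of $\overline{pq}$ joining $x$ to $p$ and to $q$ inside $B_1$, and the subsegment of $\overline{rs}$ joining $x$ to $r$ inside $B_2$. The remaining three edges---joining $p$ to $c_2$, $q$ to $c_2$, and $r$ to $c_1$---are routed through the exterior region $E = \Omega \setminus \interior(B_1 \cup B_2)$ and then dip into the opposite ball to reach its center. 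Since Hilbert balls are bounded subsets of $\interior(\Omega)$ whose union $B_1 \cup B_2$ is connected and simply connected (as two overlapping convex sets), $E$ is topologically an annulus, affording enough flexibility to carry out these routings pairwise disjointly. The resulting planar embedding of $K_{3,3}$ contradicts its non-planarity.

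The main obstacle is this final routing step: verifying that the three remaining curves can in fact be drawn pairwise non-crossing and disjoint from the six interior curves. This depends on the cyclic order of $p, q, r, s$ on $\partial(B_1 \cup B_2)$ and on how each curve dips into its target ball without crossing the existing segments inside. The enabling properties---convexity of Hilbert balls, star-shapedness of Voronoi cells, mutual disjointness of Voronoi cells, and connectedness of $E$---mirror exactly those used in the Euclidean case, so the argument carries over without substantive change.
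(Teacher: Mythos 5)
Your planarity argument has a genuine gap, which you yourself flag at the end: the existence of the exterior routings for the three ``remaining'' $K_{3,3}$ edges is asserted but not established, and this is exactly the hard part. Trying to re-run the $K_{3,3}$ argument from Lemma~\ref{lem:circum-unique} with the vertex partition $\{p,q,r\}$ versus $\{c_1,c_2,x\}$ is also structurally mismatched: in Lemma~\ref{lem:circum-unique} the three points lie on the boundary of \emph{both} balls and $w$ is chosen in the exterior, which is what makes the three exterior curves easy to draw disjointly; here only $p,q$ lie on $\partial B_1$, only $r$ on $\partial B_2$, and $x$ is an interior crossing point, so none of that machinery transfers cleanly. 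The paper's route is much shorter and does not require building a new embedding at all: since $a,b\in\partial B_1\setminus\interior(B_2)$, $c,d\in\partial B_2\setminus\interior(B_1)$, and each chord $\overline{ab}$, $\overline{cd}$ passes through the interior of the other ball (they cross), the boundaries $\partial B_1$ and $\partial B_2$ must meet in at least four points; any three of these lie on two distinct Hilbert circumcircles, contradicting Lemma~\ref{lem:circum-unique} directly. You should either supply the paper's observation about the four boundary intersections, or actually carry out the routing argument in full -- as written, your proposal only identifies the obstacle without overcoming it.

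For the spanning claim you take a genuinely different route. The paper argues connectivity by a minimal-counterexample argument: take the Hilbert-closest pair $p,q$ in different components, place a ball centered at the midpoint $c$ of the geodesic segment $\overline{pq}$, and use $d_\Omega(p,q)=d_\Omega(p,c)+d_\Omega(c,q)$ together with the triangle inequality to show the ball is empty, forcing a Delaunay edge $pq$. Your alternative -- trace the straight segment from $p$ to $q$ and read off the sequence of Voronoi cells it crosses -- is a valid idea, but you should note that it needs a general-position or perturbation step so that the segment does not pass through a Voronoi vertex, and that consecutive cells crossed share a positive-length bisector arc (so the corresponding sites really are Delaunay-adjacent). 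The paper's argument avoids this by never mentioning cell adjacency and is also what sets up the lune argument in Lemma~\ref{lem:mst-rng}, so it is the more economical choice in context.
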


\begin{proof}
Suppose to the contrary that there existed $a,b,c,d \in P$ such that the edges $(a,b)$ and $(c,d)$, viewed as open line segments, intersected each other. These edges are witnessed by two empty Hilbert balls, $B_1$ for $(a,b)$ and $B_2$ for $(c,d)$ (see Figure~\ref{fig:planar-spanner}(a)). By convexity, $B_1$ contains the segment $a b$, and by emptiness, its interior does not contain $c$ or $d$. Similarly, $B_2$ contains the segment $c d$, and its interior does not contain $a$ or $b$. and $B_2$. It follows the boundaries of $B_1$ and $B_2$ must intersect in at least four points. Applying Lemma~\ref{lem:circum-unique} to any three of these four intersection points yields a contradiction.

\begin{figure}[htbp]
    \centerline{\includegraphics[scale=0.40]{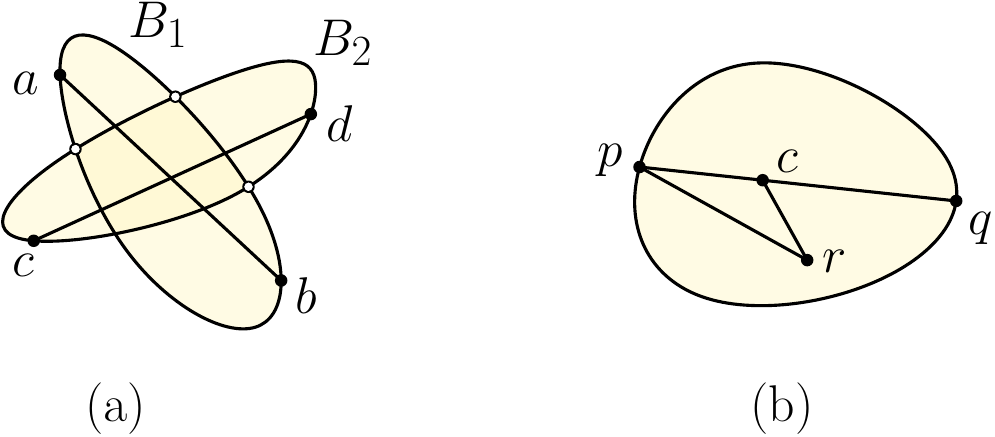}}
    \caption{Proof of Lemma~\ref{lem:planar-spanner}.} 
    \label{fig:planar-spanner}
\end{figure}

To prove the $\DT(P)$ spans $P$, suppose to the contrary that this was not so. Consider the two sites $p,q \in P$ that are in different components but closest in terms of their Hilbert distance. Let $c$ denote the center of the smallest Hilbert ball $B$ containing $p$ and $q$. Because straight lines are geodesics, $c$ lies on the line segment $p q$, and $d_{\Omega}(p,q) = d_{\Omega}(p,c) + d_{\Omega}(c,q)$ (see Figure~\ref{fig:planar-spanner}(b)). We assert that this ball can contain no other site in its interior. If there were such a site $r$, it lies in a different connected component from $p$ or $q$ (possibly both). Let us assume without loss of generality that it is not in $p$'s connected component. Because it is in the ball's interior, we have $d_{\Omega}(c,r) < d_{\Omega}(c,q)$. By the triangle inequality, we obtain 
\[
    d_{\Omega}(p,r) 
        ~ \leq ~ d_{\Omega}(p,c) + d_{\Omega}(c,r) 
        ~ <    ~ d_{\Omega}(p,c) + d_{\Omega}(c,q)
        ~ =    ~ d_{\Omega}(p,q),
\]
but this contradicts the hypothesis that $p$ and $q$ were the closest sites in different components.
\end{proof}

Another useful property of the Delaunay triangulation, viewed as a graph, is its relationship to other geometric graph structures. The \emph{Hilbert minimum spanning tree} of a point set $P$, denoted $\text{MST}_{\Omega}(P)$, is the spanning tree over $P$ where edge weights are Hilbert distances. The \emph{Hilbert relative neighborhood graph} for a point set $P$, denoted $\text{RNG}_{\Omega}(P)$ is a graph over the vertex set $P$, where two points $p,q \in P$ are joined by an edge if $d_{\Omega}(p,q) \leq \min(d_{\Omega}(p,r), d_{\Omega}(q,r))$, for all $r \in P \setminus \{p,q\}$. Toussaint proved that in the Euclidean metric, the minimum spanning tree is contained in the relative neighborhood graph, which is contained in the Delaunay graph~\cite{toussaint1980rng}. The following shows that this holds in the Hilbert metric as well.

\begin{restatable}{lemma}{mstRng} \label{lem:mst-rng}
Given any discrete set of points $P$, $\text{MST}_{\Omega}(P) \subseteq \text{RNG}_{\Omega}(P) \subseteq \DT_{\Omega}(P)$.
\end{restatable}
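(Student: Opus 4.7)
The plan is to mirror Toussaint's classical argument, which requires only symmetry, the triangle inequality, and the fact that straight line segments are geodesics---all of which hold for the Hilbert metric $d_{\Omega}$. I would split the statement into its two containments and handle them independently.

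For $\text{MST}_{\Omega}(P) \subseteq \text{RNG}_{\Omega}(P)$, I would use the standard exchange argument. Suppose for contradiction that $(p,q)$ lies in the MST but fails the RNG condition, so some $r \in P \setminus \{p,q\}$ satisfies $d_{\Omega}(p,r) < d_{\Omega}(p,q)$ and $d_{\Omega}(q,r) < d_{\Omega}(p,q)$. Removing $(p,q)$ from the MST splits it into two subtrees $T_p \ni p$ and $T_q \ni q$, and without loss of generality $r \in T_p$. Adding the edge $(q,r)$ reconnects the two subtrees into a spanning tree of strictly smaller total weight, contradicting minimality.

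For $\text{RNG}_{\Omega}(P) \subseteq \DT_{\Omega}(P)$, the approach is to exhibit, for every RNG edge $(p,q)$, a Hilbert ball whose boundary contains $p$ and $q$ and whose interior is empty of sites; by the characterization of Delaunay edges in Section~\ref{sec:hilbert-delaun}, this suffices. Let $c$ be the Hilbert midpoint of the segment from $p$ to $q$---the unique point on this segment with $d_{\Omega}(p,c) = d_{\Omega}(q,c) = \tfrac{1}{2} d_{\Omega}(p,q)$, which exists because the segment is a geodesic and Hilbert arclength along it varies continuously from $0$ to $d_{\Omega}(p,q)$. Set $B = B(c, \tfrac{1}{2} d_{\Omega}(p,q))$, so that $p,q \in \bd B$. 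For any $r \in P \setminus \{p,q\}$, the RNG condition combined with the triangle inequality gives
\[
    d_{\Omega}(c,r) ~\geq~ d_{\Omega}(p,r) - d_{\Omega}(p,c) ~\geq~ d_{\Omega}(p,q) - \tfrac{1}{2} d_{\Omega}(p,q) ~=~ \tfrac{1}{2} d_{\Omega}(p,q),
\]
so $r \notin \interior(B)$. Hence $B$ is a witnessing empty Hilbert ball for $(p,q)$.

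The only real subtlety I anticipate is handling the weak inequality in the RNG definition: a third site $r$ could conceivably lie on $\bd B$. This is not a problem, however, because the Delaunay characterization only requires $\interior(B)$ to be empty, so equality cases do not disrupt the argument. Everything else reduces to symmetric applications of the triangle inequality.
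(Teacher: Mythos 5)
Your proof takes essentially the same route as the paper's: the first containment is the standard cycle/exchange argument, and the second exhibits the empty Hilbert ball centered at the midpoint of the segment $pq$ (the paper derives emptiness by showing the midpoint ball sits inside the RNG lune; you derive it directly from the triangle inequality, which is a slightly cleaner presentation of the same idea).

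There is one subtle gap. The paper's definition reads $d_{\Omega}(p,q) \leq \min(d_{\Omega}(p,r), d_{\Omega}(q,r))$ for all $r$, but this must be a typo for $\max$: under the $\min$ condition, distinct points $q,q'$ adjacent to the same $p$ would force $d_{\Omega}(p,q) = d_{\Omega}(p,q')$, so generically the graph would be a matching and could not contain the spanning tree -- and indeed both the paper's proof and your first part negate the condition as ``there exists $r$ with $d_{\Omega}(p,r) < d_{\Omega}(p,q)$ \emph{and} $d_{\Omega}(q,r) < d_{\Omega}(p,q)$,'' which is the negation of the $\max$ version. With the $\max$-condition, however, your chain
$d_{\Omega}(c,r) \geq d_{\Omega}(p,r) - d_{\Omega}(p,c) \geq d_{\Omega}(p,q) - \tfrac12 d_{\Omega}(p,q)$
is not justified for every $r$: a given $r$ is only guaranteed to satisfy $d_{\Omega}(p,r) \geq d_{\Omega}(p,q)$ \emph{or} $d_{\Omega}(q,r) \geq d_{\Omega}(q,p)$, not necessarily the former. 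The fix is immediate by symmetry -- apply the reverse triangle inequality from whichever of $p$ or $q$ the bound is available for; either case gives $d_{\Omega}(c,r) \geq \tfrac12 d_{\Omega}(p,q)$, so the conclusion still stands -- but as written the step silently assumes the stronger ($\min$) condition while your first part assumes the weaker ($\max$) one.
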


\begin{proof}
The fact that the minimum spanning tree is contained in the relative neighborhood graph holds in any metric. If two points $p$ and $q$ are not joined by an edge in the RNG, then there is a point $r$ that is closer to both than they are to each other. The edge $p q$ cannot be in the minimum spanning tree, because it is the highest weight edge in the cycle formed by $p$, $q$, and $r$.

To prove that the $\text{RNG}_{\Omega}(P) \subseteq \DT_{\Omega}(P)$, consider an edge $p q$ in the Hilbert RNG. Let $B(p,q)$ denote the Hilbert ball centered at $p$ with radius $d_{\Omega}(p,q)$ and define $B(q,p)$ analogously. The condition for two points $p,q \in P$ to be connected in $\text{RNG}_{\Omega}(P)$ is that there is no other point of $P$ lying in the interior of ``lune'' $B(p,q) \cap B(q,p)$. Consider the ball $B$ centered at the point $c$ that is midway between $p$ and $q$, as introduced in the proof of Lemma~\ref{lem:planar-spanner}. For any $r \in B$, we showed that $d_{\Omega}(p,r) < d_{\Omega}(p,q)$. Symmetrically, we have $d_{\Omega}(q,r) < d_{\Omega}(q,p)$, which implies that $r$ lies in the lune. Since the lune must be empty, $B$ must also be empty. This empty ball witnesses the fact that $p q$ is an edge of $\DT_{\Omega}(P)$.
\end{proof}

\section{Hilbert Bisectors} \label{sec:hilbert-bisect}

In this section we present some utilities for processing Hilbert bisectors, which will be used later in our algorithms. We start by recalling a characterization given by Gezalyan and Mount for when a point lies on the Hilbert bisector~\cite{gezalyan2021voronoi}. Recall that three lines are said to be \emph{concurrent} in projective geometry if they intersect in a common point or are parallel.

\begin{lemma} \label{lem:bisector-prop}
Given a convex body $\Omega$ and two points $p,q \in \interior(\Omega)$, consider any other point $x \in \Omega$. Let $p'$ and $p''$ denote the endpoints of the chord $\overline{p x}$ so the points appear in the order $\ang{p', p, x, p''}$. Define $q'$ and $q''$ analogously for $\overline{q x}$.
\begin{enumerate}
\item[$(i)$] If $x \in \interior(\Omega)$, then it lies on the $(p,q)$-bisector if and only if the lines $\overleftrightarrow{p q\strut}$, $\overleftrightarrow{p' q'\strut}$, and $\overleftrightarrow{p'' q''\strut}$ are concurrent (see Figure~\ref{fig:bisector-prop}(a)). 

\item[$(ii)$] If $x \in \bd \Omega$ (implying that $x = p'' = q''$), then $x$ is the limit point of the $(p,q)$-bisector if and only if there is a supporting line through $x$ concurrent with $\overleftrightarrow{p q\strut}$ and $\overleftrightarrow{p' q'}$ (see Figure~\ref{fig:bisector-prop}(b)).
\end{enumerate}
\end{lemma}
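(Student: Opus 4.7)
My plan is to reduce both parts to the projective invariance of cross ratios. The key observation is that, by the definition of the Hilbert metric, $d_{\Omega}(p,x) = \frac{1}{2}\ln [x,p;p',p'']$ and $d_{\Omega}(q,x) = \frac{1}{2}\ln[x,q;q',q'']$, where $[\cdot]$ denotes the cross ratio of four collinear points. Hence $x$ lies on the $(p,q)$-bisector if and only if these two cross ratios coincide, and the two quadruples of points sit on the chords $\ell_p = \overleftrightarrow{px}$ and $\ell_q = \overleftrightarrow{qx}$, both of which pass through $x$.

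For part $(i)$, assume first that the three lines $\overleftrightarrow{pq}$, $\overleftrightarrow{p'q'}$, $\overleftrightarrow{p''q''}$ are concurrent at a (possibly ideal) point $z$. Consider the perspectivity $\pi$ from $z$ between the lines $\ell_p$ and $\ell_q$. Because $p,q$ are collinear with $z$, and similarly for $p',q'$ and $p'',q''$, this perspectivity sends $p\mapsto q$, $p'\mapsto q'$, $p''\mapsto q''$, and since $\ell_p\cap\ell_q=\{x\}$, it fixes $x$. Perspectivities preserve cross ratios, so the two cross ratios agree and $x$ lies on the bisector. Conversely, suppose $x$ is on the bisector. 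Let $z = \overleftrightarrow{pq} \cap \overleftrightarrow{p'q'}$ (with the convention that parallel lines meet at the ideal point in their common direction). The perspectivity from $z$ between $\ell_p$ and $\ell_q$ sends $p\mapsto q$, $p'\mapsto q'$, and fixes $x$. Since three corresponding pairs determine a projectivity between two lines, the image $y$ of $p''$ under this map is the unique point on $\ell_q$ with $[x,q;q',y] = [x,p;p',p'']$. By the bisector hypothesis this cross ratio equals $[x,q;q',q'']$, forcing $y = q''$; hence $z$, $p''$, $q''$ are collinear.

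Part $(ii)$ follows by a limiting argument from part $(i)$. Choose a sequence $x_n \in \interior(\Omega)$ of points on the $(p,q)$-bisector converging to $x \in \bd\Omega$. Let $p'_n, p''_n$ and $q'_n, q''_n$ be the corresponding chord endpoints; by construction $p''_n\to x$ and $q''_n\to x$, and after passing to a subsequence the line $\overleftrightarrow{p''_n q''_n}$ converges to some line $L$ through $x$. Because both $p''_n$ and $q''_n$ lie on $\bd\Omega$ and converge to $x$ while remaining on opposite sides of $x$ along their respective chords (which in the limit sandwich $L$), $L$ must be a supporting line of $\Omega$ at $x$. Meanwhile $p'_n\to p'$ and $q'_n\to q'$ (the other endpoints of the chords from $p$ and $q$ to $x$), and the concurrency condition from part $(i)$ for each $x_n$ passes to the limit to give the stated concurrency of $\overleftrightarrow{pq}$, $\overleftrightarrow{p'q'}$, and $L$. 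The converse is obtained by reversing this argument: a supporting line through $x$ meeting $\overleftrightarrow{pq}$ and $\overleftrightarrow{p'q'}$ at a common point $z$ lets one construct a sequence $x_n\to x$ along the bisector, using the perspectivity from $z$ to certify each $x_n$ is equidistant.

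The main obstacle will be the limiting argument in part $(ii)$: at a vertex of $\Omega$ the supporting line is not unique, so one must verify that the particular supporting line obtained as the limit of $\overleftrightarrow{p''_n q''_n}$ is exactly the one meeting $\overleftrightarrow{pq}$ and $\overleftrightarrow{p'q'}$ at a common point. A secondary nuisance is the handling of parallel configurations, which is cleanest in projective language by working with an ideal point $z$ at infinity; once one has committed to the projective viewpoint, both the concurrency statement and the perspectivity argument specialize without change.
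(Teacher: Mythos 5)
The paper does not prove Lemma~\ref{lem:bisector-prop}; it is stated as a recalled characterization from Gezalyan and Mount~\cite{gezalyan2021voronoi}, so there is no in-paper proof to compare against. Evaluating your argument on its own merits: your treatment of part~$(i)$ is correct and is exactly the right approach. You correctly identify that $d_\Omega(p,x)=\tfrac{1}{2}\ln(x,p;p',p'')$ and $d_\Omega(q,x)=\tfrac{1}{2}\ln(x,q;q',q'')$, and the perspectivity-through-$z$ argument is clean: concurrency gives a perspectivity $\ell_p\to\ell_q$ fixing $x$ and carrying $p,p',p''$ to $q,q',q''$, forcing equal cross-ratios, and conversely the equality of cross-ratios forces the image of $p''$ under the perspectivity through $z=\overleftrightarrow{pq}\cap\overleftrightarrow{p'q'}$ to be $q''$, establishing concurrency. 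Working projectively to absorb parallel configurations is also the right call.

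Part~$(ii)$ is where the argument is incomplete. The forward direction is a reasonable limiting sketch, and the concurrency does pass to the limit correctly (since $z_n=\overleftrightarrow{pq}\cap\overleftrightarrow{p'_nq'_n}\to z=\overleftrightarrow{pq}\cap\overleftrightarrow{p'q'}$ and $z_n\in\overleftrightarrow{p''_nq''_n}\to L$). But two things need shoring up. First, you need an explicit reason why $p''_n$ and $q''_n$ approach $x$ from opposite boundary arcs so that the secant $\overleftrightarrow{p''_nq''_n}$ converges to a \emph{supporting} line rather than to a line crossing $\interior(\Omega)$; this follows because $x_n$ lies strictly between the rays $\overrightarrow{px}$ and $\overrightarrow{qx}$, but it deserves a sentence. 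Second, and more seriously, your converse direction is essentially circular as written: ``construct a sequence $x_n\to x$ along the bisector'' presupposes that the bisector reaches $x$, which is the thing to be proved. A cleaner route is to use the fact (established separately in the Voronoi-diagram literature, and implicit in Lemma~\ref{lem:pseudoline}) that the bisector is a simple curve with exactly two endpoints on $\bd\Omega$, apply the forward direction to those endpoints, and then show that the concurrency condition picks out at most one candidate point on each of the two boundary arcs determined by the chord $\overline{pq}$ --- which is precisely the monotonicity observation that the paper exploits algorithmically in the proof of Lemma~\ref{lem:bisector-endpoint}. As is, the converse of part~$(ii)$ is a gap, though you have flagged the right pressure points.
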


\begin{figure}[htbp]
    \centerline{\includegraphics[scale=0.40]{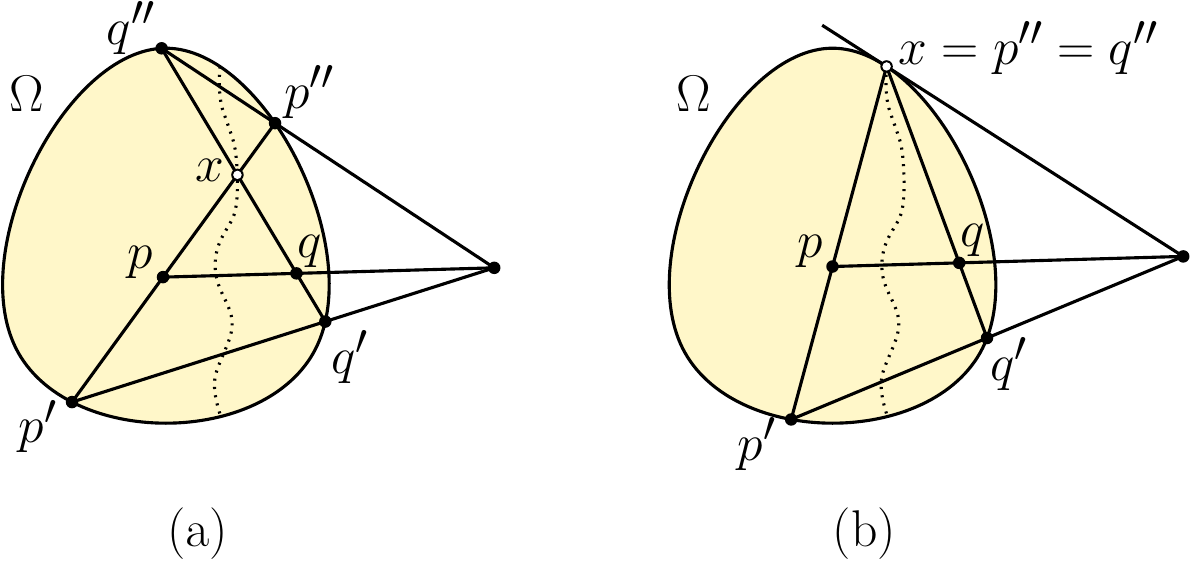}}
    \caption{Properties of points on the Hilbert bisector.}
    \label{fig:bisector-prop}
\end{figure}

The following utility lemmas arise as consequences of this characterization. Both involve variants of binary search.

\begin{restatable}{lemma}{bisectorRayHit} \label{lem:bisector-ray-hit}
Given an $m$-sided convex polygon $\Omega$, two points $p,q \in \interior(\Omega)$, and any ray emanating from $p$, the point of intersection between the ray and the $(p,q)$-bisector (if it exists) can be computed in $O(\log^2 m)$ time.
\end{restatable}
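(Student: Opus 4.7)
The key simplification is that since $x$ moves along a ray through $p$, the supporting line of the chord $\overline{p x}$ is fixed, independent of $x$. In an $O(\log m)$ preprocessing step we locate its endpoints $p', p'' \in \bd \Omega$ by binary search on the edges of $\Omega$, after which $d_{\Omega}(p, x(t))$ is a closed-form function of the ray parameter $t$ and evaluates in $O(1)$ time. In contrast, the chord $\overline{q x(t)}$ rotates about $q$ as $t$ varies, and its endpoints $q', q''$ change only when the chord sweeps past a vertex of $\Omega$, i.e., when $x(t)$ crosses one of the $m$ spokes of $q$. Between consecutive spoke crossings, all four endpoints $p', p'', q', q''$ are combinatorially fixed, so by Section~\ref{sec:hilbert-vor} the $(p,q)$-bisector restricted to such an interval is a single conic piece.

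The plan is to locate the correct spoke interval by a binary search of $O(\log m)$ iterations. Two monotonicity properties justify this. First, since (generically) $q$ does not lie on the supporting line of the ray, the angular direction from $q$ to $x(t)$ is strictly monotone in $t$, so the spokes of $q$ are crossed by the ray in the cyclic angular order of the vertices of $\Omega$ around $q$, which matches their boundary order on $\Omega$. Second, because the Voronoi cell of $p$ relative to $\{p, q\}$ is star-shaped with respect to $p$~\cite{gezalyan2021voronoi}, the sign of $f(t) = d_{\Omega}(p, x(t)) - d_{\Omega}(q, x(t))$ changes at most once on the admissible range of $t$, passing from negative at $t = 0$ (where $f(0) = -d_{\Omega}(p,q) < 0$) to nonnegative at the unique bisector crossing (if it exists). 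Thus a standard binary search over vertex indices of $\Omega$ correctly isolates the spoke interval containing the crossing.

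At each step of this search I pick a candidate vertex $v_i$, compute in $O(1)$ time the parameter $t_i$ at which the ray meets the spoke through $v_i$, evaluate $d_{\Omega}(p, x(t_i))$ in $O(1)$ using the precomputed $p', p''$, and evaluate $d_{\Omega}(q, x(t_i))$ in $O(\log m)$ (one endpoint of $\overline{q\, x(t_i)}$ is $v_i$ itself, and the other is located on $\bd \Omega$ by a single binary search). The sign of $f(t_i)$ then determines which half of the current interval to keep; after $O(\log m)$ iterations the correct spoke interval is isolated, in $O(\log^2 m)$ total time. Within this final interval, substituting $x = x(t)$ into the concurrency characterization of Lemma~\ref{lem:bisector-prop}(i) yields a polynomial equation of constant degree in $t$, solved in $O(1)$ time to produce the intersection point; nonexistence is reported if $f$ remains negative throughout the admissible range, or if the algebraic root falls outside the isolated $t$-interval.

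The main obstacle will be cleanly implementing the cyclic bookkeeping of exactly which contiguous arc of vertices of $\Omega$ have their spokes crossed by the ray, and handling degeneracies such as $q$ lying on the supporting line of the ray or the ray grazing a vertex of $\Omega$; these can be resolved by a single $O(\log m)$ range computation plus standard symbolic perturbation, without affecting the overall $O(\log^2 m)$ bound.
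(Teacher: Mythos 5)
Your proposal is essentially the same approach as the paper's: precompute $p', p''$ once in $O(\log m)$ time so that $d_\Omega(p,x(t))$ evaluates in $O(1)$, then binary search among the spokes of $q$ that cross the ray, spending $O(\log m)$ per probe to locate the opposite endpoint of $\overline{q\,x(t_i)}$, for $O(\log^2 m)$ total. Your explicit justification of the search correctness via star-shapedness of the Voronoi cell is a welcome addition; the paper uses this monotonicity implicitly.

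The one place where you are imprecise is the claim that a ``standard binary search over vertex indices'' suffices. The vertices of $\Omega$ whose spokes cross the admissible portion of the ray do \emph{not} form a single contiguous arc of $\bd\Omega$. As $x(t)$ moves along the ray, the chord $\overline{q\,x(t)}$ rotates about $q$, and \emph{both} of its endpoints sweep arcs of $\bd\Omega$, one on each side of $q$. Each spoke of $q$ is indexed by a vertex $v$, and the spokes crossed by the ray are those whose vertex lies in the \emph{union} of two disjoint arcs. Within each arc the crossing parameter $t$ is monotone in the vertex index, but the two $t$-sequences are interleaved. A naive binary search on a single contiguous index range would therefore miss roughly half of the candidate spoke crossings. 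The paper resolves this by maintaining the current subsegment $[x_0,x_1]$ of the ray, observing that $\overline{q\,x_0}$ and $\overline{q\,x_1}$ cut $\bd\Omega$ into exactly two relevant arcs, and at each step sampling the median vertex from whichever of the two arcs has more vertices; this guarantees at least a quarter of the remaining candidates are eliminated per probe. You flagged the ``cyclic bookkeeping'' as the main obstacle, which is the right instinct, but you should state explicitly that there are two arcs and explain how the probe selection still yields geometric progress, since that is what makes the $O(\log m)$ iteration bound go through.
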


\begin{proof}
Through binary search on the boundary of $\Omega$, in $O(\log m)$ time we can determine the endpoints $p'$ and $p''$ of the  chord passing through $p$ along the ray (see Figure~\ref{fig:bisector-ray-hit}(a)). Let $\ell$ denote this chord, and let $V$ denote the points along the chord where the spokes of $q$ intersect $\ell$ (see Figure~\ref{fig:bisector-ray-hit}(b)). 

\begin{figure}[htbp]
    \centerline{\includegraphics[scale=0.40]{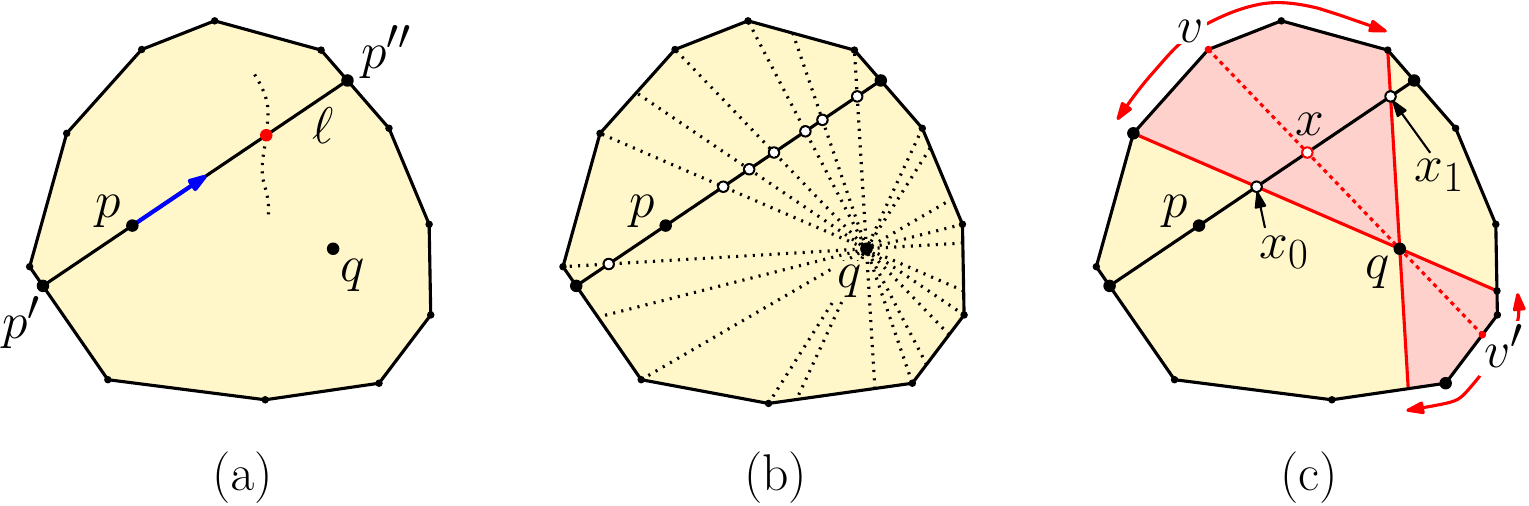}}
    \caption{Computing where a ray through a site hits the bisector.}
    \label{fig:bisector-ray-hit}
\end{figure}

While the points of $V$ will not be computed explicitly, the search algorithm will simulate a binary search among these points. Observe that for any subsegment $[x_0,x_1] \subseteq \ell$, the two chords $\overline{q x_0}$ and $\overline{q x_1}$ subdivide the boundary of $\Omega$ into two polygonal arcs so that every chord through $q$ that intersects this subsegment has its endpoints in these two arcs. We can effectively simulate a binary search along $[x_0,x_1]$ by sampling the median vertex $v$ from whichever arc contains the greater number of vertices, and then taking the point $x = \overline{q v} \cap \ell$ (see Figure~\ref{fig:bisector-ray-hit}(c)). Each such probe will succeed in eliminating at least one quarter of the remaining points of $V$.

Given a probe point $x$ along $\ell$, we first determine in $O(1)$ time whether it is on the proper side of $p$ relative to the ray's direction. If not, we eliminate the portion lying behind $p$. If so, we compute the opposite endpoint $v'$ of $\overline{q v}$ in $O(\log m)$ time. Next, we compute the distances $d_{\Omega}(p,x)$ and $d_{\Omega}(q,x)$ (which can be done in $O(1)$ time since both chords are known). If the distance to $q$ is smaller, we recurse on $[x_0, x]$, and otherwise we recurse on $[x, x_1]$. In either case, a constant fraction of remaining points have been eliminated. 

The search eventually terminates either with an interval along $\ell$ between two consecutive spokes of $q$, or it falls off the end of the ray. In the former case, we can determine the exact location of the bisector within the interval in $O(1)$ time, and report this answer. In the latter case, we report that the ray does not intersect the bisector.

Each probe takes $O(\log m)$ time (to determine the opposite endpoint of each ray through $q$). Since $V$ contains $O(m)$ points, and we eliminate a constant fraction with each probe, the overall time is $O(\log^2 m)$, as desired.
\end{proof}

\begin{restatable}{lemma}{bisectorEndpoint} \label{lem:bisector-endpoint}
Given an $m$-sided convex polygon $\Omega$ and any two points $p,q \in \interior(\Omega)$, the endpoints of the $(p,q)$-bisector on $\bd \Omega$ can be computed in $O(\log^2 m)$ time.
\end{restatable}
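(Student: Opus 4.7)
The plan is to locate each endpoint of the $(p,q)$-bisector on $\bd \Omega$ via a pair of nested binary searches of $O(\log m)$ depth, each probe costing $O(\log m)$, giving $O(\log^2 m)$ overall. The central primitive is a \emph{side test}: given any $v \in \bd \Omega$, determine in $O(\log m)$ time whether $v$ lies in the closure of $p$'s Voronoi cell or of $q$'s.

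To implement the side test, first compute the ``antipodal'' points $p'_v$ and $q'_v$, i.e.\ the other endpoints of the chords through $v$ emanating from $p$ and from $q$; each is obtained by a binary search on $\bd \Omega$ in $O(\log m)$ time. Following Lemma~\ref{lem:bisector-prop}(ii), form $c_v = \overleftrightarrow{p'_v q'_v} \cap \overleftrightarrow{pq}$ and check whether $\overleftrightarrow{v c_v}$ supports $\Omega$ at $v$; if so, $v$ is itself a bisector endpoint, and otherwise an $O(1)$ comparison of $\overleftrightarrow{v c_v}$ against the cone of supporting lines at $v$ tells us which cell contains $v$. Because each Hilbert Voronoi cell is star-shaped (hence connected) and its closure meets $\bd \Omega$ in a single arc, the side test is monotone along $\bd \Omega$ with exactly two sign changes, at the two bisector endpoints.

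Next, I would binary search over the $m$ vertices of $\Omega$ to identify the unique edge $e^\ast$ whose two endpoints give opposite side tests: $O(\log m)$ probes of cost $O(\log m)$, totalling $O(\log^2 m)$. Having localized the endpoint to $e^\ast$, I run a second binary search along $e^\ast$ over the $O(m)$ spokes of $p$ and $q$ crossing $e^\ast$, in the style of Lemma~\ref{lem:bisector-ray-hit}; each probe again invokes the side test in $O(\log m)$ time. Once the endpoint is bracketed between two consecutive spokes, $p'(x)$ and $q'(x)$ are projective functions of $x$ on fixed edges of $\Omega$, so the concurrency condition of Lemma~\ref{lem:bisector-prop}(ii) becomes an algebraic equation of bounded degree, solvable in $O(1)$ time. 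Running the procedure once for each of the two endpoints keeps the total cost at $O(\log^2 m)$.

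The main obstacle I anticipate is the precise $O(1)$ part of the side test. At a vertex $v$ of $\Omega$ the cone of supporting lines is nontrivial, so the test must compare $\overleftrightarrow{v c_v}$ against this cone rather than against a single line; and projective degeneracies such as $\overleftrightarrow{pq} \parallel L_{e^\ast}$ push $c_v$ to infinity and must be handled uniformly by interpreting $c_v$ as a point in the projective plane. Once these cases are verified, monotonicity of the side test and the binary-search accounting are routine.
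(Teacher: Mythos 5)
Your proposal is correct and uses the same underlying primitive as the paper---a monotone comparison derived from the concurrency characterization of Lemma~\ref{lem:bisector-prop}(ii)---but it organizes the search differently. The paper performs a \emph{single} binary search, maintaining a bracketing interval of spokes through $p$ along the upper chain and directly comparing two points $z$ (where the supporting line at the probe meets $\overleftrightarrow{pq}$) and $z'$ (where $\overleftrightarrow{p'q'}$ meets $\overleftrightarrow{pq}$), arguing that $z$ and $z'$ move monotonically in opposite directions as the probe advances. You instead abstract this comparison into a ``side test'' and run a clean \emph{two-stage} search: first over the $m$ vertices of $\Omega$ to isolate the edge $e^\ast$ containing the endpoint, then over the spokes of $p$ and $q$ crossing $e^\ast$, in the spirit of Lemma~\ref{lem:bisector-ray-hit}. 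This buys conceptual clarity (the monotonicity of a single cell-membership predicate is arguably easier to state and verify than the coupled motion of $z$ and $z'$) at the cost of making the two-level structure explicit; the asymptotics are identical. One detail worth making explicit: the side test has \emph{two} sign changes around the cyclic boundary, so the first-stage binary search needs anchors; the points $z_0, z_1 = \overleftrightarrow{pq} \cap \bd\Omega$ serve naturally since they lie in opposite cells, and this is exactly how the paper's proof splits $\bd\Omega$ into upper and lower chains. Your handling of the vertex/cone case and the projective point at infinity for $c_v$ is the right thing to worry about, and matches the paper's treatment of probe points that land on a vertex versus an edge interior.
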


\begin{proof}
For the sake of illustration, let us assume that the line through $p$ and $q$, denoted $\ell$, is horizontal (see Figure~\ref{fig:bisector-endpoint}(a)). In $O(\log m)$ time we can compute the two points, $z_0$ and $z_1$, where $\ell$ intersects $\bd \Omega$, with $z_0$ on the left side of $\Omega$ and $z_1$ on the right. These points implicitly split the boundary of $\Omega$ into two convex chains, which we call the \emph{upper} and \emph{lower chains}. We will show how to compute the endpoint of the $(p,q)$-bisector that lies on the upper chain, and the other case is symmetrical. 

\begin{figure}[htbp]
    \centerline{\includegraphics[scale=0.40]{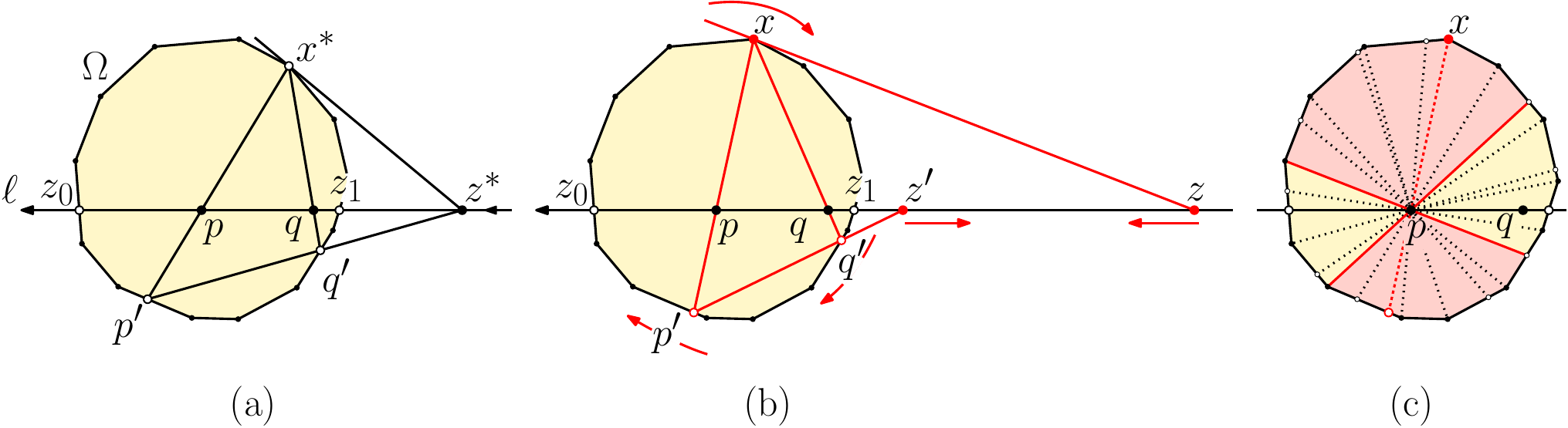}}
    \caption{Computing the endpoints of the $(p,q)$-bisector.}
    \label{fig:bisector-endpoint}
\end{figure}

It will be convenient to think of the portion of $\ell$ that lies outside of $\Omega$ as being oriented from right to left, first starting at $z_0$, then progressing to left to $-\infty$ and wrapping around to $+\infty$, and finally progressing to $z_1$. Observe that as a point $x$ travels from $z_0$ to $z_1$ along the upper chain of $\bd \Omega$, the associated supporting line intersects $\ell$ at a point that moves monotonically along $\ell$ from $z_0$ to $z_1$ (wrapping around when it hits the topmost vertex of $\Omega$).

By Lemma~\ref{lem:bisector-prop}(ii), we seek a point $x^*$ on the upper chain that satisfies the conditions of the lemma (see Figure~\ref{fig:bisector-endpoint}(a)). Given any point $x$ on the upper chain, consider the endpoints $p'$ and $q'$ of the respective chords $\overline{x p}$ and $\overline{x q}$ (see Figure~\ref{fig:bisector-endpoint}(b)). Let $z$ denote the point where a supporting line at $x$ intersects $\ell$. Let $z'$ denote the point where the line $\overleftrightarrow{p' q'}$ intersects $\ell$. In the limit, when $x \rightarrow z_0$, we have $z = z_0$ and $z' = z_1$, and in the other limit, when $x \rightarrow z_1$, we have $z = z_1$ and $z' = z_0$. The key to the search process is that as $x$ moves monotonically along the upper chain from $z_0$ to $z_1$, the point $z$ moves monotonically along $\ell$ from $z_0$ to $z_1$. On the other hand, the points $p'$ and $q'$ move monotonically along the lower chain from $z_1$ and $z_0$, and hence $z'$ moves in the reverse direction along $\ell$. It follows that there is a unique point $x$ on the upper chain where $z$ and $z'$ coincide along $\ell$. This will be the desired point $x^*$.

We will show that $x^*$ can be found through a binary search along the upper chain of $\Omega$, where each probe takes $O(\log m)$ time. We maintain two spokes through $p$, which delimit the portion along the upper chain where the desired point $x^*$ resides (see Figure~\ref{fig:bisector-endpoint}(c)). This partitions the upper and lower chain into two polygonal arcs containing the endpoints of the candidate chords. With each probe, we sample the median vertex $x$ from the arc that contains the larger number of vertices. This will allow us to eliminate at least one quarter of the remaining vertices from consideration. 

Depending on the nature of the spoke, the point $x$ may be a vertex of the upper chain or it may lie in the interior of an edge. If it is in the interior of an edge, the supporting line at $x$ is uniquely determined, implying that $z$ is uniquely determined as well. In $O(\log m)$ time, we can determine the points $p'$ and $q'$, and $z'$ can be computed in $O(1)$ additional time. Depending on where $z'$ lies relative to $z$ in the order defined along $\ell$, we eliminate either the region before or after $x$ on the upper chain, thus eliminating a constant fraction of the points from further consideration. On the other hand, if $x$ is a vertex, there are two supporting lines, which define an interval along $\ell$. Again, we can determine $p'$, $q'$, and $z'$ in $O(\log m)$ time. If $z'$ lies within this interval, then $x$ is the desired final answer. Otherwise, we eliminate either the region before or after $x$, depending on which side of the interval $z'$ lies.

Each probe takes $O(\log m)$ time (to determine the points $p'$ and $q'$). Since there are $O(m)$ spokes through $p$, and we eliminate a constant fraction with each probe, the overall time is $O(\log^2 m)$, as desired.
\end{proof}

\section{Hilbert Circumcircles} \label{sec:hilbert-circum}

In the Euclidean plane, any triple of points that are not collinear lie on a unique circle, that is, the boundary of an Euclidean ball. This is not true in the Hilbert geometry, however. Since Delaunay triangulations are based on an empty circumcircle condition, it will be important to characterize when a triangle admits a Hilbert circumcircle and when it does not. In this section we explore the conditions under which such a ball exists.

\subsection{Balls at infinity} \label{sec:ball-at-infty}

We begin by introducing the concept of a Hilbert ball centered at a point on the boundary of $\Omega$. Let $x$ be any point on $\bd \Omega$. Given any point $p \in \interior(\Omega)$, we are interested in defining the notion of a Hilbert ball centered at $x$ whose boundary contains $p$. If we think of the points on the boundary of $\Omega$ as being infinitely far from any point in $\interior(\Omega)$, this gives rise to the notion of a ball at infinity.

Given $x \in \bd \Omega$ and $p \in \interior(\Omega)$, let $u$ be any unit vector directed from $x$ into the interior of $\Omega$ (see Figure~\ref{fig:ball-at-infty}(a)). Given any sufficiently small positive $\delta$, let $x_{\delta} = x + \delta u$ denote the point at distance $\delta$ from $x$ along vector $u$, and let $B(x_{\delta}, p)$ denote the Hilbert ball centered at $x_{\delta}$ of radius $d_{\Omega}(x_{\delta}, p)$. The following lemma shows that as $\delta$ approaches $0$, this ball approaches a shape, which we call the \emph{Hilbert ball at infinity} determined by $x$ and $u$ and passing through $p$, denoted $B_u(x, p)$ (see Figure~\ref{fig:ball-at-infty}(b)).

\begin{figure}[htbp]
    \centerline{\includegraphics[scale=0.40]{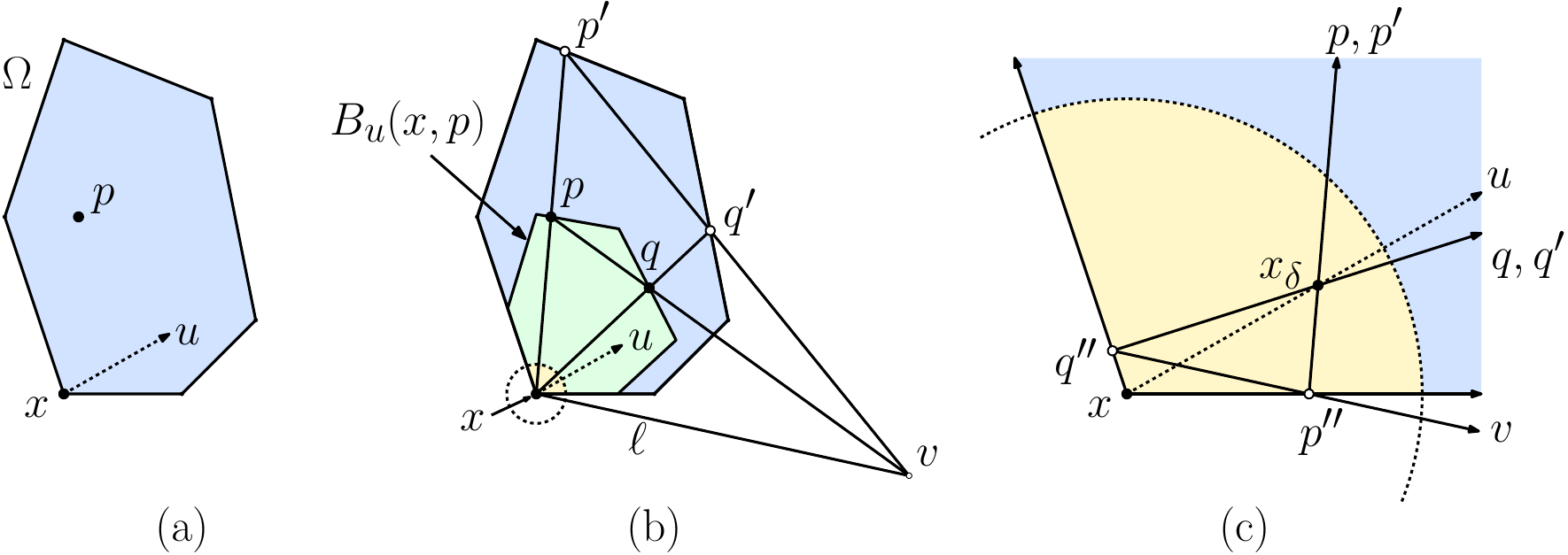}}
    \caption{Constructing the Hilbert ball $B_u(x, p)$ at infinity.} 
    \label{fig:ball-at-infty}
\end{figure}

\begin{restatable}{lemma}{ballAtInfty} \label{lem:ball-at-infty}
As $\delta$ approaches $0$, $B(x_{\delta}, p)$ approaches a convex polygon lying within $\Omega$ having $x$ on its boundary.
\end{restatable}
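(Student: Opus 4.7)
The plan is to use the Nielsen--Shao characterization from Section~\ref{sec:prelim}: $B(x_\delta, p)$ is the convex $2m$-gon whose vertices are obtained by marking, on each of the $m$ spokes from $x_\delta$ through the vertices of $\Omega$, the two points at Hilbert distance $\rho_\delta = d_\Omega(x_\delta, p)$ from $x_\delta$. I will track each of these $2m$ vertices as $\delta \to 0$ and verify that they converge to a finite point set whose convex hull is a polygon contained in $\Omega$ and has $x$ on its boundary.

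First I would pin down the asymptotics of $\rho_\delta$. Let $\ang{p^-, x_\delta, p, p^+}$ denote the chord of $\Omega$ through $p$ and $x_\delta$. Since $x_\delta = x + \delta u$ lies at Euclidean distance $\Theta(\delta)$ from $\bd \Omega$ along the direction $-u$, a direct calculation gives $\|x_\delta - p^-\| = \Theta(\delta)$, while the other three lengths in the cross ratio tend to strictly positive limits. Substituting into the definition of the Hilbert distance then yields $\delta \cdot e^{2 \rho_\delta} \to C$ for an explicit constant $C = C(x,u,p) > 0$. I would carry out the main argument under the assumption that $x$ lies in the relative interior of an edge $e$ of $\Omega$ with endpoints $v_k$ and $v_{k+1}$; the case where $x$ is itself a vertex of $\Omega$ is handled analogously, using the two edges incident to $x$ in place of the single edge $e$.

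Next I would classify the spokes from $x_\delta$ according to their geometry near $x$. For a vertex $v \notin \{v_k, v_{k+1}\}$, the spoke through $v$ is transverse to $e$; its chord has $v$ and a point $a_\delta^{(v)}$ as endpoints, with $a_\delta^{(v)} \to x$ at rate $\Theta(\delta)$. Combining this with $\delta \cdot e^{2\rho_\delta} \to C$ in the cross-ratio equation shows that the radius point $y_v^-$ on the $a_\delta^{(v)}$-side satisfies $\|y_v^- - a_\delta^{(v)}\| = \Theta(\delta^2)$, so $y_v^- \to x$, while the radius point $y_v^+$ on the $v$-side converges to a unique point $y_v^\star$ strictly interior to the segment $\overline{xv}$, whose position along that segment is read off from $C$ and the local geometry. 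For the two spokes through $v_k$ and $v_{k+1}$, the chord runs nearly parallel to $e$, and its endpoints tend to $v_k$ and $v_{k+1}$; an analogous cross-ratio computation shows that each of the four associated radius points converges to one of $v_k, v_{k+1}$.

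Assembling these limits, the $2m$ vertices of $B(x_\delta, p)$ converge to the finite set $S = \{x, v_k, v_{k+1}\} \cup \{y_v^\star : v \notin \{v_k, v_{k+1}\}\} \subseteq \overline{\Omega}$. Since each $B(x_\delta, p)$ is convex and uniformly contained in a bounded subset of $\Omega$, pointwise convergence of its vertices forces Hausdorff convergence of the balls to the convex hull of $S$, which is a convex polygon contained in $\Omega$; its boundary contains the segment $\overline{v_k v_{k+1}} \subseteq \bd \Omega$, and $x$ lies on that boundary segment. The main obstacle is carrying out the cross-ratio bookkeeping uniformly across all $m$ spokes so that the transverse-spoke regime and the edge-parallel regime combine into a single clean statement, and in particular handling the degenerate case where $x$ is itself a vertex of $\Omega$ (where the cone of admissible directions $u$ and the geometry near $x$ both require extra care).
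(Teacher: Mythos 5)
Your proof takes a genuinely different route from the paper's. The paper avoids computing $\rho_\delta$ entirely: it fixes a reference chord $\overline{p x_\delta}$, and for any other chord through $x_\delta$ it locates the corresponding ball-boundary point via a projectivity (through the intersection point $v$ of the two ``diagonal'' lines joining the reference chord's endpoints to the other chord's endpoints). Cross-ratio preservation under projectivities immediately certifies equidistance, so the boundary of $B(x_\delta,p)$ is constructed pointwise without ever evaluating the radius. The limit as $\delta\to 0$ is then read off from the observation that the line through the two ``near'' endpoints $p''$ and $q''$ has a slope that is frozen for small $\delta$ (the triangles $\triangle x p'' q''$ stay similar), and hence converges to a specific supporting line $\ell$ at $x$ --- which is exactly where the dependence on the approach direction $u$ enters when $x$ is a vertex.

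By contrast, you use the Nielsen--Shao spoke characterization, compute the asymptotics $\delta\, e^{2\rho_\delta}\to C$, and then chase the limits of all $2m$ spoke-vertices. This is sound in outline: the rate $\|x_\delta - p^-\| = \Theta(\delta)$ is correct, the conclusion $\|y_v^- - a_\delta^{(v)}\| = \Theta(\delta^2)$ and hence $y_v^- \to x$ checks out, the radius points on the two edge-aligned spokes do escape to $v_k$ and $v_{k+1}$, and convergence of the (uniformly bounded) vertex sets does give Hausdorff convergence of the convex hulls. What this buys you is a concrete, coordinate-level description of the limit polygon; what it costs is more case analysis --- the transverse-spoke regime vs.\ the edge-aligned regime must be handled separately, and the vertex case (which is the one the paper's figure actually illustrates, and the one where the limit genuinely depends on $u$) needs to be redone with the two incident edges, a step you defer rather than carry out. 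The paper's projectivity argument packages both the construction and the $u$-dependence into one uniform step, with no radius asymptotics and no spoke taxonomy, which is why it is the cleaner route; but your approach, completed carefully, would also establish the lemma.
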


\begin{proof}
We will provide a construction of the limiting shape by deriving the point at which any chord through $x$ passes through the boundary of the shape. First, consider the chord $\overline{p x_{\delta}}$, letting $p'$ and $p''$ denote the two points of intersection with $\bd \Omega$. Next, take any point $q' \in \bd \Omega$ different from $p'$. Consider the chord $\overline{q' x_{\delta}}$, letting $q''$ denote the other endpoint of the chord. Label the chord endpoints so that $p'$ and $q'$ are on the same side as $p$ with respect to $x_{\delta}$ and $p''$ and $q''$ are on the other (see Figure~\ref{fig:ball-at-infty}(b) and (c)). Let $v$ denote the point where the lines $\overleftrightarrow{p'q'}$ and $\overleftrightarrow{p''q''}$ intersect. Let $q$ denote the point of intersection between the chord $\overline{q'q''}$ and the line $\overleftrightarrow{p v}$. (It is easy to see that $q$ exists and is unique by convexity.)

Observe that the point sequences $\ang{p'', x_{\delta}, p, p'}$ and $\ang{q'', x_{\delta}, q, q'}$ form a projectivity through $v$, and hence their cross ratios are equal. Therefore, $p$ and $q$ are equidistant from $x_{\delta}$, and therefore they both lie on the boundary of a  Hilbert ball centered at $x_{\delta}$. 

All of these quantities depend on the value of $\delta$. Observe that in the limit, as $\delta \rightarrow 0$, $x_{\delta}$ approaches $x$, and thus the chords $x_{\delta} p$ and $x_{\delta} q'$ approach $x p$ and $x q'$, respectively. Also observe that for all sufficiently small $\delta$, the edges on which $p''$ and $q''$ lie do not vary. Therefore, for these and all smaller values of $\delta$, the triangles $\triangle x p'' q''$ are similar, and hence the slope of the line $\overleftrightarrow{p'' q''}$ is fixed. (Fixing this slope is the reason that the limiting ball depends not only on $x$ but on $u$ as well.) It follows that this line approaches the supporting line $\ell$ through $x$ with this same slope. (Figure~\ref{fig:ball-at-infty}(c) shows the case where $x$ is a vertex. When $x$ lies on the interior of an edge, the choice of $u$ does not matter, and the supporting line is the unique supporting line through this edge.) 

Therefore, the point where lines $\overleftrightarrow{p' q'}$ and $\ell$ intersect approaches a limiting point $v$. It follows that $q$ approaches a point along the chord $x q'$ in the limit. The set of all such points taken over all choices of $q'$ on the boundary of $\Omega$ yields the boundary of the desired shape. Since all the intermediate balls $B(x_{\delta}, p)$ are convex polygons lying within $\Omega$ containing $x_{\delta}$, it follows that the limiting shape is also a convex polygon containing $x$, which completes the proof.
\end{proof}

A useful utility, which we will apply in Section~\ref{sec:hilbert-hull}, involves computing the largest empty ball centered at any boundary point $x$ with respect to a point set $P$.

\begin{restatable}{lemma}{emptyBallAtInfty} \label{lem:empty-ball-at-infty}
Given a set of $n$ points $P$ in the interior of $\Omega$, and any point $x \in \bd \Omega$, in $O(n \log m)$ time, it is possible to compute a point $p \in P$, such that there is a ball at infinity centered at $x$ whose boundary passes through $p$, and which contains no points of $P$ in its interior.
\end{restatable}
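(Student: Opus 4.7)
The plan is to exploit the observation that, for a fixed direction $u$ at $x$, the family of Hilbert balls at infinity $\{B_u(x, p) : p \in \interior(\Omega)\}$ is totally ordered by containment. Indeed, by Lemma~\ref{lem:ball-at-infty} each such ball is the limit, as $\delta \to 0$, of the standard Hilbert ball $B(x_{\delta}, d_{\Omega}(x_{\delta}, p))$ centered at the common point $x_{\delta} = x + \delta u$, and standard Hilbert balls sharing a center are nested by radius; this nesting is preserved under the limit. Consequently, among the sites of $P$, the one whose ball at infinity is smallest has interior disjoint from $P$, and this is the point we want to return.

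I would first fix the direction $u$ once and for all: if $x$ lies in the relative interior of an edge of $\Omega$ the choice is immaterial, and if $x$ is a vertex I would fix $u$ to be the inward normal of an arbitrary supporting line at $x$. I would then run a single linear scan, maintaining a current best $p^* \in P$ (initialized to $p_1$) and, for each remaining $p_i$, test whether $p_i \in B_u(x, p^*)$; when the test succeeds, update $p^* \gets p_i$. By the nesting property, at the end $B_u(x, p^*)$ is the smallest ball in the family, and so has no site of $P$ in its interior.

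The key step is to implement the membership test in $O(\log m)$ time. I would do this by applying the projective construction from the proof of Lemma~\ref{lem:ball-at-infty} directly. Let $\ell$ be the supporting line to $\Omega$ at $x$ determined by $u$, let $b^*$ be the far endpoint of the chord $\overline{x p^*}$, and let $b_i$ be the far endpoint of $\overline{x p_i}$. In the limiting construction, the intersection of the chord $\overline{x p_i}$ with $\bd B_u(x, p^*)$ opposite to $x$ is the point $q = \overleftrightarrow{p^* v} \cap \overline{x p_i}$, where $v = \overleftrightarrow{b^* b_i} \cap \ell$. Then $p_i \in B_u(x, p^*)$ iff $p_i$ lies strictly between $x$ and $q$ along this chord. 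Each far endpoint is located by one binary search on $\bd\Omega$ in $O(\log m)$ time, $\ell$ can be precomputed in $O(\log m)$, and the remaining projective operations are $O(1)$; caching $b^*$ across iterations keeps the per-iteration cost at $O(\log m)$ and the overall cost at $O(n \log m)$.

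The principal obstacle I foresee is correctness when $x$ is a vertex of $\Omega$: there the ball at infinity depends genuinely on $u$, so the nesting only holds as long as the same $u$ is used throughout. One must verify that (i) the limiting construction of Lemma~\ref{lem:ball-at-infty} really does specialize to the point $q$ described above, so that the finite projective test decides membership in the limit ball, and (ii) the $p^*$ returned at the end is a legitimate answer -- which it is, because the lemma only asserts the \emph{existence} of some ball at infinity centered at $x$ with the desired emptiness property, and the fixed family $\{B_u(x, \cdot)\}$ supplies exactly one such witness, namely $B_u(x, p^*)$.
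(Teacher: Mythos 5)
Your proposal is correct and takes essentially the same approach as the paper: exploit the nesting of balls at infinity, use the projective construction from the proof of Lemma~\ref{lem:ball-at-infty} to reduce each comparison to a single $O(\log m)$ binary search (for the far endpoint of a chord through $x$) plus $O(1)$ projective work, and extract the minimal ball by a linear scan.

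The one stylistic difference worth noting: the paper fixes an arbitrary second boundary point $y \in \bd\Omega$ and maps each $p_i$ to a canonical representative $q_i \in \overline{xy}$ (the point where $\bd B_u(x,p_i)$ meets the fixed reference chord), then simply returns the site whose $q_i$ is closest to $x$. This collapses all comparisons to a one-dimensional ordering along a single chord and requires no running state. You instead maintain a running best $p^*$ and test $p_i \in B_u(x,p^*)$ directly against it, which means you need to cache $b^*$ and perform a fresh pairwise projective construction each time. Both variants are $O(n\log m)$ and equally correct; the fixed-chord version is a touch cleaner to state and analyze. Also, your justification for the total ordering (``standard Hilbert balls sharing a center are nested by radius; this nesting is preserved under the limit'') is slightly loose, since both radii $d_\Omega(x_\delta,p)$ and $d_\Omega(x_\delta,p')$ diverge as $\delta \to 0$ and nesting at each finite $\delta$ does not by itself force nesting of the limits. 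The paper instead appeals to Lemma~\ref{lem:circum-unique}: if two distinct balls at infinity with the same tangency data at $x$ crossed, their boundaries would share at least two additional points, and a limiting argument then contradicts uniqueness of circumcircles. This is a small gap in your write-up, not a flaw in the approach, and is easily patched.
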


\begin{proof}
Let $P = \{p_1, \ldots, p_n\}$, and let $\ell$ be any supporting line passing through $x$, and fix an arbitrary point $y \in \bd \Omega$ other than $x$ (see Figure~\ref{fig:empty-ball-at-infty}(a)). For each point $p \in P$, shoot a ray from $x$ through $p$, letting $p'$ denote the point where it hits the boundary. Let $v$ denote the intersection of $\ell$ and the line $\overleftrightarrow{p' y}$, and let $q$ denote the intersection of line $\overleftrightarrow{p v}$ with the chord $\overline{x y}$ (see Figure~\ref{fig:empty-ball-at-infty}(b)). 

\begin{figure}[htbp]
     \centerline{\includegraphics[scale=0.40]{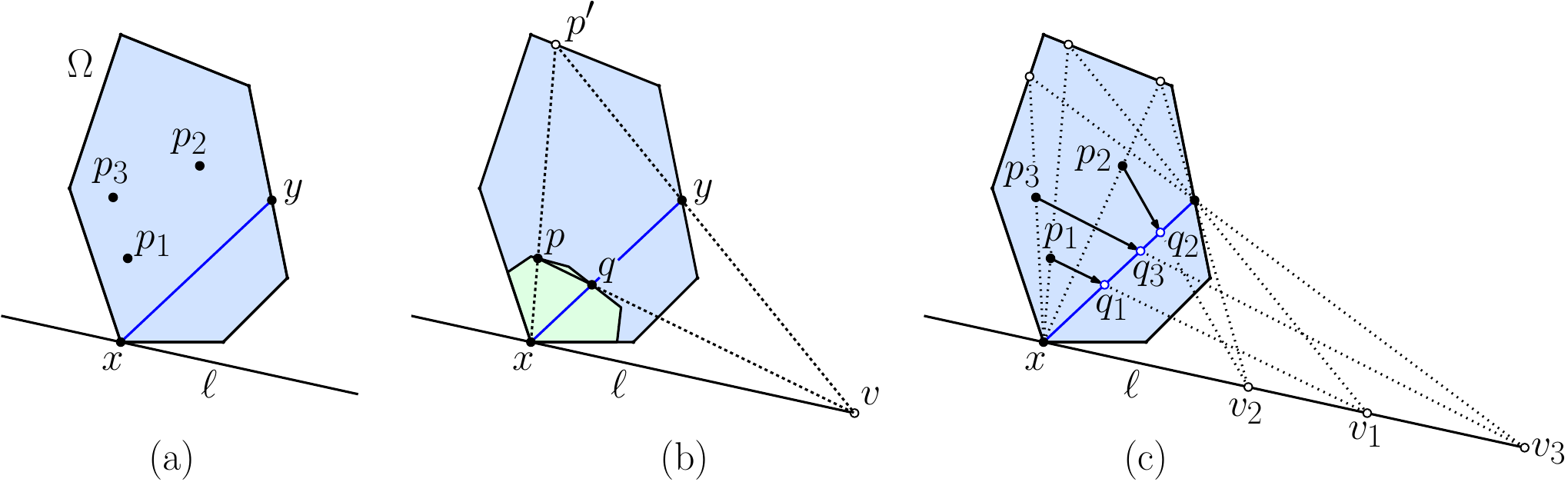}}
    \caption{Proof of Lemma~\ref{lem:empty-ball-at-infty}.} 
    \label{fig:empty-ball-at-infty}
\end{figure}

It follows from Lemma~\ref{lem:bisector-prop} that both $p$ and $q$ lie on the boundary of a ball at infinity centered at $x$. Applying this construction to each $p_i \in P$ yields a set of points $\{q_1, \ldots, q_n\}$ along $\overline{x y}$, which by Lemma~\ref{lem:circum-unique} are ordered by inclusion (see Figure~\ref{fig:empty-ball-at-infty}(c)). The desired point $p_i \in P$ is the one whose corresponding point $q_i$ is closest to $x$ along this chord. The running time for each point $p_i$ is $O(\log m)$ to determine where the ray hits the boundary, which yields an overall running time of $O(n \log m)$.
\end{proof}

Balls at infinity are not proper aspects of Hilbert geometry, but they will be convenient for our purposes. Given two points $p, q \in \interior(\Omega)$, let $x$ denote the endpoint of the $(p,q)$-bisector on $\bd \Omega$, oriented so that $x$ lies to the left of the directed line $\overrightarrow{p q}$. Let $u$ denote the tangent vector of the bisector at $x$. Define $B(p {\ST} q) = B_u(x, p)$. Note that this (improper) ball is both centered at and passes through $x$. In this sense it circumscribes the triangle $\triangle p q x$. Define $B(q {\ST} p)$ analogously for the opposite endpoint of this bisector (see Figure~\ref{fig:overlap-outer}(a) and~(b)).

We can now characterize the set of points $r$ that admit a Hilbert circumcircle with respect to two given points $p$ and $q$. This characterization is based on two regions, called the \emph{overlap} and \emph{outer regions} (see Figure~\ref{fig:overlap-outer}(c)).

\begin{definition}[Overlap/Outer Regions]
Given two points $p, q \in \interior(\Omega)$:
\smallskip
\begin{description}
\item[Overlap Region:] denoted $Z(p,q)$, is $B(p {\ST} q) \cap B(q {\ST} p)$.

\item[Outer Region:] denoted $W(p,q)$, is $\Omega \setminus (B(p {\ST} q) \cup B(q {\ST} p))$.
\end{description}
\end{definition}

\begin{figure}[htbp]
    \centerline{\includegraphics[scale=0.40]{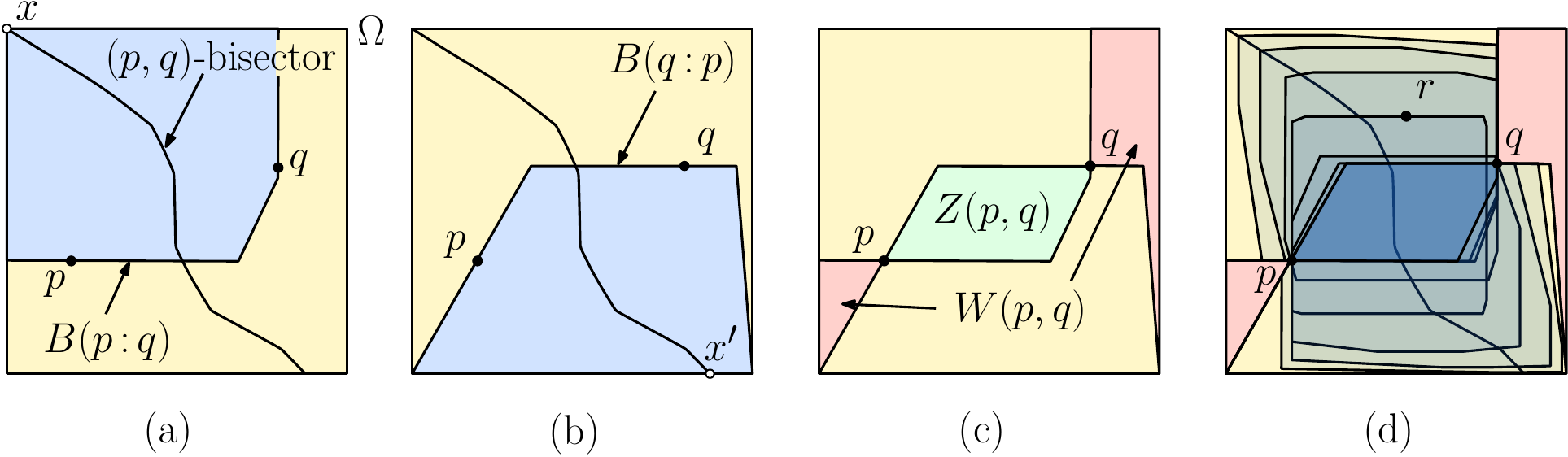}}
    \caption{The overlap region $Z(p,q)$ and the outer region $W(p,q)$.}
    \label{fig:overlap-outer}
\end{figure}

\begin{restatable}{lemma}{forbidden} \label{lem:forbidden}
A triangle $\triangle p q r \subseteq \interior(\Omega)$ admits a Hilbert circumcircle if and only if $r \notin Z(p,q) \cup W(p,q)$. 
\end{restatable}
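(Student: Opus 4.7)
The plan is to parametrize the $(p,q)$-bisector and sweep out the continuous family of Hilbert balls $B(c) := B(c, d_\Omega(c,p))$ for $c$ on the bisector. Since $c$ is equidistant from $p$ and $q$, each such ball contains both $p$ and $q$ on its boundary, and a Hilbert circumcircle of $\triangle p q r$ is precisely some $B(c)$ with $r \in \bd B(c)$. Define the continuous function $h_r(c) := d_\Omega(c, p) - d_\Omega(c, r)$; then $r$ is inside, on, or outside $B(c)$ according to the sign of $h_r(c)$.

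I would first establish the limiting behavior at the two endpoints of the bisector. By Lemma~\ref{lem:ball-at-infty}, as $c$ approaches the endpoint $x$ along its tangent direction, $B(c)$ converges (in the Hausdorff sense) to $B(p \ST q)$, and analogously $B(c) \to B(q \ST p)$ as $c \to x'$. This convergence, combined with $r \in \interior(\Omega)$, forces $h_r(c)$ to have a definite sign near each endpoint: strictly positive near $x$ if $r \in \interior(B(p \ST q))$, strictly negative if $r$ lies outside $B(p \ST q)$, and symmetrically near $x'$.

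For the \emph{sufficiency} direction, assume $r \notin Z(p,q) \cup W(p,q)$; then $r$ lies strictly inside exactly one of $B(p \ST q)$ and $B(q \ST p)$, so $h_r$ takes opposite signs near the two endpoints. The intermediate value theorem yields an interior $c_0$ with $h_r(c_0) = 0$, and $B(c_0)$ is the required circumcircle. For \emph{necessity}, suppose for contradiction that $r \in Z(p,q) \cup W(p,q)$ yet a circumcircle $B(c_0)$ exists. Lemma~\ref{lem:circum-unique} gives uniqueness of $c_0$, so $h_r$ has exactly one zero on the bisector. The open sets $T_{\mathrm{in}} = \{c : h_r(c) > 0\}$ and $T_{\mathrm{out}} = \{c : h_r(c) < 0\}$ partition the bisector minus $\{c_0\}$, and by connectedness each of the two open arcs split off by $c_0$ lies entirely in one of them. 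If $r \in \interior(Z)$, both endpoint neighborhoods lie in $T_{\mathrm{in}}$, forcing both arcs into $T_{\mathrm{in}}$ and making $c_0$ a tangential (non-sign-changing) zero of $h_r$; an analogous statement holds if $r \in \interior(W)$.

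The main obstacle is excluding this tangential case. My plan is to use a perturbation argument: at a tangential zero, $B(c_0)$ passes through $r$ while every nearby $B(c_0 \pm \varepsilon)$ contains $r$ strictly in its interior. Since $B(c_0)$ and $B(c_0 \pm \varepsilon)$ are distinct convex polygons sharing $p$ and $q$ on their boundaries, analyzing the local structure of $\bd B(c_0)$ near $r$ via the concurrency characterization in Lemma~\ref{lem:bisector-prop} should yield a projective degeneracy incompatible with $p, q, r$ being in general position. Residual boundary cases---where $r$ lies on $\bd B(p \ST q)$ or $\bd B(q \ST p)$---correspond to the unique candidate circumcircle degenerating to a ball at infinity, which is not a proper Hilbert ball, so these cases are consistent with the statement of the lemma.
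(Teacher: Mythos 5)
Your overall strategy --- parametrize the $(p,q)$-bisector and sweep the pencil of Hilbert balls $B_t(p{\ST}q)$ through $p,q$ --- is exactly the paper's. Your intermediate-value argument for sufficiency is a clean, explicit version of what the paper leaves implicit (``ranging from $B(p{\ST}q)$ to $B(q{\ST}p)$''), and it is a genuine improvement in that direction. Where the two proofs diverge, and where yours has a gap, is the necessity direction.

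You correctly identify the obstacle: you must rule out a \emph{tangential} zero of $h_r$, i.e.\ a parameter $c_0$ where $r \in \bd B(c_0)$ but $r$ lies in the interior of $B(c)$ for all nearby $c$. Your proposed fix --- looking for a ``projective degeneracy'' via Lemma~\ref{lem:bisector-prop} that contradicts general position --- does not go anywhere: a tangential zero is not a projective coincidence among $p,q,r$, it is a statement about the one-parameter family of balls, and there is no general-position hypothesis in Lemma~\ref{lem:forbidden} to appeal to. The paper closes this gap differently, and more cheaply, by appealing again to Lemma~\ref{lem:circum-unique}: for $s \neq t$ the boundaries $\bd B_s$ and $\bd B_t$ meet only at $p$ and $q$, so the left-of-$pq$ arcs form a nested, continuously varying family between the arc of $B(p{\ST}q)$ and the arc of $B(q{\ST}p)$ (and symmetrically on the right). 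This nesting forces $h_r$ to be monotone along each open arc split off by a zero, so the zero must be a sign change. Crucially, the nesting also uses the one fact your argument never touches: $B(p{\ST}q)$ and $B(q{\ST}p)$ each contact $\bd\Omega$ (at $x$ and $x'$ respectively), whereas every proper $B_t$ lies strictly in $\interior(\Omega)$. That is what pins the extreme arcs to the boundary of $\Omega$ and makes the sweep fill exactly the symmetric difference $B(p{\ST}q)\,\triangle\,B(q{\ST}p) = \Omega \setminus (Z \cup W)$. Without an argument of that flavor, the tangential case is not excluded and the necessity direction is incomplete.

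One smaller point: in passing from ``$r \in \interior(B(p{\ST}q))$'' to ``$h_r > 0$ near $x$,'' you are relying on Hausdorff convergence of $B(x_\delta,p)$ to $B(p{\ST}q)$ being strong enough to preserve strict containment of the fixed interior point $r$. That is true, but it is worth stating explicitly that Lemma~\ref{lem:ball-at-infty} gives convergence of the polygon boundaries, from which interior/exterior stability of a fixed point away from the limiting boundary follows; as you note, the cases $r \in \bd B(p{\ST}q)$ or $r \in \bd B(q{\ST}p)$ must be set aside as degenerate, which is consistent with the lemma since the limiting ``circumcircle'' is then not a proper Hilbert ball.
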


\begin{proof}
Let $x,x' \in \bd \Omega$ denote the endpoints of the $(p,q)$-bisector. Since the bisector is a curve, we can parameterize points along this curve by $s:[0,1]\rightarrow \Omega$ so that $s(0) = x$ and $s(1) = x'$. For each $t \in (0,1)$ there exists a unique Hilbert ball centered at $s(t)$ and passing through $p$ and $q$, denoted $B_t(p {\ST} q)$. Lemma~\ref{lem:circum-unique} implies that any point $r$ other than $p$ and $q$ that lies on the boundary of one of these circles, can lie on at most one such circle. That is, the set $B_t(p {\ST} q)$ defines a ``pencil'' of Hilbert circles (see Figure~\ref{fig:overlap-outer}(d)) ranging from $B(p {\ST} q) = B_0(p {\ST} q)$ to $B(q {\ST} p) = B_1(p {\ST} q)$. Conversely, any Hilbert circumcircle passing through $p$ and $q$ must be generated by a point on the $(p,q)$-bisector, and therefore every point $r$ that completes a Hilbert circumcircle lies on one of the circles in this pencil. It is evident therefore that $r$ cannot lie within either $Z(p,q)$ nor $W(p,q)$, for otherwise the boundary of the resulting ball would cross the boundary of either $B(p {\ST} q)$ or $B(q {\ST} p)$, thus violating Lemma~\ref{lem:circum-unique} at this crossing point.
\end{proof}

As shown in Figure~\ref{fig:voronoi-delaun}, the Delaunay triangulation need not cover the convex hull of the set of sites. We refer to the region that is covered as the \emph{Hilbert hull} of the sites. Later, we will present an algorithm for computing the Hilbert hull. The following lemma will be helpful. It establishes a nesting property for the overlap regions.

\begin{restatable}{lemma}{zNesting} \label{lem:z-nesting}
If $r \in Z(p,q)$ then $Z(p,r) \subset Z(p,q)$.
\end{restatable}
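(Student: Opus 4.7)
The plan is to first observe, as in the proof of Lemma~\ref{lem:forbidden}, that the pencil $\{B_t(p\ST q) : t \in [0,1]\}$ of Hilbert balls through $p$ and $q$, parametrized continuously between its two endpoint balls $B(p\ST q)$ and $B(q\ST p)$, has the following foliation property: a point lies in the interior of $Z(p,q)$ if and only if it lies in the interior of every ball of the pencil. Indeed, if $r \in Z(p,q)$, then by Lemma~\ref{lem:forbidden} the triangle $\{p,q,r\}$ has no Hilbert circumcircle, so $r$ never sits on the boundary of any ball in the pencil; since $r$ lies in the interior of both endpoint balls, continuity along the pencil forces $r$ to remain strictly inside every ball. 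The analogous statement holds for $Z(p,r)$ and the pencil $\{B_s(p\ST r)\}$.

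Given this reformulation, it suffices to show that for every $t \in [0,1]$ there exists $s \in [0,1]$ with $B_s(p\ST r) \subseteq B_t(p\ST q)$. Any point in the interior of $Z(p,r)$ then lies in the interior of $B_s(p\ST r)$, hence in the interior of $B_t(p\ST q)$, for every $t$; so it lies in the interior of $Z(p,q)$, and $Z(p,r) \subseteq Z(p,q)$ follows by passing to closures.

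To construct $B_s(p\ST r) \subseteq C := B_t(p\ST q)$, let $c$ be the center of $C$ with radius $\rho = d_{\Omega}(c,p) = d_{\Omega}(c,q)$. Since $r \in Z(p,q)$ lies in the interior of $C$, we have $d_{\Omega}(c,r) < \rho$, which places $c$ on the $r$-side of the $(p,r)$-bisector while $p$ lies on the $p$-side; hence the straight segment from $c$ to $p$ crosses the $(p,r)$-bisector at some interior point $c'$. Take $B_s(p\ST r)$ to be the Hilbert ball centered at $c'$ with radius $d_{\Omega}(c',p) = d_{\Omega}(c',r)$, so that $p$ and $r$ lie on its boundary. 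For any $z$ in this ball, since straight lines are Hilbert geodesics we have $d_{\Omega}(c,c') + d_{\Omega}(c',p) = d_{\Omega}(c,p) = \rho$, and the triangle inequality gives
\[
d_{\Omega}(c,z) \;\leq\; d_{\Omega}(c,c') + d_{\Omega}(c',z) \;\leq\; d_{\Omega}(c,c') + d_{\Omega}(c',p) \;=\; \rho,
\]
so $z \in C$, as desired.

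The main obstacle is the endpoint case, where $C$ is one of the balls at infinity and its ``center'' lies on $\bd\Omega$, so that Hilbert distances to $c$ are no longer finite. Here the triangle-inequality computation must be replaced by its Busemann-function analog along the chord from the boundary endpoint $x$ to $p$: the identity $b_x(p) - b_x(c') = d_{\Omega}(c',p)$ for $c'$ on the chord plays the role of the geodesic equality, yielding the same containment. Verifying that this chord meets the $(p,r)$-bisector in the interior of $\Omega$ requires a small case check, which follows from $x$ lying on the $r$-side arc of $\bd\Omega$ --- obtained as the limit of nearby interior pencil centers, all of which sit on the $r$-side.
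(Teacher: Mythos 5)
Your proof is correct, and it takes a genuinely different route from the paper. The paper argues topologically: it uses the non-intersection of the $(p,r)$- and $(q,r)$-bisectors (a consequence of Lemma~\ref{lem:forbidden}) together with the induced partition of $\Omega$, and deduces that the $(p,u)$- and $(q,u)$-bisectors for $u \in Z(p,r)$ also fail to intersect, so that $u$ avoids $W(p,q)$. Your argument instead reformulates $Z(p,q)$ via the pencil of circumscribing balls from Lemma~\ref{lem:forbidden} (a point is interior to $Z(p,q)$ iff it is interior to every ball of the pencil), and then establishes the key inclusion $B_s(p\ST r) \subseteq B_t(p\ST q)$ by picking the center $c'$ where the geodesic $[c,p]$ crosses the $(p,r)$-bisector and applying the triangle inequality together with the additivity of Hilbert distance along straight segments. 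This metric argument is cleaner and easier to verify than the paper's bisector-topology sketch, and it nicely generalizes the Euclidean picture of nested pencils.

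One point worth emphasizing: since $Z(p,q) = B(p\ST q) \cap B(q\ST p)$ is the intersection of the two \emph{endpoint} balls, the only cases you actually need for the final containment are $t \in \{0,1\}$ --- precisely the balls at infinity. The finite-$t$ computation is a useful warm-up, but the Busemann-function step you relegate to the last paragraph is the load-bearing part of the proof, so it deserves to be spelled out (the identity you need is that for $c'$ on the chord $[x,p]$ one has $b_x(c') = b_x(p) - d_{\Omega}(c',p)$, and the $1$-Lipschitz property $b_x(z) \leq b_x(c') + d_{\Omega}(c',z)$ then gives $b_x(z) \leq b_x(p)$, i.e.\ $z \in B(p\ST q)$). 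Alternatively, you can sidestep the Busemann formalism entirely: having shown $B_{s_t}(p\ST r) \subseteq B_t(p\ST q)$ for all $t \in (0,1)$, take $t \to 0$; the centers $c'_t$ live on the compact $(p,r)$-bisector, so a subsequential limit $c'_0$ exists and yields a pencil member $B_{s_0}(p\ST r) \subseteq B(p\ST q)$ by Hausdorff convergence (Lemma~\ref{lem:ball-at-infty}). Either way, you should also note explicitly that the strict interiority $r \in \interior Z(p,q)$ (needed so that $x$ is not itself a $(p,r)$-bisector endpoint, and so that $f(t) = d_{\Omega}(c_t,r) - \rho_t$ has strictly negative limits at $t = 0,1$) is the sense in which the hypothesis $r \in Z(p,q)$ must be read.
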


\begin{proof}
Given two points $p$ and $q$, let $r$ be a point in $Z(p,q)$. Consider that the bisector between $p$ and $r$ and the bisector between $q$ and $r$ split $\Omega$ into three pieces that can only connect at infinity. Let $u$ be a point in $Z(p,r)$. Then this would necessarily require that the bisectors between $p$ and $u$ and $u$ and $r$ would split $\Omega$ again such that they only can connect at infinity. Note that every point on the $u,r$ bisector is necessarily closer to $u$ than to $r$ and therefore $q$. Hence $u$ must split space between $p$ and $q$ as well. Since $u$ was defined to be in one of the balls at infinity it must therefore by in $Z(p,q)$ as it cannot lie in $W(p,q)$.
\end{proof}

\section{Computing Circumcircles} \label{sec:compute-circum}

A fundamental primitive in the Euclidean Delaunay triangulation algorithm is the so-called \emph{in-circle test}~\cite{guibas1992randomized}. Given a triangle $\triangle p q r$ and a fourth site $s$, the test determines whether $s$ lies within the circumscribing Hilbert ball for the triangle (if such a ball exits). In this section, we present an algorithm which given any three sites either computes the Hilbert circumcircle for these sites, denoted $B(p {\ST} q {\ST} r)$, or reports that no circumcircle exists. The in-circle test reduces to checking whether $d_{\Omega}(s, c) < \rho$, which can be done in $O(\log m)$ time 
as observed in Section~\ref{sec:prelim}.
%

\begin{lemma} \label{lem:compute-circum}
Given a convex $m$-sided polygon $\Omega$ and triangle $\triangle p q r \subset \interior(\Omega)$, in $O(\log^3 m)$ time it is possible to compute $B(p {\ST} q {\ST} r)$ or to report that no ball exists.
\end{lemma}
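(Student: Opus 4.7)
The plan is to compute the center $c$ of $B(p \ST q \ST r)$ as the unique intersection (if it exists, by Lemma~\ref{lem:circum-unique}) of the $(p,q)$-bisector and the $(p,r)$-bisector. The overall strategy is a binary search along rays from $p$, where each probe costs $O(\log^2 m)$ time via Lemma~\ref{lem:bisector-ray-hit}; with $O(\log m)$ probes in total, the running time is $O(\log^3 m)$.

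First I apply Lemma~\ref{lem:bisector-endpoint} twice, in $O(\log^2 m)$ time, to compute the endpoints on $\bd \Omega$ of the two bisectors. Let $I_q$ and $I_r$ denote the angular intervals, seen from $p$, of rays that cross the $(p,q)$- and $(p,r)$-bisectors respectively; the circumcenter must lie on a ray whose angle is in $I_q \cap I_r$. For such an angle $\theta$, let $y_q(\theta)$ and $y_r(\theta)$ denote the intersections of the corresponding ray with the two bisectors (each computable in $O(\log^2 m)$ by Lemma~\ref{lem:bisector-ray-hit}), and define the comparator $\Delta(\theta) = \|p - y_q(\theta)\| - \|p - y_r(\theta)\|$. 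A zero of $\Delta$ yields the circumcenter $c = y_q(\theta^*) = y_r(\theta^*)$. Since a second zero would yield a second circumcircle for $\triangle p q r$, contradicting Lemma~\ref{lem:circum-unique}, the continuous function $\Delta$ has at most one zero and so changes sign at most once across $I_q \cap I_r$.

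We first evaluate $\Delta$ at the two endpoints of $I_q \cap I_r$. If the signs agree, then no interior crossing of the bisectors exists and we report that no circumcircle exists; this is precisely the forbidden case $r \in Z(p,q) \cup W(p,q)$ identified by Lemma~\ref{lem:forbidden}. Otherwise we binary search over the spokes of $p$ lying in $I_q \cap I_r$, comparing signs of $\Delta$ and halving the candidate angular interval at each probe. After $O(\log m)$ probes we localize the solution to a single sector of $p$ between two consecutive spokes; within this sector, a further binary search over the spokes of $q$ (and then $r$) that cross the sector fixes the underlying edges of $\bd \Omega$ supporting the chords $\overline{q\, y_q(\theta)}$ and $\overline{r\, y_r(\theta)}$. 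Once all these edges are fixed, the two bisectors are single conic arcs and their intersection $c$ is determined in $O(1)$ time by solving a constant-degree algebraic system. A final verification in $O(\log m)$ time---checking that $d_{\Omega}(c,p) = d_{\Omega}(c,q) = d_{\Omega}(c,r)$---either confirms the answer and returns $B(c, d_\Omega(c,p))$, or reports failure.

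The main obstacle is establishing the sign-monotonicity of $\Delta$ cleanly through the joints where the underlying conic pieces of the bisectors change, together with handling degenerate probes where the ray lies along a spoke. A related subtlety is nonexistence detection: $I_q \cap I_r$ can be nonempty without the bisector curves actually meeting in $\interior(\Omega)$, which is exactly the forbidden configuration identified by Lemma~\ref{lem:forbidden} and must be flagged at the endpoint-evaluation step rather than discovered only after a failed binary search.
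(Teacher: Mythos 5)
Your approach matches the paper's at its core: compute both bisector endpoints via Lemma~\ref{lem:bisector-endpoint}, then binary search over the spokes of $p$, probing each candidate ray with Lemma~\ref{lem:bisector-ray-hit} and classifying it by which bisector the ray meets first — your sign of $\Delta(\theta)$ is exactly the paper's \emph{early}/\emph{late} test — for a total of $O(\log m) \times O(\log^2 m) = O(\log^3 m)$. You also correctly flag the sign-monotonicity of $\Delta$ across sector joints as the thing requiring justification; the paper addresses that via Lemma~\ref{lem:pseudoline} (bisectors meet transversely in a single point). Note that Lemma~\ref{lem:circum-unique} alone gives at most one zero of $\Delta$, not a sign change, so you should also invoke the transversality claim of Lemma~\ref{lem:pseudoline} to rule out a tangential touching and make the sign-test at the interval endpoints a valid existence check. (The paper's existence test is phrased differently — check whether the four bisector endpoints alternate around $\bd\Omega$ — but is equivalent.)

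The one place you genuinely diverge is the last step, and your version is under-specified. After localizing $c$ to a single sector about $p$, the paper simply reruns the identical $O(\log^3 m)$ binary search twice more, once centered at $q$ against the $(q,p)$- and $(q,r)$-bisectors and once centered at $r$, yielding three double wedges whose common intersection pins down all the boundary edges of $\Omega$ that determine both conic arcs; the center is then solved in $O(1)$ time. You instead propose secondary binary searches ``over the spokes of $q$ (and then $r$) that cross the sector,'' but you don't give the comparator for those inner searches or argue that the relevant quantity is monotone across the spokes of $q$ restricted to $p$'s sector. That part would need to be fleshed out; the paper's symmetric rerun sidesteps the issue entirely at the same asymptotic cost.
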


The remainder of the section is devoted to the proof. The circumscribing ball exists if and only if there is a point equidistant to all three, implying that $(p,q)$- and $(p,r)$-bisectors intersect at some point $c \in \interior(\Omega)$. The following technical lemma shows that bisectors intersect crosswise.

\begin{restatable}{lemma}{pseudoline} \label{lem:pseudoline}
Given three non-collinear points $p, q, r \in \interior(\Omega)$, the $(p,q)$- and $(p,r)$-bisectors have endpoints lying on $\bd \Omega$. If they intersect within $\interior(\Omega)$, they intersect transversely in a single point (see Figure~\ref{fig:circumcircle-1}).
\end{restatable}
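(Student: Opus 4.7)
The plan is to prove the lemma in three stages. First, I will verify that each bisector has endpoints on $\bd\Omega$. This is essentially immediate from Lemma~\ref{lem:bisector-endpoint}, whose construction explicitly produces the two endpoints of any Hilbert bisector by binary searching along the upper and lower chains of $\bd\Omega$ cut by the line $\overleftrightarrow{p q}$. Since the bisector is a connected piecewise-conic curve (by the characterization from Section~\ref{sec:hilbert-vor}), its two ends must terminate on $\bd\Omega$.

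Next, I will show that if the $(p,q)$- and $(p,r)$-bisectors meet in $\interior(\Omega)$, they meet in a single point. Suppose they intersect at some $c \in \interior(\Omega)$. Then by definition of the bisectors, $d_\Omega(c,p) = d_\Omega(c,q) = d_\Omega(c,r)$, so the Hilbert ball of radius $d_\Omega(c,p)$ centered at $c$ is a Hilbert circumcircle of $\triangle pqr$. Lemma~\ref{lem:circum-unique} guarantees that this ball is unique, so $c$ is the unique intersection point.

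For transversality, I will introduce $\phi_{ab}(x) = d_\Omega(x,a) - d_\Omega(x,b)$, so that the $(a,b)$-bisector is the zero set of $\phi_{ab}$. Since $c$ is equidistant from all three sites, the third bisector $(q,r)$ also passes through $c$, and the identity $\phi_{qr} = \phi_{pr} - \phi_{pq}$ yields the gradient relation $\nabla\phi_{qr}(c) = \nabla\phi_{pr}(c) - \nabla\phi_{pq}(c)$. Suppose for contradiction that the $(p,q)$- and $(p,r)$-bisectors meet at $c$ tangentially. Then $\nabla\phi_{pq}(c)$ and $\nabla\phi_{pr}(c)$ are parallel, and hence so is $\nabla\phi_{qr}(c)$; thus all three bisectors share a common tangent line $L$ at $c$. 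I will rule this out by examining the three Voronoi regions $R_p, R_q, R_r$ of $\{p, q, r\}$ near $c$. Because $c$ is equidistant from all three sites, each of the three regions must be incident to $c$, so traversing any small loop around $c$ must enter all three. This in turn forces the three bisector arcs emanating from $c$ to do so in three distinct angular directions; a shared tangent $L$ admits only the two directions $\pm L$, precluding the presence of the third Voronoi region and contradicting equidistance.

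The main obstacle is this last step. The subtlety is that the three bisector curves, being locally conic, could in principle touch $L$ tangentially while still partitioning a neighborhood of $c$ into three regions via their second-order curvatures, creating a thin second-order sector for the third site. Ruling this out rigorously requires analyzing how the curvatures of the three bisector conics at $c$ are constrained by the identity $\phi_{qr} = \phi_{pr} - \phi_{pq}$, and invoking the non-collinearity of $p$, $q$, $r$ to force the three gradient directions $\nabla\phi_{pq}(c)$, $\nabla\phi_{pr}(c)$, $\nabla\phi_{qr}(c)$ to be pairwise linearly independent, yielding three genuinely distinct bisector tangents at $c$.
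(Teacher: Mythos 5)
Your first two stages are on solid ground. The uniqueness of the intersection point via Lemma~\ref{lem:circum-unique} is exactly the paper's argument. For the endpoints, citing Lemma~\ref{lem:bisector-endpoint} works (its proof does exhibit the endpoint by a monotonicity argument), though the paper's logic runs the other way: star-shapedness of Voronoi cells implies bisectors are not closed loops, and since each bisector is a simple curve it must terminate at two points of $\bd\Omega$.

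The transversality step is where the gap lies, and you have correctly diagnosed it yourself. Your gradient argument faces two obstacles beyond the second-order issue you flagged. First, $d_\Omega(\cdot,p)$ is only piecewise smooth when $\Omega$ is a polygon (the bisector is a piecewise conic with joints on the spokes), so $\nabla\phi_{pq}(c)$ need not exist at the intersection point; you would need one-sided derivatives and a case analysis. Second, and more fundamentally, the step ``non-collinearity of $p,q,r$ forces $\nabla\phi_{pq}(c)$ and $\nabla\phi_{pr}(c)$ to be independent'' is exactly what you must prove, and in the Hilbert setting there is no analogue of the Euclidean fact that the bisector normal points along $q-p$; so the reduction does not obviously close. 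Your proposal ends by listing what would need to be shown rather than showing it, and the curvature route does not appear to lead anywhere shorter than a fresh argument. The paper instead settles transversality with a purely topological appeal to star-shapedness: if the two bisectors met tangentially at $c$, then near $c$ the curves would fail to separate a neighborhood into the three Voronoi sectors properly, producing points on ``both sides'' of the intersection that are closer to $p$ than to $q$ or $r$, which would disconnect (or pinch) $\Vor(p)$ and contradict the star-shapedness of Hilbert Voronoi cells established in~\cite{gezalyan2021voronoi}. That route avoids any differentiability assumptions and uses a structural property already available in the paper. I recommend replacing your gradient analysis with this star-shapedness argument.
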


\begin{proof}
As shown in \cite{gezalyan2021voronoi}, star-shapedness of Voronoi cells implies that bisectors are not closed loops. Since each bisector is a simple curve, a bisector has two endpoints on the boundary of $\Omega$. If the $(p,q)$- and $(p,r)$-bisectors intersect, the intersection point is equidistant from all three, and hence is the center of a circumcircle. By the above lemma implies that there cannot be more than one intersection. The intersection must be transverse. If not, the regions on opposite sides of the intersection point would be closer to $p$ than to $q$ or $r$, and this would violate the property the Voronoi cells are star-shaped.
\end{proof}

\begin{figure}[htbp]
    \centerline{\includegraphics[scale=0.40]{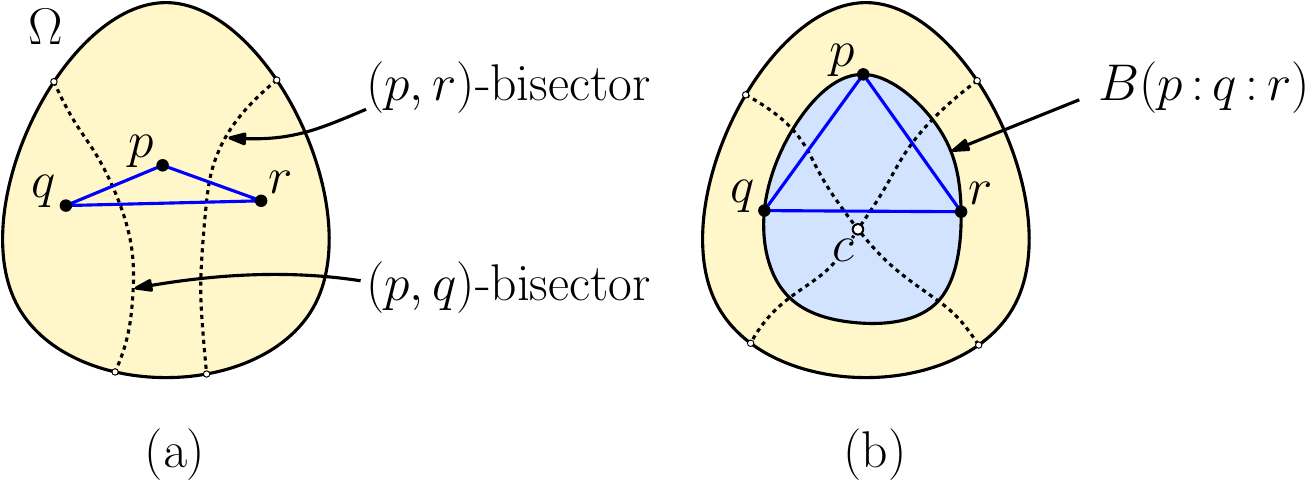}}
    \caption{Circumcircles and bisector intersection.}
    \label{fig:circumcircle-1}
\end{figure}

It follows that the $(p,q)$- and $(p,r)$-bisectors subdivide the interior of $\Omega$ into either three or four regions (three if they do not intersect, and four if they do). We can determine which is this case by invoking Lemma~\ref{lem:bisector-endpoint} to compute the endpoints of these bisectors in $O(\log^2 m)$ time. If they alternate between $(p,q)$ and $(p,r)$ along the boundary of $\Omega$, then the bisectors intersect, and otherwise they do not. In the latter case, there is no circumcircle for $p$, $q$, and $r$, and hence, no possibility of violating the circumcircle condition. (Note that this effectively provides an $O(\log^2 m)$ test for Lemma~\ref{lem:forbidden}.) Henceforth, we concentrate on the former case.

Let us assume, without loss of generality, that $\triangle p q r$ is oriented counterclockwise. Let $v_q$ denote the endpoint of the $(p,q)$-bisector that lies to the right of the oriented line $\overrightarrow{p q}$. Let $v_r$ denote the endpoint of the $(p,r)$-bisector that lies to the left of the oriented line $\overrightarrow{p r}$ (see Figure~\ref{fig:circumcircle-2}(a)). 

\begin{figure}[htbp]
    \centerline{\includegraphics[scale=0.40]{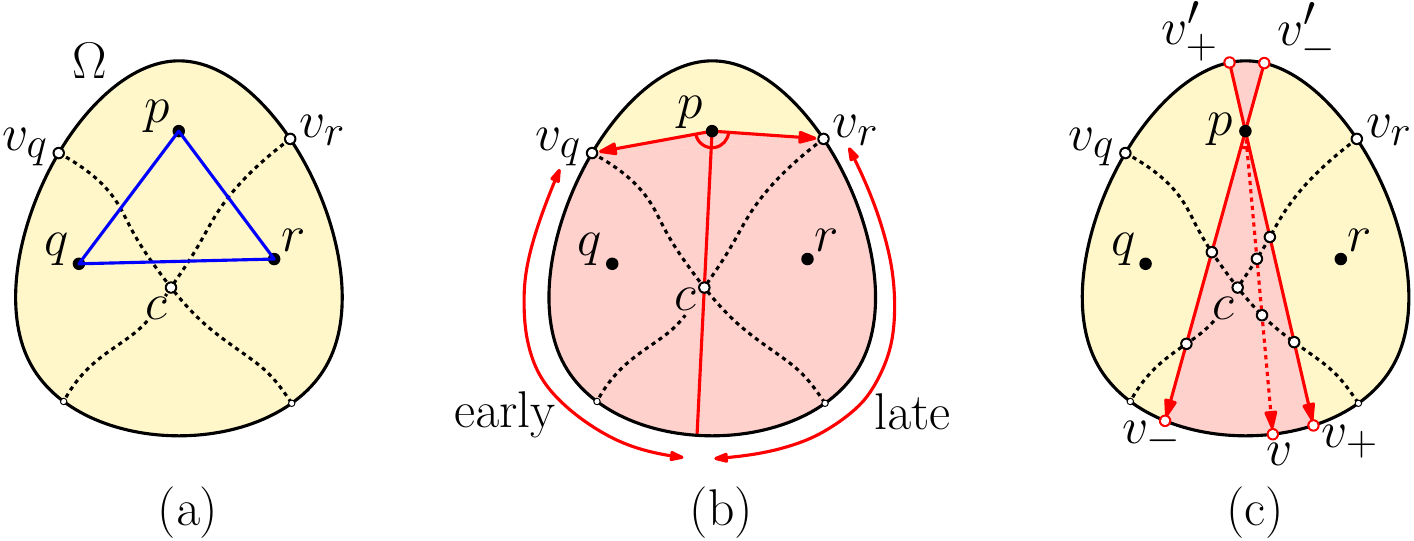}}
    \caption{Computing the center of $\triangle{p q r}$.}
    \label{fig:circumcircle-2}
\end{figure}

Because Voronoi cells are star-shaped, it follows that the vector from $p$ to the desired circumcenter $c$ lies in the counterclockwise angular interval from $\overrightarrow{p v_q}$ to $\overrightarrow{p v_r}$ (see Figure~\ref{fig:circumcircle-2}(b)). The key to the search is the following observation. In the angular region from $\overrightarrow{p v_q}$ to $\overrightarrow{p c}$, any ray shot from $p$ intersects the $(p,q)$-bisector before hitting the $(p,r)$-bisector (if it hits the $(p,r)$-bisector at all). On the other hand, in the angular region from $\overrightarrow{p c}$ to $\overrightarrow{p v_r}$, any ray shot from $p$ intersects the $(p,r)$-bisector before hitting the $(p,q)$-bisector (if it hits the $(p,q)$-bisector at all). We say that the former type of ray is \emph{early} and the latter type is \emph{late}.

Let $v_{-}$ and $v_{+}$ denote the points on $\bd \Omega$ that bound the current search interval about $p$ (see Figure~\ref{fig:circumcircle-2}(c)). We will maintain the invariant that the ray $\overrightarrow{p v_{-}}$ is early and $\overrightarrow{p v_{+}}$ is late. Initially, $v_{-} = v_q$ and $v_{+} = v_r$. Let $v'_{-}$ and $v'_{+}$ denote the opposite endpoints of the chords $\overline{p v_{-}}$ and $\overline{p v_{+}}$. Consider the portion of the boundary of $\Omega$ that lies in the counterclockwise from $v_{-}$ and $v_{+}$. If the angle $\angle v_{-} p v_{+}$ is smaller than $\pi$, also consider the portion of the boundary of $\Omega$ that lies in the counterclockwise from $v'_{-}$ and $v'_{+}$. With each probe, we sample the median vertex $v$ from whichever of these two boundary portions that contains the larger number of vertices. If $v$ comes from the interval $[v_{-},v_{+}]$, we probe along the ray $\overrightarrow{p v}$, and if it comes from the complementary interval, $[v'_{-},v'_{+}]$, we shoot the ray from $p$ in the opposite direction from $v$. We then apply Lemma~\ref{lem:bisector-ray-hit} twice to determine in $O(\log^2 m)$ time where this ray hits the $(p,q)$- and $(p,r)$-bisectors (if at all). Based on the results, we classify this ray as being early or late and recurse on the appropriate angular subinterval. When the search terminates, we have determined a pair of consecutive spokes about $p$ that contain $c$. 

Because each probe eliminates at least half of the vertices from the larger of the two boundary portions, it follows that at $O(\log m)$ probes, we have located $c$ to within a single pair of consecutive spokes around $p$. Since each probe takes $O(\log^2 m)$ time, the entire search takes $O(\log^3 m)$ time. 

We repeat this process again for $r$ and $q$. The result is three double wedges defined by consecutive spokes, one about each site. It follows from our earlier remarks from Section~\ref{sec:hilbert-vor} that within the intersection of these regions, the bisectors are simple conics. We can compute these conics in $O(1)$ times \cite{bumpus2023software} and determine their intersection point, thus yielding the desired point $c$. The radius $\rho$ of the ball can also be computed in $O(1)$ time.

\section{Building the Triangulation} \label{sec:construction}

In this section we present our main result, a randomized incremental algorithm for constructing the Delaunay triangulation $\DT(P)$ for a set of $n$ sites $P$ in the interior of an $m$-sided convex polygon $\Omega$. Our algorithm is loosely based on a well-known randomized incremental algorithm for the Euclidean case~\cite{guibas1992randomized, deberg2010book}. 

\subsection{Orienting and Augmenting the Triangulation} \label{sec:delaun-augment}

In this section we introduce some representational conventions for the sake of our algorithm. First, we will orient the elements of the triangulation. Given $p,q \in \interior(\Omega)$ define the \emph{endpoint} of the $(p,q)$-bisector to be the endpoint that lies to the left of the directed line $\overrightarrow{p q}$. (It follows from the star-shapedness of Voronoi cells that the bisector endpoints intersect $\bd \Omega$ on opposite sides of this line.) The opposite endpoint will be referred to as the $(q,p)$-bisector endpoint. When referring to a triangle $\triangle p q r$, we will assume that the vertices are given in counterclockwise order. Also, edges in the triangulation are assumed to be directed, so we can unambiguously reference the left and right sides of a directed edge $p q$.

As mentioned earlier, the triangulation need not cover the convex hull of the set of sites (see Figure~\ref{fig:delaun-augment}(a)). For the sake of construction, it will be convenient to augment the triangulation with additional elements, so that all of $\Omega$ is covered. First, we add elements to include the endpoints of the Voronoi bisectors on $\bd \Omega$. For each edge $p q$ such that the external face of the triangulation lies to its left, the endpoint of the $(p,q)$-bisector intersects the boundary of $\Omega$. Letting $x$ denote this boundary point, we add a new triangle $\triangle p q x$, called a \emph{tooth}. (See the green shaded triangles in Figure~\ref{fig:delaun-augment}(b).)

\begin{figure}[htbp]
    \centerline{\includegraphics[scale=0.40]{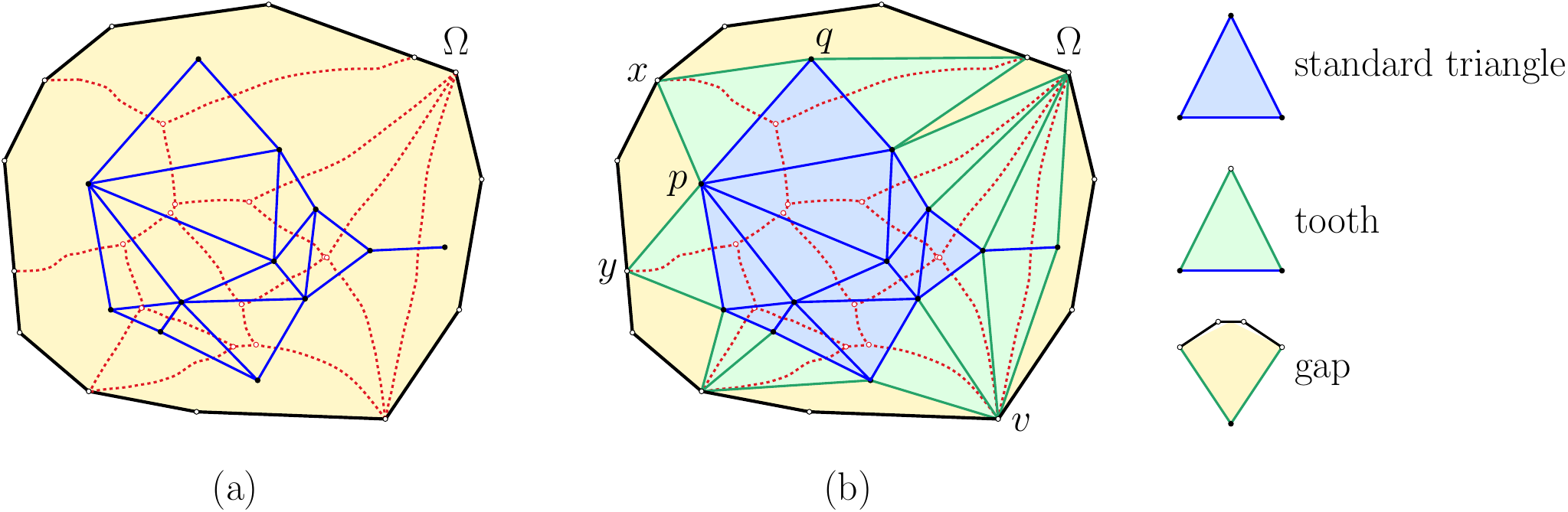}}
    \caption{The augmented triangulation.}
    \label{fig:delaun-augment}
\end{figure}

Finally, the portion of $\Omega$ that lies outside of all the standard triangles and teeth consists of a collection of regions, called \emph{gaps}, each of which involves a single site, the sides of two teeth, and a convex polygonal chain along $\bd \Omega$. While these shapes are not triangles, they are defined by three points, and we will abuse the notation $\triangle p x y$ to refer to the gap defined by the site $p$ and boundary points $x$ and $y$ (see Figure~\ref{fig:delaun-augment}(b)). 

Even when sites are in general position, multiple teeth can share the same boundary vertex. For example, in Figure~\ref{fig:delaun-augment}(b) three teeth meet at the same boundary vertex $v$. In our augmented representation, it will be convenient to treat these as three separate teeth, meeting at three distinct (co-located) vertices, separated by two degenerate (zero-width) gaps. This allows us to conceptualize the region outside of the standard triangulation as consisting of an alternating sequence of teeth and gaps. The following lemma summarizes a few useful facts about the augmented representation.

\begin{restatable}{lemma}{delaunAugment} \label{lem:delaun-augment}
Given a set of $n$ point sites in the interior of a convex polygon $\Omega$:
\begin{itemize}
\item The augmented Delaunay triangulation has complexity $O(n)$.
\item The region covered by standard triangles is connected. 
\item Each standard triangle and each tooth satisfies the empty circumcircle property.
\item The region outside the standard triangles consists of an alternating sequence of teeth and gaps. 
\item For any $p \in \Omega$, membership in any given triangle, tooth, or gap can be determined in $O(1)$ time.
\end{itemize} 
\end{restatable}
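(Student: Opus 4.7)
My plan is to address each of the five claims in turn, using structural results already in place. The complexity bound follows immediately from Lemma~\ref{lem:planar-spanner}: $\DT(P)$ is a planar straight-line graph on $n$ vertices, so Euler's formula yields $O(n)$ standard triangles and $O(n)$ boundary edges. Each boundary edge of the Hilbert hull contributes exactly one tooth, and teeth and gaps alternate around $\bd \Omega$, so there are $O(n)$ of each. For the connectivity of the standard triangulated region, I would pass to the dual: the graph whose nodes are standard triangles and whose edges correspond to shared Delaunay edges is isomorphic to the interior portion of the Hilbert Voronoi $1$-skeleton (a Voronoi vertex per triangle, a bounded Voronoi edge per adjacency). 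Because the Hilbert Voronoi diagram is a connected subdivision of $\interior(\Omega)$ into star-shaped cells~\cite{gezalyan2021voronoi}, this dual graph is connected, so the union of standard triangles is connected.

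The empty-circumcircle property for standard triangles is inherited directly from Voronoi duality: the circumcenter of each such triangle is a Voronoi vertex, hence equidistant from its three sites with no other site closer. For a tooth $\triangle pqx$, the witnessing empty ball is the ball at infinity $B(p {\ST} q) = B_u(x, p)$ from Section~\ref{sec:ball-at-infty}. I would establish its emptiness by a limiting argument: every point $c$ strictly interior to the Voronoi edge between $p$ and $q$ yields an ordinary empty Hilbert ball passing through both sites, and as $c$ travels along this edge toward the boundary endpoint $x$, Lemma~\ref{lem:ball-at-infty} ensures these balls converge to $B(p {\ST} q)$. Continuity of Hilbert distances then transfers emptiness to the limit. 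This limit argument is the main delicate step in the proof, since balls at infinity are not genuine Hilbert balls and one must be careful about which notion of convergence is invoked.

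The alternating tooth-gap structure holds essentially by construction: traversing the boundary of the union of standard triangles, each Delaunay boundary edge becomes a tooth, while between two teeth meeting at a common site $p$ the region bounded by $p$, two chords from $p$ to $\bd \Omega$, and an arc of $\bd \Omega$ is exactly a gap $\triangle p x y$. The degenerate-gap convention enforces strict alternation when multiple teeth share the same boundary vertex of $\Omega$. Finally, the $O(1)$ membership test is immediate for standard triangles and teeth, both of which are straight-sided triangles handled by the classical point-in-triangle test. A gap $\triangle p x y$ is bounded by the two segments $p x$ and $p y$ together with a portion of $\bd \Omega$; for a query point assumed to lie in $\Omega$, membership reduces to two side-of-line tests against these chords, again $O(1)$.
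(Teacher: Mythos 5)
Your proposal is substantially more detailed than the paper's own proof, which is essentially a list of citations and one-sentence observations; in particular, the paper's proof never addresses the empty-circumcircle bullet at all, whereas you give a genuine argument for it. Your treatment of the complexity, alternation, and $O(1)$-membership bullets matches the paper's in spirit, just with more words.

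Two places where you diverge from the paper are worth flagging. For connectivity, the paper simply cites Lemma~\ref{lem:planar-spanner}, which shows that the Delaunay \emph{graph} is connected; you instead pass to the Voronoi dual and argue that standard triangles correspond to Voronoi vertices joined by bounded Voronoi edges. That is a reasonable framing, but your inference ``the Hilbert Voronoi diagram is a connected subdivision, therefore the interior (bounded) portion of its $1$-skeleton is connected'' is not automatic: a connected subdivision can have a disconnected bounded skeleton once you delete the edges that run to $\bd\Omega$. You would need an additional argument (e.g., that any two Voronoi vertices are joined by a path of bounded Voronoi edges, which is really what has to be established) or else fall back to the graph-connectivity route the paper uses, accepting that neither version fully bridges from graph connectivity of $\DT(P)$ to connectivity of the \emph{region} covered by triangles. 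Second, your limiting argument for the emptiness of teeth via Lemma~\ref{lem:ball-at-infty} is a nice contribution that the paper omits; the one thing to be careful about, as you note yourself, is the mode of convergence — you should say explicitly that the balls $B(c,\,d_\Omega(c,p))$ for $c$ on the Voronoi edge converge in Hausdorff distance to $B(p{\ST}q)$, so that an open set (the ball's interior) eventually containing a fixed site would force that site into the interiors of the approximating balls, contradicting their emptiness. With that phrasing tightened, the limiting step is sound.
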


\begin{proof} ~ 
\begin{itemize}
\item The standard triangulation is clearly $O(n)$ by standard results on triangulations (based on Euler's formula). The number of teeth is clearly $O(n)$, since each is generated by a standard edge. 
\item The connectivity of the triangulation was proved in Lemma~\ref{lem:planar-spanner}.
\item Through the addition of degenerate gaps, the teeth and gaps alternate. Membership in a standard triangle or tooth can be computed in $O(1)$ time. 
\item Since each gap is bounded by two edges and the boundary of $\Omega$, membership for any $p \in \Omega$ can be determined in $O(1)$ time.
\end{itemize}
\end{proof}

\subsection{Local and Global Delaunay} \label{sec:local-global}

Given a set $P$ of point sites, we say that an augmented triangulation $\TT(P)$ is \emph{globally Delaunay} (or simply \emph{Delaunay}) if for each standard triangle and each tooth $\triangle p q r$, the circumscribing ball, denoted $B(p {\ST} q {\ST} r)$, exists and has no site within its interior. The incremental Euclidean Delaunay triangulation algorithm employs a local condition, which only checks this for neighboring triangles \cite{guibas1992randomized, deberg2010book}. Given a directed edge $p q$ of the triangulation that is incident to two triangles or to a triangle and tooth, let $a$ and $b$ be the vertices of the incident triangles lying to the left and right of the $p q$, respectively. We say that $\TT(P)$ is \emph{locally Delaunay} if $a \notin \interior(B(q {\ST} p {\ST} b))$ and $b \notin \interior(B(p {\ST} q {\ST} a))$. The following lemma shows that it suffices to certify the Delaunay properties locally. The proof follows the same structure as the Euclidean case, but additional care is needed due to the existence of teeth and gaps.

\begin{restatable}{lemma}{localGlobal} \label{lem:local-global}
Given a set $P$ of point sites, an augmented triangulation $\TT(P)$ is globally Delaunay if and only if it is locally Delaunay.
\end{restatable}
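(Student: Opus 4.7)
The plan follows the classical Euclidean argument, substituting for the Euclidean circle–pencil lemma the pencil structure of Hilbert circles through two common points that underlies the proof of Lemma~\ref{lem:forbidden}. The forward direction is trivial: global Delaunayhood asserts that each circumscribing ball is empty of sites, so in particular the neighboring vertices used in the local test lie outside, giving local Delaunayhood at every edge.

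For the reverse direction, I argue by infinite descent through the triangulation. Suppose $\TT(P)$ is locally Delaunay but some triangle or tooth $T_0 = \triangle p q r$ admits a site $s \in P \setminus \{p,q,r\}$ with $s \in \interior(B(p{\ST}q{\ST}r))$. Since $s$ is a vertex of the triangulation, $s$ is not in the interior of $T_0$ and therefore lies strictly outside $T_0$. Let $\sigma$ be the straight segment from $s$ to the centroid of $T_0$, perturbed to be in general position with respect to vertices, edges, and gaps. Walking along $\sigma$ from $T_0$ back toward $s$, we traverse a finite sequence of elements $T_0, T_1, \ldots, T_k$, each a triangle or tooth, where $T_{j+1}$ is adjacent to $T_j$ across a site-to-site edge $e_j = p_j q_j$, and $T_k$ contains a neighborhood of $s$, so that $s$ is a vertex of $T_k$.

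I claim by induction on $j$ that $s \in \interior(B(T_j))$. The base case is the hypothesis. For the inductive step, write $e_j = p q$ and let $r$ and $a$ denote the third vertices of $T_j$ and $T_{j+1}$, respectively. Since $\sigma$ crosses $e_j$ going from $T_{j+1}$ into $T_j$, the point $s$ lies on the $a$-side of $\overleftrightarrow{p q}$, opposite from $r$. The local Delaunay hypothesis at $e_j$ gives $a \notin \interior(B(p{\ST}q{\ST}r))$. Both $B(p{\ST}q{\ST}r)$ and $B(p{\ST}q{\ST}a)$ are Hilbert balls through $p$ and $q$, and by Lemma~\ref{lem:circum-unique} together with the general-position assumption (no four sites lie on a common Hilbert circle) they cannot meet at any additional boundary point, since otherwise we would obtain two distinct circumscribing balls for a common triple. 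Hence their boundary arcs are strictly nested on each side of $\overleftrightarrow{p q}$, and because $B(p{\ST}q{\ST}a)$ reaches $a$ while $B(p{\ST}q{\ST}r)$ does not, the arc of $B(p{\ST}q{\ST}r)$ on the $a$-side lies strictly inside $B(p{\ST}q{\ST}a)$. This is precisely the pencil monotonicity invoked in the proof of Lemma~\ref{lem:forbidden}, and it forces $s \in \interior(B(T_{j+1}))$. Applied to $j = k$, the conclusion $s \in \interior(B(T_k))$ contradicts the fact that $s$ is a vertex of $T_k$ and therefore lies on $\partial B(T_k)$.

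The principal obstacle I expect is uniform treatment of the three element types and their adjacencies. When $T_j$ or $T_{j+1}$ is a tooth, its circumscribing ``ball'' is a Hilbert ball at infinity in the sense of Section~\ref{sec:ball-at-infty}; the non-crossing consequence of Lemma~\ref{lem:circum-unique} extends to this case by the limit construction, so the nesting step applies without essential change. Tooth-gap edges carry no local Delaunay hypothesis, which precludes crossing $\sigma$ directly through a gap. I plan to handle this by rerouting $\sigma$ as a polygonal path inside the connected region of standard triangles and teeth, exploiting Lemma~\ref{lem:delaun-augment}, and choosing each crossing edge so that $s$ is strictly separated from the opposite vertex; the inductive step uses only this sidedness property and not straightness of the path, so the argument goes through with this modification.
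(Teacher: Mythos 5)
Your proof takes a genuinely different route from the paper's. The paper adapts the classic extremal argument: among all triangle/tooth--site conflicts, it picks the pair $(\triangle a b c, d)$ maximizing the Euclidean angle $\angle a d b$, and derives a contradiction by producing a strictly larger angle from the neighbor across the separating edge. You instead use a walk argument: follow a path $\sigma$ from the conflicting element toward the offending site $s$, maintaining by induction that $s$ stays in the interior of each successive circumball. Both frameworks are standard in the Euclidean case, and both hinge on the same Hilbert-specific ingredient---nesting of Hilbert balls through two shared boundary points, via Lemma~\ref{lem:circum-unique}---to propagate the containment across an edge where the local test holds.

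However, the gap issue you flag is not actually resolved by your fix, and it is where the real work lies. For a \emph{straight} segment $\sigma$ the sidedness invariant is free: once $\sigma$ crosses $\overleftrightarrow{p q}$ it cannot re-cross, so the endpoint $s$ must lie on the far side. Once you replace $\sigma$ by an arbitrary polygonal path confined to the triangles-and-teeth region, this is no longer automatic, and you supply no mechanism for re-establishing it. More fundamentally, while sitting in a tooth the only edge that separates $s$ from the opposite vertex may be a site-to-boundary edge; that edge borders a gap, carries no in-circle hypothesis, and your rerouting does not say what to do there. The paper's proof confronts this head on in its second case: when the separating edge $a b$ of a tooth has $b \in \bd\Omega$, it \emph{hops} across the intervening gap and uses the tooth $\triangle b' a e$ on the gap's far side as the ``neighbor,'' arguing that $B(b' {\ST} a {\ST} e)$ still contains the offending site. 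A complete walk proof needs an analogous hop rule, plus an argument that the nesting of circumballs persists across the hop even though the two balls (both balls at infinity) share only the single vertex $a$ on their boundaries, which is weaker than the two-shared-points situation that Lemma~\ref{lem:circum-unique} directly controls. You note the balls-at-infinity extension in passing, but neither that nor the hop is carried out, so as written the inductive step breaks precisely at the teeth.
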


\begin{proof}
Our proof follows the same structure as that given in \cite{deberg2010book} for the Euclidean case. Suppose to the contrary that $\TT(P)$ is locally but not globally Delaunay. That is, there exist sites $a, b, c, d \in P$ such that $\TT(P)$ has a triangle or tooth $\triangle a b c$ such that the circumscribing ball $B(a {\ST} b {\ST} c)$ contains $d$ in its interior (see Figure~\ref{fig:local-global}(a)). Let $a b$ be the edge such that $c$ and $d$ lie on opposite sides of $\overline{a b}$. Among all triangle-site pairs $(\triangle a b c, d)$, let this be the one that maximizes $\angle a d b$.

\begin{figure}[htbp]
    \centerline{\includegraphics[scale=0.40]{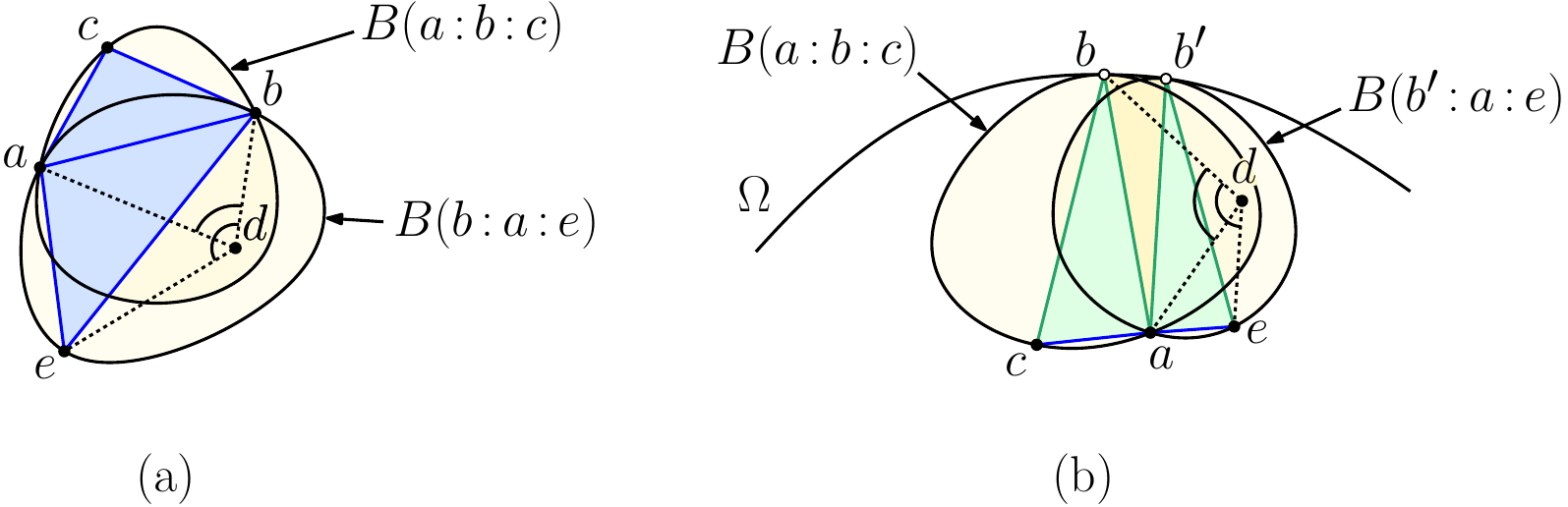}}
    \caption{Proof of Lemma~\ref{lem:local-global}.}
    \label{fig:local-global}
\end{figure}

To complete the proof, we consider two cases. First, if $\triangle a b c$ is a standard triangle or if it is a tooth and both $a$ and $b$ are sites, consider the adjacent triangle $\triangle b a e$ along edge $a b$. Since $\TT(P)$ is locally Delaunay, $e$ does not lie within the interior of $B(a {\ST} b {\ST} c)$. By Lemma~\ref{lem:circum-unique}, the circumscribing ball $B(b {\ST} a {\ST} e)$ completely contains the portion of $B(a {\ST} b {\ST} c)$ that lies on the same side of the chord $\overline{a b}$ as $e$. (If it failed to do so, there would be an additional point of intersection of the boundaries of these balls, other than $a$ and $b$, violating the lemma.) This implies that $d \in B(b {\ST} a {\ST} e)$. Now, let $b e$ denote the edge of $\triangle b a e$ such that $a$ and $d$ lie on the opposite sides of the chord $\overline{b e}$. It follows that $\angle e d b > \angle a d b$, contradicting the definition of the pair $(\triangle a b c, d)$.

In the second case, $\triangle a b c$ is a tooth, and either $a$ or $b$ is a boundary vertex, implying that there is a gap on the other side of $a b$. In this case, relabel the vertices so that $a$ is a site and $b$ is on the boundary. Let $\triangle b' a e$ denote the tooth adjacent to the opposite side of this gap (see Figure~\ref{fig:local-global}(b)). Now, apply the above, using $\triangle b' a e$ in place of $\triangle b a e$. The key property still holds, namely that circumscribing ball $B(b' {\ST} a {\ST} e)$ completely contains the portion of $B(a {\ST} b {\ST} c)$ that lies on the same side of the chord $\overline{a b}$ as $e$.
\end{proof}

\subsection{Incremental Construction} \label{sec:incr-construct}

The algorithm operates by first randomly permuting the sites. It begins by inserting two arbitrary sites $a$ and $b$ and generating the resulting augmented triangulation. In $O(\log^2 m)$ time, we can compute the endpoints of the $(a,b)$-bisector. We add the edge $a b$ and create two teeth by connecting $a$ and $b$ to the bisector endpoints (see Figure~\ref{fig:insert}(a)). We then insert the remaining points one by one, updating the triangulation incrementally after each insertion (see Algorithm~\ref{alg:build-DT}). Later, we will discuss how to determine which element each new site lies in, but for now, let us focus on how the triangulation is updated.

\begin{algorithm}[htbp]
\caption{Constructs the Hilbert Delaunay triangulation of a point set $P$} \label{alg:build-DT}
\begin{algorithmic}
\Procedure{Delaunay}{$P$} \Comment{Build the Delaunay triangulation of point set $P$}
	\State $\TT \gets \text{empty triangulation}$
	\State Randomly permute $P$
	\State $a, b \gets \text{any two points of $P$}$ \Comment{Initialize with sites $a$ and $b$}
	\State $x, y \gets \text{endpoints of the $(a, b)$-bisector}$ \Comment{See Figure~\ref{fig:insert}(a)}
	\State Add edges $a b$, $a x$, $a y$, $b x$, $b y$ to $\TT$
	\ForAll{$p \in P \setminus\{a, b\}$} \Comment{Add all remaining sites}
		\State \Call{Insert}{$p, \TT$}
	\EndFor
	\State \Return $\TT$
\EndProcedure
\end{algorithmic}
\end{algorithm}

\begin{figure}[htbp]
    \centerline{\includegraphics[scale=0.40]{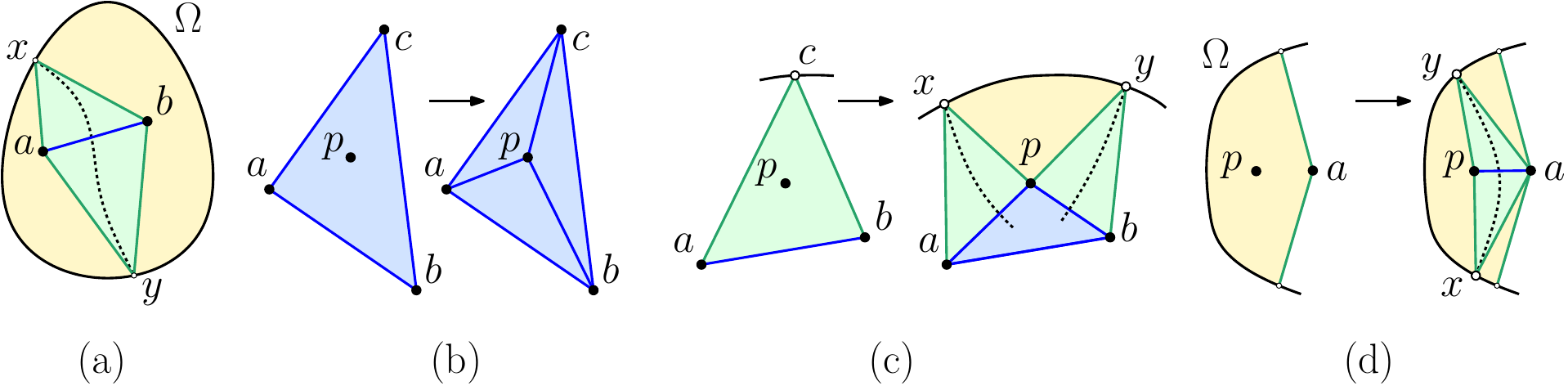}}
    \caption{(a) Initialization and insertion into a (b) standard triangle, (c) tooth, (d) gap.}
    \label{fig:insert}
\end{figure}

When we insert a point, there are three possible cases, depending on the type of element that contains the point, a standard triangle (three sites), a tooth (two sites), or a gap (one site). The case for standard triangles is the same as for Euclidean Delaunay triangulations~\cite{guibas1992randomized}, involving connecting the new site to the triangle's vertices (see Figure~\ref{fig:insert}(b)). Insertion into a tooth involves removing the boundary vertex, connecting the new site to the two existing site vertices, and creating two new teeth based on the bisectors to these sites (see Figure~\ref{fig:insert}(c)). Finally, insertion into a gap involves connecting the new site to the existing site vertex, and creating two new teeth based on the bisectors to this site (see Figure~\ref{fig:insert}(d)).

\begin{algorithm}[htbp]
\caption{Site insertion} \label{alg:insert}
\begin{algorithmic}
\Procedure{Insert}{$p, \TT$} \Comment{Insert a new site $p$ into triangulation $\TT$}
	\State $\triangle a b c \gets \text{the triangle of $\TT$ containing $p$}$
	\If{$\triangle a b c$ is a standard triangle} \Comment{See Figure~\ref{fig:insert}(b)}
    	\State Add edges $a p$, $b p$, $c p$ to $\TT$
            \State \Call{FlipEdge}{$a b, p, \TT$};
            ~\Call{FlipEdge}{$b c, p, \TT$};
            ~\Call{FlipEdge}{$c a, p, \TT$}
	\ElsIf{$\triangle a b c$ is a tooth} \Comment{See Figure~\ref{fig:insert}(c)}
		\State Let $a$ and $b$ be sites, and $c$ be on boundary
            \State $x, y \gets \text{endpoints of the $(a,p)$- and $(p,b)$-bisectors, respectively}$
            \State Remove $c$ and edges $c a$ and $c b$ from $\TT$
            \State Add edges $a p$, $b p$, $a x$, $p x$, $b y$, $p y$ to $\TT$
            \State \Call{FlipEdge}{$a b, p, \TT$};
            \State \Call{FixTooth}{$\triangle a p x, p, \TT$};
            ~\Call{FixTooth}{$\triangle p b y, p, \TT$}
	\Else \Comment{$\triangle a b c$ is a  gap; see Figure~\ref{fig:insert}(d)}
		\State Let $a$ be the site, and let $b$ and $c$ be on the boundary
		\State $x, y \gets \text{endpoints of $(a,p)$- and $(p,a)$-bisectors, respectively}$
            \State Add edges $a p$, $a x$, $p x$, $a y$, $p y$ to $\TT$
            \State \Call{FixTooth}{$\triangle a p x, p, \TT$};
            ~ \Call{FixTooth}{$\triangle p a y, p, \TT$}
    \EndIf 
\EndProcedure
\end{algorithmic}
\end{algorithm}


On return from \textsc{Insert}, $\TT$ is a topologically valid augmented triangulation, but it may fail to satisfy the Delaunay empty circumcircle conditions, and may even fail to be geometrically valid. The procedures \textsc{FlipEdge} and \textsc{FixTooth} repair any potential violations of the local Delaunay conditions. 

The procedure \textsc{FlipEdge} is applied to all newly generated standard triangles $\triangle p a b$. It is given the directed edge $a b$ of the triangle, such that $p$ lies to the left of this edge. It accesses the triangle $\triangle a b c$ lying to the edge's right. This may be a standard triangle or a tooth. In either case, it has an associated circumcircle, which we assume has already been computed.%
\footnote{In the Euclidean case, the in-circle test may be run from either $\triangle a b c$ or $\triangle p a b$, but this is not true for the Hilbert geometry. In light of Lemma~\ref{lem:forbidden}, we do not know whether the $\triangle p a b$ has a circumcircle, so we run the test from $\triangle a b c$.} 
We test whether $p$ encroaches on this circumcircle, and if so, we remove the edge $a b$. If $\triangle{a b c}$ is a standard triangle, then we complete the edge-flip by adding edge $p c$, and then continue by checking the edges $a c$ and $c b$ (see Figure~\ref{fig:flip-and-fix}(a)). Otherwise, we remove vertex $c$, and compute the endpoints $x$ and $y$ of the $(p,a)$- and $(b,p)$-bisectors, respectively. We create two new teeth, $\triangle p a x$ and $\triangle b p y$ (see Figure~\ref{fig:flip-and-fix}(b)). This creates a new (possibly degenerate) gap $\triangle p x y$. Later, we will show (see Lemma~\ref{lem:flip-edge-correct}) that no further updates are needed. The algorithm is presented in Algorithm~\ref{alg:flip-edge}.

\begin{figure}[htbp]
    \centerline{\includegraphics[scale=0.40]{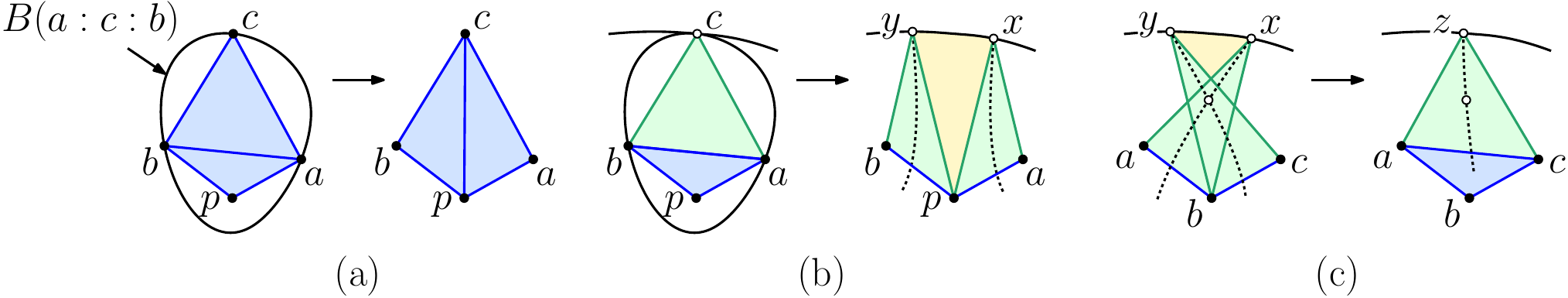}}
    \caption{(a) Standard-triangle edge flip, (b) tooth edge flip, and (c) fixing a tooth.}
    \label{fig:flip-and-fix}
\end{figure}

\begin{algorithm}[htbp]
\caption{In-circle test and edge flip.} \label{alg:flip-edge}
\begin{algorithmic}
\Procedure{FlipEdge}{$a b, p, \TT$} \Comment{In-circle test. New site $p$ lies to the left of edge $a b$.}
        \State $c \gets \text{vertex of triangle to right of $a b$ in $\TT$}$
        \If{$p \in B(a {\ST} c {\ST} b)$} \Comment{$p$ fails the in-circle test for $\triangle a c b$}
            \State Remove edge $a b$ from $\TT$
            \If{$\triangle a c b$ is a standard triangle} \Comment{See Figure~\ref{fig:flip-and-fix}(a)}
                \State Add edge $p c$ to $\TT$
                \State \Call{FlipEdge}{$a c, p, \TT$};
                ~ \Call{FlipEdge}{$c b, p, \TT$} \Comment{Create $\triangle p a c$ and $\triangle b p c$}
             \Else \Comment{$\triangle a c b$ is a tooth. See Figure~\ref{fig:flip-and-fix}(b)}
                \State $x, y \gets \text{endpoints of $(p,a)$- and $(b,p)$-bisectors, respectively}$
                \State Remove $c$ and edges $c b$ and $c a$ from $\TT$
                \State Add edges $a x$, $p x$, $b y$, $p y$ to $\TT$ \Comment{Create teeth $\triangle p a x$ and $\triangle b p y$}
          \EndIf
        \EndIf 
\EndProcedure
\end{algorithmic}
\end{algorithm}


The ``else'' clause creates two teeth $\triangle p a x$ and $\triangle b p y$. The following lemma shows that there is no need to apply \textsc{FixTooth} to them.

\begin{restatable}{lemma}{flipEdgeCorrect} \label{lem:flip-edge-correct}
On return from \textsc{FlipEdge} the teeth $\triangle p a x$ and $\triangle b p y$ generated in the ``else'' clause are both valid.
\end{restatable}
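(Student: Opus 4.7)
We treat the new tooth $\triangle pax$; the case of $\triangle bpy$ is symmetric. Let $B_{\text{new}} = B(p \ST a)$ denote the ball at infinity centered at $x$ that circumscribes $\triangle pax$, and let $B_{\text{old}} = B(a \ST c \ST b)$ denote the ball at infinity circumscribing the tooth $\triangle acb$ that the flip is removing. The goal is to show that $\interior B_{\text{new}}$ contains no site of $P$.

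Three facts are in hand just before the present \textsc{FlipEdge} call: (i) by induction on the construction, $B_{\text{old}}$ contained no site of $P \setminus \{p\}$ in its interior; (ii) \textsc{FlipEdge} entered the ``else'' clause only because $p \in \interior B_{\text{old}}$; and (iii) the triangle $\triangle pab$ sitting on the opposite side of edge $a b$ from $c$ was a face of the triangulation just before the flip, so its interior contained no site of $P$.

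The heart of the plan is to establish the geometric inclusion
\[
  \interior B_{\text{new}} \cap \interior \Omega
  ~\subseteq~ \interior B_{\text{old}} ~\cup~ \interior \triangle pab.
\]
Given this, any hypothetical site $s \in \interior B_{\text{new}}$ with $s \neq p, a$ would lie in $\interior B_{\text{old}}$ (contradicting (i)) or in $\interior \triangle pab$ (contradicting (iii)). To prove the inclusion, I would observe that both $\partial B_{\text{new}}$ and $\partial B_{\text{old}}$ pass through $a$; then perturb each ball at infinity into a nearby ordinary Hilbert ball $B(x_{\delta}, p)$ and $B(c_{\delta}, a)$, apply Lemma~\ref{lem:circum-unique} to bound the number of boundary intersections of the two perturbed balls by two, and take the limit $\delta \to 0$. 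Fact (ii), together with the parametric construction of balls at infinity in Lemma~\ref{lem:ball-at-infty} and convexity of both balls, then forces the arc of $\partial B_{\text{new}}$ on the $c$-side of $\overleftrightarrow{ab}$ strictly inside $B_{\text{old}}$, while on the opposite side the arc of $\partial B_{\text{new}}$ is trapped inside the chord triangle $\triangle pab$ by $\overline{pa}$, $\overline{pb}$, and $\overline{a b}$.

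The principal obstacle is the degeneracy in applying Lemma~\ref{lem:circum-unique} directly: the boundary of a ball at infinity coincides with an arc of $\bd \Omega$ near its center, so two such balls could in principle share an entire boundary arc rather than finitely many points. The perturbation step resolves this at the cost of careful bookkeeping: one must verify that the relevant transverse intersection survives in the $\delta \to 0$ limit, and handle the case where $x$ and $c$ lie in a common boundary arc shared by the two limiting balls so that the inclusion remains valid on $\bd \Omega$ as well.
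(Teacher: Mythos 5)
Your proposal addresses the wrong property. The word ``valid'' in this lemma refers to \emph{geometric} validity: the paper introduces the lemma by saying it ``shows that there is no need to apply \textsc{FixTooth}'' to the two new teeth, and \textsc{FixTooth} exists precisely to repair \emph{overlaps} between adjacent teeth. Accordingly, the paper's proof shows that the boundary vertex $x$ of the new tooth $\triangle p a x$ lies on the arc of $\bd\Omega$ running clockwise from $c$ to $x'$ (where $\triangle a d x'$ is the tooth clockwise-adjacent to the removed tooth $\triangle a c b$), and symmetrically for $y$. The argument is a short contradiction: if $x$ lay beyond $x'$, the circumcircles $B(p\ST a\ST x)$ and $B(a\ST d\ST x')$ would be forced to cross, producing a third common boundary point and violating Lemma~\ref{lem:circum-unique}. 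You instead try to prove that $\interior B(p\ST a\ST x)$ contains no site, which is the empty-circumcircle (local-Delaunay) property. That is a different statement, and establishing it would not tell you that the teeth respect the alternating teeth/gap structure around $\bd\Omega$ that the algorithm maintains.

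Beyond the mismatch of target, the central inclusion in your plan,
\[
  \interior B_{\mathrm{new}} \cap \interior \Omega ~\subseteq~ \interior B_{\mathrm{old}} \cup \interior \triangle p a b,
\]
is asserted rather than proved, and it is not clearly true. Both $\partial B_{\mathrm{new}}$ and $\partial B_{\mathrm{old}}$ pass through $a$, and by Lemma~\ref{lem:circum-unique} they can cross in at most one further point, but that alone does not pin the $c$-side arc of $\partial B_{\mathrm{new}}$ inside $B_{\mathrm{old}}$: near the center $x$, the ball $B_{\mathrm{new}}$ hugs $\bd\Omega$, and $x$ need not be close to $c$, so there can be regions of $B_{\mathrm{new}}$ outside both $B_{\mathrm{old}}$ and $\triangle p a b$. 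Your perturbation device is a sensible way to apply Lemma~\ref{lem:circum-unique} to balls at infinity, and that idea is in the same spirit as how the paper informally invokes the uniqueness lemma here; but the argument needs to be redirected to bound the position of $x$ (and $y$) relative to the neighboring teeth, not to bound the interior of the circumscribing ball.
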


\begin{proof}
Consider the two teeth lying immediately clockwise and counterclockwise from $c$ along the boundary of $\Omega$. Let $x'$ and $y'$ denote the respective boundary vertices of these two teeth. We will show that $x$ lies within the portion of the boundary of $\Omega$ running clockwise from $c$ to $x'$. It will follow from a symmetrical argument that $y$ lies in the counterclockwise portion of the boundary from $c$ to $y'$.

 By the preconditions of \textsc{FlipEdge}, we know that $p$ lies to the left of edge $a b$, and since we have executed the ``else'' clause, we know that $p$ encroaches on the circumscribing ball $B(a {\ST} c {\ST} b)$. Let $\triangle a d x'$ denote the tooth lying clockwise from $c$ prior to the insertion of $p$. If $x$ lies between the $c$ and $x'$, then the circumscribing balls for these triangles can be arranged as shown in Figure~\ref{fig:flip-edge-correct}(a). 

\begin{figure}[htbp]
    \centerline{\includegraphics[scale=0.40]{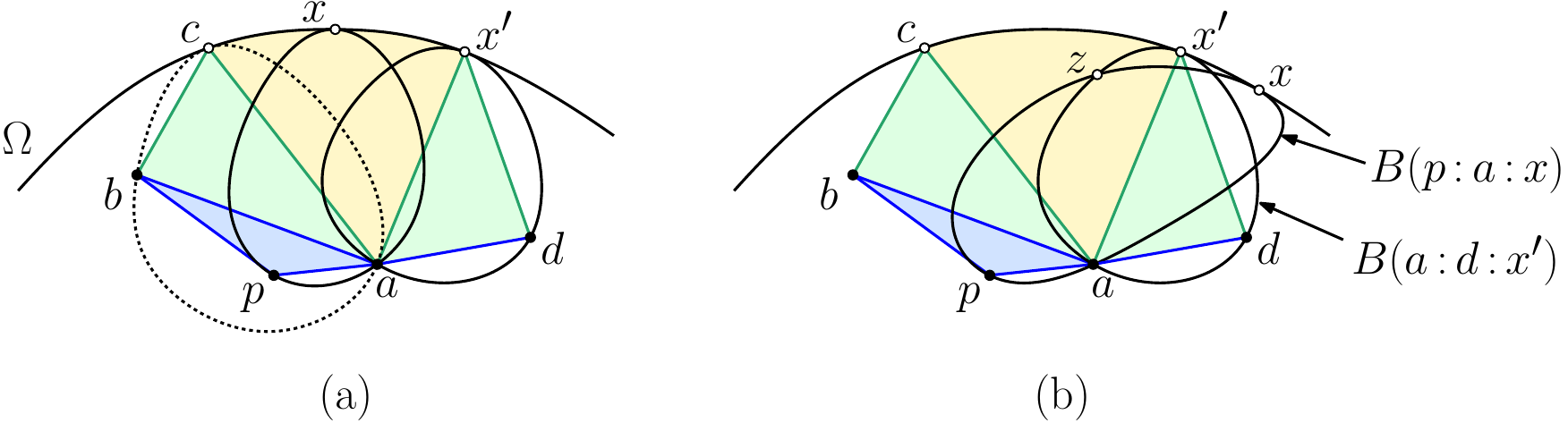}}
    \caption{Proof of Lemma~\ref{lem:flip-edge-correct}.}
    \label{fig:flip-edge-correct}
\end{figure}

Suppose, to the contrary, that $x$ lies on the other side of $x'$ (see Figure~\ref{fig:flip-edge-correct}(b)), observe that the circumcircle $B(p {\ST} a {\ST} x)$ must cross through the circumcircle $B(a {\ST} d {\ST} x')$. It follows that, in addition to $a$ and $d$, these two Hilbert circumcircles intersect in at least one other point $z$. However, this means that there are two different Hilbert circumcircles passing through the three points $a$, $d$, and $z$, which cannot happen according to Lemma~\ref{lem:circum-unique}.
\end{proof}

Finally, we present \textsc{FixTooth}. To understand the issue involved, consider Figure~\ref{fig:insert}(d). In the figure, the newly created teeth $\triangle p a x$ and $\triangle a p y$ both lie entirely inside the existing gap. But, this need not generally be the case, and the newly created boundary vertex may lie outside the current gap, resulting in two or more teeth that overlap each other (see, e.g., Figure~\ref{fig:flip-and-fix}(c)). The procedure \textsc{FixTooth} is given a tooth $\triangle a b x$, where $a$ and $b$ are sites, and $x$ is on $\Omega$'s boundary. Whenever it is invoked the new site $p$ is either $a$ or $b$. Recalling our assumption that teeth and gaps alternate around the boundary of $\Omega$, we check the teeth that are expected to be adjacent to the current tooth on the clockwise and counterclockwise sides. (Only the clockwise case is presented in the algorithm, but the other case is symmetrical, with $a$ and $b$ swapped.) 

Let $\triangle b c y$ denote the tooth immediately clockwise around the boundary. We test whether these triangles overlap (see Figure~\ref{fig:flip-and-fix}(c)). If so, we know that the $(a,b)$-bisector (which ends at $x$) and the $(b,c)$-bisector (which ends at $y$) must intersect. This intersection point is the center of a Hilbert ball circumscribing $\triangle a b c$. We replace the two teeth $\triangle a b x$ and $\triangle b c y$ with the standard triangle $\triangle a b c$ and the tooth $\triangle a c z$, where $z$ is the endpoint of the $(a,c)$-bisector. If $p = a$, then we invoke \textsc{FlipEdge} on the opposite edge $b c$ from $p$, and we invoke \textsc{FixTooth} on the newly created tooth. When the algorithm terminates, all the newly generated elements have been locally validated. The algorithm is presented in Algorithm~\ref{alg:fix-tooth}.

\begin{algorithm}[htbp]
\caption{Check for and fix overlapping teeth.} \label{alg:fix-tooth}
\begin{algorithmic}
\Procedure{FixTooth}{$\triangle a b x, p, \TT$} \Comment{Fix tooth $\triangle a b x$ where $p = a$ or $p = b$}
    \State Let $\triangle b c y$ be the tooth clockwise adjacent to $\triangle a b x$
    \If{$\triangle a b x$ overlaps $\triangle b c y$} \Comment{See Figure~\ref{fig:flip-and-fix}(c)}
        \State $z \gets \text{endpoint of the $(a,c)$-bisector}$
        \State Remove $x$ and $y$ and edges $a x$, $b x$, $b y$, and $c y$ from $\TT$
        \State Add edges $a c$, $a z$, and $c z$ to $\TT$ \Comment{Create triangle $\triangle a b c$ and tooth $\triangle a c z$}
        \If{$p = a$}
            \State \Call{FlipEdge}{$b c, p, \TT$} \Comment{Check edge flip with triangle opposite $b c$}
            \State \Call{FixTooth}{$\triangle a c z, p, \TT$} \Comment{Check for further overlaps}
        \EndIf
    \Else
        \State Repeat the above for the triangle counterclockwise from $\triangle a b x$ swapping $a \leftrightarrow b$
    \EndIf
\EndProcedure
\end{algorithmic}
\end{algorithm}

Based on our earlier remarks, it follows that our algorithm correctly inserts a site into the augmented triangulation.

\begin{restatable}{lemma}{correctness} \label{lem:correctness}
Given an augmented triangulation $\TT(P)$ for a set of sites $P$, the procedure \textsc{Insert} correctly inserts a new site $p$, resulting in the augmented Delaunay triangulation for of $P \cup \{p\}$.
\end{restatable}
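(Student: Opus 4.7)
The plan is to invoke Lemma~\ref{lem:local-global} and reduce the claim to showing that every edge of the updated augmented triangulation $\TT(P \cup \{p\})$ is locally Delaunay. The argument will proceed in three stages: topological validity of the output, correctness of every local in-circle test (either performed explicitly by \textsc{FlipEdge} and \textsc{FixTooth}, or implied by one that was performed), and termination of the recursive cascade.

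First I would establish topological validity by case analysis on the three branches of \textsc{Insert}. The standard-triangle case mirrors the Euclidean situation. For a tooth, the boundary vertex $c$ is removed and replaced by two new teeth generated from the endpoints of the $(a,p)$- and $(p,b)$-bisectors; here I need to verify that these endpoints lie in the expected clockwise and counterclockwise positions along $\bd\Omega$, which follows from the monotonicity of bisector endpoints exploited in the proof of Lemma~\ref{lem:bisector-endpoint} together with Lemma~\ref{lem:circum-unique}. The gap case is analogous. Each subsequent call to \textsc{FlipEdge} either flips an interior edge (preserving the triangulation) or replaces a tooth by two new teeth whose geometric validity is guaranteed by Lemma~\ref{lem:flip-edge-correct}, and each call to \textsc{FixTooth} collapses a pair of overlapping teeth into a standard triangle plus one tooth; in every case the alternation between teeth and gaps described in Lemma~\ref{lem:delaun-augment} is preserved.

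For the local Delaunay property I would rely on the familiar invariant: before insertion every edge of $\TT(P)$ is locally Delaunay, and adding $p$ can only create a violation at an edge $e$ whose circumscribing ball on the side opposite $p$ contains $p$ in its interior. Every such candidate edge is either an edge of the element originally containing $p$ (checked by the initial calls issued by \textsc{Insert}), or an edge exposed by a previous flip or tooth-merge (checked by the corresponding recursive call). When \textsc{FixTooth} detects an overlap, Lemma~\ref{lem:pseudoline} guarantees that the two incident bisectors intersect and hence that $\triangle abc$ admits a Hilbert circumcircle; the paired \textsc{FlipEdge} call on $bc$ then certifies that no other site violates this new triangle's circumcircle. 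Edges bordering an unchanged triangle or tooth are unaffected by the insertion of $p$ and inherit their local Delaunay status from the input.

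The main obstacle will be the termination and non-regression argument: I must rule out an infinite cascade in which a previously certified edge is subsequently violated or revisited. I plan to adapt the standard Euclidean potential argument based on an angular cone around the new site~\cite{guibas1992randomized, deberg2010book} by observing that each flip or tooth-merge strictly shrinks the union of circumcircles through $p$ that still contain another site, while Lemma~\ref{lem:circum-unique} precludes the pathological reappearance of a flipped edge (two distinct circumcircles through the same triple being forbidden). Combining topological validity, the claim that the recursion covers every candidate edge, and termination yields that $\TT$ is locally Delaunay upon return from \textsc{Insert}, and a final application of Lemma~\ref{lem:local-global} completes the proof.
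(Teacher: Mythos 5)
The paper provides no explicit proof of this lemma; the sentence preceding it simply asserts that correctness ``follows from our earlier remarks,'' implicitly referring to Lemma~\ref{lem:local-global}, Lemma~\ref{lem:flip-edge-correct}, Lemma~\ref{lem:circum-unique}, Lemma~\ref{lem:pseudoline}, and the algorithmic descriptions. Your proposal is therefore not an alternative proof so much as a fleshing-out of exactly the argument the paper intends, and it assembles the right lemmas in the right places: reduce global to local Delaunay via Lemma~\ref{lem:local-global}, verify topological validity case by case, use Lemma~\ref{lem:flip-edge-correct} for the teeth created by \textsc{FlipEdge}, and use Lemma~\ref{lem:pseudoline} to justify the tooth-merge in \textsc{FixTooth}.

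The one place where I would tighten your argument is termination. Appealing to the Euclidean angular-cone potential of Guibas, Knuth, and Sharir is riskier here than you suggest: that argument implicitly relies on every triple admitting a circumcircle, which fails in the Hilbert setting (Lemma~\ref{lem:forbidden}), and the cascade in your algorithm passes through standard triangles, teeth, and gaps, not just triangles. The cleaner and more direct route — and the one the paper itself gestures at in the running-time paragraph that follows this lemma — is to observe that every structural change made by \textsc{FlipEdge} or \textsc{FixTooth} adds an edge incident to the new site $p$ and never removes one, and that by Lemma~\ref{lem:circum-unique} a flipped edge cannot be reintroduced during the same insertion. Since $p$'s degree in the final planar graph is finite, the cascade halts after finitely many steps. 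You gesture at this via ``precludes the pathological reappearance of a flipped edge,'' but I would make that the primary termination argument rather than a side remark supporting a shrinking-circumcircle potential, which you would otherwise have to define and prove monotone on its own. With that adjustment, the proposal matches the intended argument.
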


The final issue is the algorithm's expected running time. Our analysis follows directly from the analysis of the randomized incremental algorithm for the Euclidean Delaunay triangulation. We can determine which triangle contains each newly inserted point in amortized $O(\log n)$ expected time either by building a history-based point location data structure (as with Guibas, Knuth, and Sharir~\cite{guibas1992randomized}) or by bucketing sites (as with de~Berg {\etal}~\cite{deberg2010book}). The analysis in the Euclidean case is based on a small number of key facts, which apply in our context as well. First, sites are inserted in random order. Second, the conflict set for any triangle or tooth consists of the points lying in the triangle's circumcircle. Both of these clearly hold in the Hilbert setting. Third, the number of structural updates induced by the insertion of site $p$ is proportional to the degree of $p$ following the insertion. This holds for our algorithm, because each modification to the triangulation induced by $p$'s insertion results in a new edge being added to $p$ (and these edges are not deleted until future insertions). Fourth, the structure is invariant to the insertion order. Finally, the triangulation graph is planar, and hence it has constant average degree. The principal difference is that the in-circle test involves computing a Hilbert circumcircle, which by Lemma~\ref{lem:compute-circum} can be performed in $O(\log^3 m)$ time. These additional factors of $O(\log n)$ and $O(\log^3 m)$ are performed a constant number of times in expectation, for each of the $n$ insertions. This implies our main result.

\begin{theorem} \label{lem:delaun-time}
Given a set of $n$ points in the Hilbert geometry defined by a convex $m$-gon $\Omega$, it is possible to construct the augmented Delaunay triangulation in randomized expected time $O(n(\log^3 m + \log n))$.
\end{theorem}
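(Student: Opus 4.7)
The plan is to follow the classical backward-analysis template for randomized incremental Euclidean Delaunay triangulation (as in Guibas, Knuth, and Sharir, or de Berg \etal), and to verify that each ingredient still applies in the Hilbert setting, replacing the constant-time Euclidean primitives with their Hilbert counterparts whose costs are bounded by earlier lemmas. The four facts driving the Euclidean analysis are: (i) the insertion order is a uniform random permutation of $P$, (ii) the final structure is invariant under reordering, (iii) for each standard triangle or tooth, the set of sites that would have caused an update is exactly the sites inside its circumcircle, and (iv) the number of structural updates produced by inserting $p$ is proportional to $\deg(p)$ in the resulting triangulation.

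First I would check that all four facts carry over. Item (i) is immediate from the \texttt{Randomly permute $P$} step of \textsc{Delaunay}. Item (ii) follows from Lemma~\ref{lem:local-global}: the globally Delaunay augmented triangulation is characterized by the empty-circumcircle condition for every standard triangle and tooth, which depends only on $P$ and $\Omega$, not on the order. Item (iii) follows by inspection of \textsc{FlipEdge} and \textsc{FixTooth}: an element is destroyed precisely when the newly inserted $p$ lies in its circumcircle. For item (iv), I would use the standard amortized charging argument: each time \textsc{Insert}, \textsc{FlipEdge}, or \textsc{FixTooth} modifies $\TT$, a new edge incident to $p$ is created (and such edges persist until a later insertion), so the number of modifications triggered by $p$ is $O(\deg_p)$ where $\deg_p$ is $p$'s final degree in the augmented triangulation produced after step $p$. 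Since the augmented triangulation is planar (Lemma~\ref{lem:planar-spanner} and Lemma~\ref{lem:delaun-augment}), backward analysis gives $E[\deg_p] = O(1)$ when averaging over which site is inserted last, so the expected total number of structural updates across all $n$ insertions is $O(n)$.

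Next I would account for the per-update cost. Each elementary primitive invoked by the insertion routines consists of a constant number of: in-circle tests, which by Lemma~\ref{lem:compute-circum} cost $O(\log^3 m)$; bisector-endpoint computations on $\bd\Omega$, which by Lemma~\ref{lem:bisector-endpoint} cost $O(\log^2 m)$; and $O(1)$ purely combinatorial operations on $\TT$ (Lemma~\ref{lem:delaun-augment}). Thus every structural update is charged $O(\log^3 m)$, and multiplying by the $O(1)$ expected updates per insertion gives $O(\log^3 m)$ expected work per site beyond point location. Point location for the triangle, tooth, or gap containing a new site takes $O(\log n)$ amortized expected time, either via a history DAG in the style of Guibas--Knuth--Sharir or via the bucketing of de~Berg \etal; this adds $O(n \log n)$ across all insertions. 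Summing gives $O(n(\log^3 m + \log n))$ in expectation.

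The main obstacle I anticipate is not the arithmetic but justifying item (iv) and the charging scheme in the presence of teeth and gaps, because \textsc{FixTooth} can merge a pair of teeth into a standard triangle plus a new tooth rather than a single flip, and Lemma~\ref{lem:forbidden} means not every triple admits a circumcircle (which is why the in-circle test in \textsc{FlipEdge} is run from $\triangle a c b$ rather than $\triangle p a b$, as noted in the footnote). I would handle this by observing that every such merge still (a) destroys structures whose circumcircle contains $p$ and (b) attaches at least one new edge to $p$, so the $O(\deg_p)$ charging remains valid; and by noting that triples with no circumcircle are never tested, since \textsc{FlipEdge} always evaluates the test from the neighbor on the opposite side of the directed edge, whose circumcircle is known to exist. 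With these checks in place, the standard backward-analysis calculation transfers verbatim and yields the claimed bound.
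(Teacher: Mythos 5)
Your proposal is correct and follows essentially the same approach as the paper: both verify that the standard backward-analysis facts for randomized incremental Delaunay construction (random insertion order, order-invariance of the result, conflict sets given by circumcircle membership, structural updates charged to the new site's degree, planarity yielding $O(1)$ expected degree) carry over to the Hilbert setting, and both account for the $O(\log^3 m)$ cost of the Hilbert in-circle test (Lemma~\ref{lem:compute-circum}) plus $O(\log n)$ amortized point location to obtain the stated bound. Your extra care in addressing how \textsc{FixTooth} and the teeth/gaps affect the degree-charging argument is a more explicit treatment of a point the paper handles only implicitly, but it is the same argument.
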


\section{The Hilbert Hull} \label{sec:hilbert-hull}

As observed earlier, the triangles of the Delaunay triangulation of $P$ do not necessarily cover the convex hull of $P$. The region covered by these triangles is called the \emph{Hilbert hull} (the blue region of Figure~\ref{fig:delaun-augment}). In this section, we present a simple algorithm for computing this hull for a set of $n$ points in the Hilbert distance defined by a convex $m$-gon $\Omega$. Our approach is roughly based on the Jarvis march algorithm for computing convex hulls.

Henceforth, when we talk about a site being \emph{on the Hilbert hull}, we mean that it is a vertex of the hull's boundary. These are the sites whose Voronoi cells extend to the boundary of $\Omega$. Our algorithm will output a couterclockwise enumeration of the sites that appear on the hull. By Lemma~\ref{lem:planar-spanner}, the Hilbert hull is simply connected, but its boundary may collapse around a vertex or an edge. If so, a counterclockwise traversal of the boundary may encounter a given site multiple times. (For example, Figure~\ref{fig:hilbert-hull-tree} shows an example of four sites whose Hilbert hull degenerates to a tree. The final hull output would be $\ang{a, b, c, b, d, b, a}$.)

\begin{figure}[htbp]
    \centerline{\includegraphics[scale=0.40]{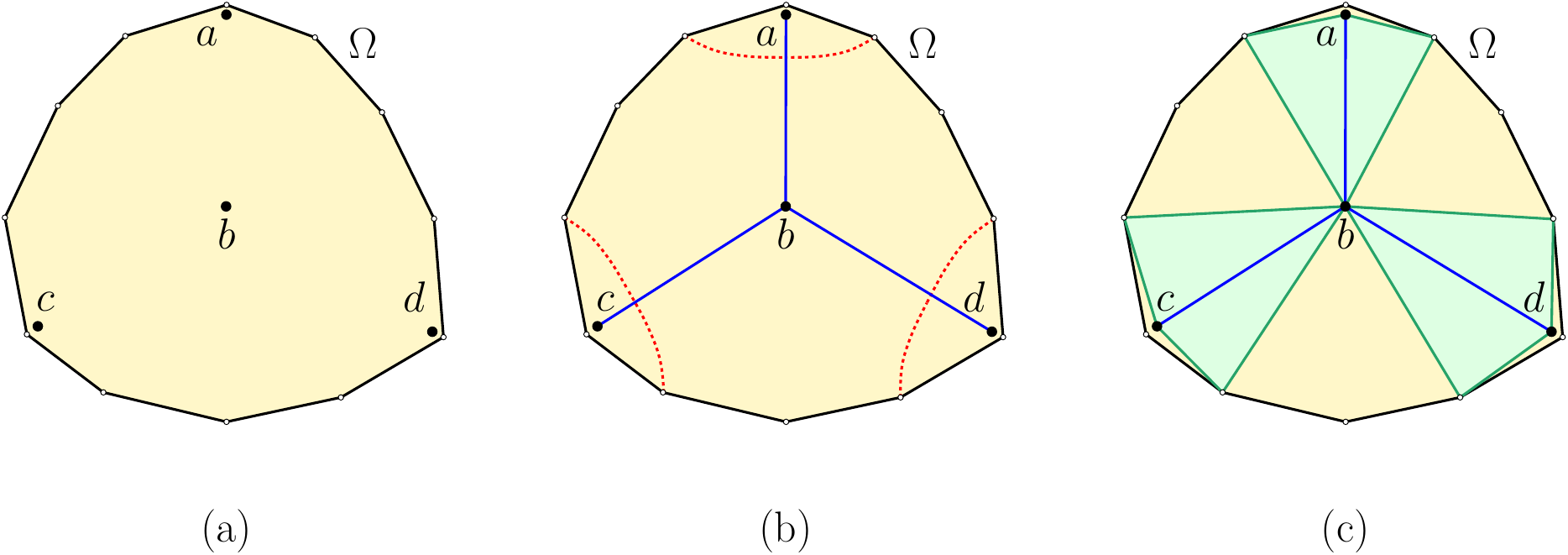}}
    \caption{A set of four sites whose Hilbert hull degenerates to a tree.}
    \label{fig:hilbert-hull-tree}
\end{figure}

Let $p$ be a point on the Hilbert hull of $P$. Since its Voronoi cell extends to the boundary, there exists a point $x \in \bd \Omega$ such that the Hilbert ball at infinity, $B(x,p)$, centered at $x$ and passing through $p$ contains no other points of $P$ (see Figure~\ref{fig:hilbert-hull}(a)). The point $x$ serves as a witness to the fact that $p$ is on the Hilbert hull. 

\begin{figure}[htbp]
    \centerline{\includegraphics[scale=0.40]{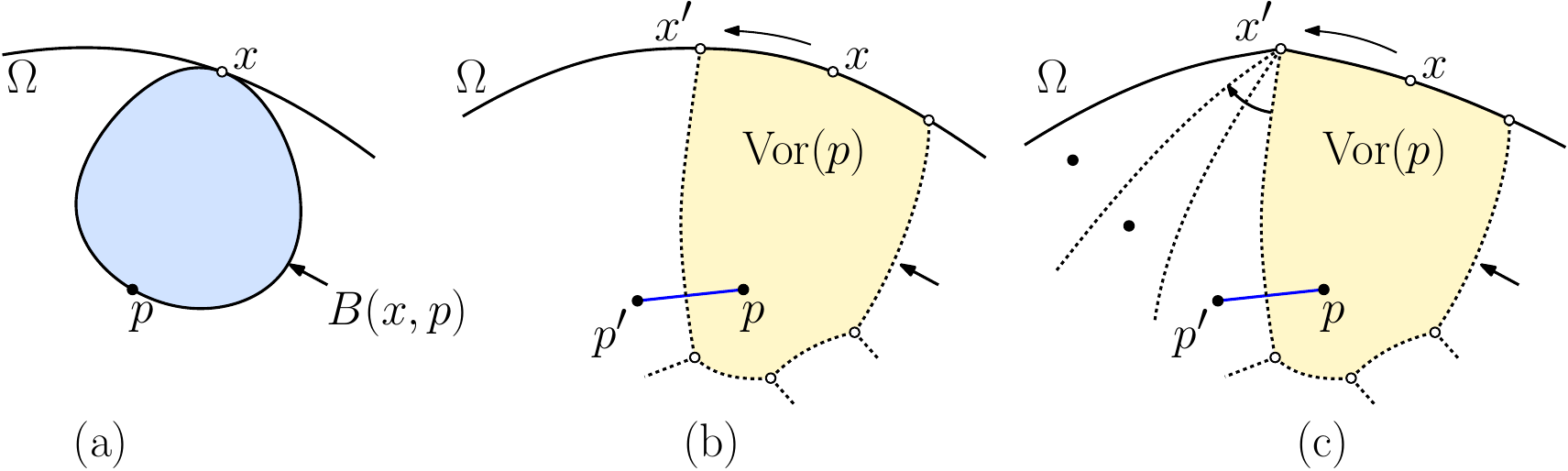}}
    \caption{Computing the Hilbert hull.}
    \label{fig:hilbert-hull}
\end{figure}

The next point of $P$ counterclockwise on the Hilbert hull is the site $p'$ associated with the neighboring Voronoi cell in counterclockwise order around the boundary of $\Omega$ (see Figure~\ref{fig:hilbert-hull}(b)). Let $x'$ denote the endpoint of the $(p',p)$-bisector. In general, multiple Voronoi cells may meet at the same point $x'$, and if so, we can use the angle formed between $x$, $x'$ and $p'$ to break ties (see Figure~\ref{fig:hilbert-hull}(c)).

Based on this observation, our adaptation of Jarvis march involves ordering the sites of $p' \in P \setminus \{p\}$ based on the a lexicographical ordering the bisector endpoints. The points are ordered first by the endpoint of the $(p',p)$-bisector, in counterclockwise order starting at $x$, and ties are broken by considering the clockwise angle $\angle x x' p'$ (ranging over the interval $[0, 2\pi)$). Call this the \emph{ordering induced by $p$ and $x$}.

\begin{lemma} \label{lem:hilbert-hull}
Given a set $P$ of $n \geq 2$ sites in the Hilbert geometry defined by a convex body $\Omega$, a site $p \in P$ that lies on the boundary of $P$'s Hilbert hull, and a point $x \in \bd \Omega$ such that $B(x,p)$ is empty, the next site counterclockwise on the boundary of the Hilbert hull is the site in $P \setminus \{p\}$ that is smallest in the order induced by $p$ and $x$.
\end{lemma}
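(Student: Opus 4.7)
The plan is to reduce the lemma to a first-event argument for a continuous sweep of a point $y$ counterclockwise along $\bd \Omega$ starting from $x$. For each such $y$, consider the Hilbert ball at infinity $B(y, p)$ centered at $y$ and passing through $p$, with the tangent direction at $x$ chosen so that it agrees with the given empty witness $B(x, p)$. As $y$ varies continuously, this ball deforms continuously, so if one follows the sweep starting from an empty ball at $y = x$, the first site of $P \setminus \{p\}$ to touch $\bd B(y, p)$ identifies the Voronoi cell adjacent to that of $p$ in the counterclockwise direction along $\bd \Omega$, and hence the next vertex of the Hilbert hull.

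Next I would tie this sweep to the induced ordering. By Lemma~\ref{lem:ball-at-infty} together with the bisector characterization of Lemma~\ref{lem:bisector-prop}(ii), a site $p' \in P \setminus \{p\}$ lies on $\bd B(y, p)$ exactly when $y$ coincides with the endpoint of the $(p', p)$-bisector. Consequently, the site that wins the first event is the one whose bisector endpoint $x'$ is encountered first by the counterclockwise sweep from $x$, which is precisely the primary key of the induced order. I would then use the emptiness hypothesis on $B(x, p)$ together with Lemma~\ref{lem:empty-ball-at-infty} to argue that throughout the sweep up to $x'$ no interior conflict arises, so that $B(x', p)$ is still an empty ball at infinity whose boundary contains $p'$, supplying the witness required for $p'$ to be the next site on the hull.

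The main obstacle is the tie-breaking step. Multiple sites can share the same bisector endpoint $x'$, most notably when $x'$ is a vertex of $\Omega$, and then all of them touch $\bd B(x', p)$ simultaneously in the limit. To resolve which of them actually owns the Voronoi cell immediately counterclockwise of $p$'s, I would perform a local analysis of how $B(y, p)$ deforms just before $y$ reaches $x'$, comparing the asymptotic directions of the several bisectors terminating at $x'$. Writing the bisector conditions locally at $x'$ via Lemma~\ref{lem:bisector-prop} should show that the ``latest'' bisector to be met, hence the correct successor, is exactly the one minimizing the clockwise angle $\angle x x' p'$, matching the stated tie-breaking rule. Finally, once $p'$ is identified, applying Lemma~\ref{lem:empty-ball-at-infty} at $x'$ confirms that $B(x', p')$ is empty, which provides the witness needed to continue the march and to justify the lemma for the next step.
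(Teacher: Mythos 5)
Your sweep idea captures the right geometric picture, but it takes a genuinely different route from the paper's proof. The paper argues much more directly: it observes that $\Vor(p) = \bigcap_{p'' \neq p} V(p,p'')$ (where $V(p,p'')$ is the set of points Hilbert-closer to $p$ than to $p''$), notes that the witness $x$ lies in $\overline{\Vor(p)}$, and then concludes by contradiction that no $p''$ other than the true counterclockwise neighbor $p'$ can precede $p'$ in the induced order, since otherwise the $(p'',p)$-bisector would cut into the interior of $\Vor(p)$. That argument is static and avoids continuity considerations entirely; your continuous deformation of $B(y,p)$ is a dynamic restatement of the same fact, and it buys geometric intuition at the price of needing to control the sweep carefully.

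There are two concrete gaps in the sweep version as you have sketched it. First, your key bridging claim --- that $p'$ lies on $\bd B(y,p)$ \emph{exactly} when $y$ is the endpoint of the $(p',p)$-bisector --- is not literally true: $p'$ is on the boundary of a ball at infinity centered at \emph{either} endpoint of the $(p,p')$-bisector, and the paper's ordering uses specifically the $(p',p)$-endpoint (the one to the left of $\overrightarrow{p' p}$). You need an explicit argument that, starting the sweep from $x \in \overline{\Vor(p)} \cap \bd\Omega$ and moving counterclockwise, the first endpoint encountered for the winning site is the $(p',p)$-endpoint and not the $(p,p')$-endpoint; this follows from the fact that the sweep exits $\Vor(p)$ there, but it is not automatic from Lemma~\ref{lem:bisector-prop}(ii) alone. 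Second, the tie-breaking step is only a plan (``should show that\ldots''), and it also interacts with the fact that at a vertex $x'$ of $\Omega$ the ball at infinity depends on the chosen tangent direction $u$, so the sweep actually has an extra one-parameter family of events at each vertex that you have not pinned down. Finally, the closing appeal to Lemma~\ref{lem:empty-ball-at-infty} is a misapplication: that lemma is an algorithmic statement about \emph{computing} an empty ball at infinity, not a tool that certifies emptiness; what you actually need there is the nesting/uniqueness structure from Lemma~\ref{lem:circum-unique} or the direct $\Vor(p) = \bigcap V(p,p'')$ decomposition the paper uses.
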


\begin{proof} 
As observed earlier, the fact that the ball $B(x,p)$ is empty implies that $x$ lies in the (closure) of the Voronoi cell of $p$. Let $x'$ (possibly equal to $x$) be the extreme counterclockwise point of $\Vor(p) \cap \bd \Omega$, and let $p'$ denote the site such that $\Vor(p')$ and $\Vor(p)$ share a common bisector whose endpoint is $x$. The site $p'$ is next on the Hilbert hull, and thus, it suffices to show that $p'$ is the smallest site in the order induced by $p$ and $x$.

For each site $p'' \in P \setminus \{p\}$, let $V(p,p'')$ denote the set of points of $\interior \Omega$ that are closer to $p$ than to $p''$ in terms of Hilbert distance. This is just the set of points that lie on $p$'s side of the $(p'',p)$-bisector. By definition, $\Vor(p) = \bigcap_{p'' \neq p} V(p,p'')$. Clearly then, no site $p'' \neq p'$ can precede $p'$ in the order induced by $p$ and $x$, for otherwise the $(p'',p)$-bisector would intersect the interior of $\Vor(p)$, contradicting the definition of $\Vor(p)$.
\end{proof}

The above lemma can be viewed as a natural way to generalize the ordering used by the standard Jarvis march to the Hilbert context. In the standard setting, consider any point $p$ on the convex hull. The Jarvis ordering is based on the angular order of the directed segments $\overleftrightarrow{p p'}$, for $p' \in P \setminus \{p\}$. If instead, we were to consider the left-directed perpendicular bisectors of each of the pairs $(p',p)$, the ordering would be exactly the same. These left-directed perpendicular bisectors are each directed to a point at infinity, and hence both the Jarvis condition and ours correspond to a sorting of bisector endpoints at infinity.

We can now present our algorithm for computing the Hilbert hull. To start the process off, we identify a pair $(p_0,x_0)$, where $p_0 \in P$, $x_0 \in \bd \Omega$, and $p_0$ lies on the boundary of an empty ball centered at $x_0$. By Lemma~\ref{lem:empty-ball-at-infty}, such a pair can be computed in $O(n \log m)$ time. Assuming that the algorithm has generated an initial sequence points on the hull, arriving at a pair $(p_i,x_i)$ satisfying the above invariant, we compute the next pair as follows. By applying Lemma~\ref{lem:bisector-endpoint}, we compute the bisector endpoints for all the sites of $P \setminus \{p_i\}$. Using the ordering induced by $p_i$ and $x_i$, we take the next point $p_{i+1}$ to be the smallest in this ordering, and let $x_{i+1}$ be the associated bisector endpoint. We repeat this until we get back to (or go beyond) the starting pair $(p_0,x_0)$. The pseudocode is presented in Algorithm~\ref{alg:jarvis}. 

\begin{algorithm}[htbp]
\caption{Hilbert Jarvis march} \label{alg:jarvis}
\begin{algorithmic}
\Procedure{HilbertJarvisMarch}{$P$} \Comment{Compute the Hilbert hull of $P$}
    \State Compute $p_0 \in P$ and $x_0 \in \bd \Omega$ such that $B(p_0,x_0)$ is empty (by Lemma~\ref{lem:empty-ball-at-infty})
    \State $i \gets 0$
    \Repeat
        \State $p_{\text{best}} = \text{null}$
	\For{$p_{\text{test}} \in P \setminus \{p_i\}$} \Comment{Find the next site on the hull}
            \State $x_{\text{test}} \gets \text{endpoint of $(p_{\text{test}},p_i)$ bisector (by Lemma~\ref{lem:bisector-endpoint})}$
            \If{($p_{\text{best}} = \text{null}$) or ($(p_{\text{test}},x_{\text{test}}) < (p_{\text{best}},x_{\text{best}})$ in the $(p_i,x_i)$ ordering)}
                \State $(p_{\text{best}},x_{\text{best}}) \gets (p_{\text{test}},x_{\text{test}})$
            \EndIf
        \EndFor
        \State $(p_{i+1},x_{i+1}) \gets (p_{\text{best}},x_{\text{best}})$
        \State $i \gets i+1$
    \Until{$(p_i,x_i)$ goes beyond $(p_0,x_0)$} \Comment{We've returned to the starting point}
    \State \Return $\ang{p_0, p_1, \ldots p_i}$ \Comment{Return the final hull sequence}
\EndProcedure
\end{algorithmic}
\end{algorithm}

The correctness of this algorithm follows directly from Lemma~\ref{lem:hilbert-hull}. With each iteration, the algorithm invokes Lemma~\ref{lem:bisector-endpoint} on each of the $n-1$ sites other than $p_i$, for a running time of $O(n \log^2 m)$. Letting $h$ denote the number of edges on the Hilbert hull, the algorithm terminates after $h$ iterations. This yields the following result.

\begin{lemma} \label{lem:hull-time}
Given a set of $n$ points in the Hilbert geometry defined by a convex $m$-gon $\Omega$, it is possible to construct the Hilbert hull in time $O(n h \log^2 m)$, where $h$ is the number of edges on the hull's boundary.
\end{lemma}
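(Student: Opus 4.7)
The plan is to prove the time bound by direct analysis of Algorithm~\ref{alg:jarvis}, with correctness already handed to us by Lemma~\ref{lem:hilbert-hull}. I would first argue the inductive invariant that at the top of each iteration, the pair $(p_i, x_i)$ satisfies $p_i$ on the Hilbert hull and $B(x_i, p_i)$ empty. The base case $i=0$ is exactly the guarantee of Lemma~\ref{lem:empty-ball-at-infty}. For the inductive step, Lemma~\ref{lem:hilbert-hull} identifies the next site $p_{i+1}$ as the minimizer in the order induced by $(p_i, x_i)$, and the associated endpoint $x_{i+1}$ of the $(p_{i+1}, p_i)$-bisector lies in the closure of $\Vor(p_{i+1}) \cap \bd\Omega$, so $B(x_{i+1}, p_{i+1})$ is empty, preserving the invariant.

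Next, I would bound the cost per iteration. Each iteration makes $n-1$ calls to Lemma~\ref{lem:bisector-endpoint}, each at cost $O(\log^2 m)$, and performs $O(1)$-time comparisons in the induced lexicographic order (bisector endpoint position along $\bd\Omega$ first, clockwise angle $\angle x x' p'$ second). Hence each iteration costs $O(n \log^2 m)$. By the invariant together with Lemma~\ref{lem:hilbert-hull}, each iteration advances along exactly one counterclockwise edge of the hull boundary, so the loop terminates after $h$ iterations. Combined with the $O(n \log m)$ initialization from Lemma~\ref{lem:empty-ball-at-infty}, which is absorbed into the main term, the overall running time is $O(n h \log^2 m)$.

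The main obstacle I anticipate is handling the degenerate cases where the Hilbert hull collapses to a lower-dimensional structure (e.g., a tree, as in Figure~\ref{fig:hilbert-hull-tree}), or where several Voronoi cells share a single endpoint on $\bd\Omega$. In these situations one must verify that the lexicographic order yields a well-defined unique minimum at each step, and that the termination test ``$(p_i, x_i)$ goes beyond $(p_0, x_0)$'' still recognizes the closure of the traversal correctly when the same site is visited more than once along a tree-like boundary. Once this bookkeeping is confirmed, the iteration bound of $h$ and the per-iteration cost multiply directly to yield the claimed $O(n h \log^2 m)$ running time.
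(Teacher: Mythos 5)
Your proof is correct and follows essentially the same route as the paper: correctness from Lemma~\ref{lem:hilbert-hull}, a per-iteration cost of $O(n\log^2 m)$ from invoking Lemma~\ref{lem:bisector-endpoint} on each of the $n-1$ other sites, and $h$ iterations of the outer loop. Your added care in spelling out the loop invariant and flagging the degenerate (tree-like) hull cases is more explicit than the paper's two-sentence justification, but it is elaboration rather than a different argument.
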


\bibliography{shortcuts,hilbert}

\end{document}